\theoremstyle{plain}
\newtheorem{theorem}{Theorem}
\newtheorem{lemma}[theorem]{Lemma}
\newtheorem*{lemma*}{Lemma}
\newtheorem{definition}[theorem]{Definition}
\newtheorem*{summary*}{Summary}
\theoremstyle{definition}
\newtheorem{example}[theorem]{Example}
\theoremstyle{remark}
\newcommand{\trcc}[1]   { {\rm tr}\!\left( #1  \right) }
	\definecolor{bgblue}{rgb}{0.04,0.39,0.53}
	\definecolor{dblue}{rgb}{0,0.3,0.7}
	\definecolor{ddblue}{rgb}{0,0.1,0.6}
	\definecolor{ddgreen}{rgb}{0,0.25,0.05}
	\definecolor{dgreen}{rgb}{0,0.5,0.05}
\newcommand{\enma}[1]   {\ensuremath{#1}}
\newcommand{\req}[1]{(\ref{#1.eq})}
\newcommand{\beq}{\begin{equation}}
\newcommand{\eeq}{\end{equation}}
\newcommand{\beqn}{\begin{eqnarray}}
\newcommand{\eeqn}{\end{eqnarray}}
\newcommand{\beqns}{\begin{eqnarray*}}
\newcommand{\eeqns}{\end{eqnarray*}}
\newcommand{\bct}{\begin{center}}
\newcommand{\ect}{\end{center}}
\newcommand{\btmz}{\begin{itemize}}
\newcommand{\etmz}{\end{itemize}}
\newcommand{\benum}{\begin{enumerate}}
\newcommand{\eenum}{\end{enumerate}}
\newcommand{\R}{{\mathbb R}}
\newcommand{\E}{{\mathbb E}}
\newcommand{\bbS}{{\mathbb S}}
\newcommand{\sT}{{\scriptstyle T }}
\newcommand{\cL}{\enma{\mathcal L}}
\newcommand{\cR}{\enma{\mathcal R}}
\newcommand{\Hinfty}{{{\sf H}^{\infty} } }
\newcommand{\diag}[1]{ {\sf diag} \! \left( #1 \right) \rule{0em}{1em}}
\newcommand{\Ah}{\hat{A}}
\newcommand{\Qh}{\hat{Q}}
\newcommand{\ub}{{\bar{u}}}
\newcommand{\xb}{{\bar{x}}}
\newcommand{\lambdab}{{\bar{\lambda}}}
\newcommand{\Lambdat}{{\tilde{\Lambda}}}
\newcommand{\Lambdaub}{{\underline{\Lambda}}}
\newcommand{\Lambdab}{{\bar{\Lambda}}}
\newcommand{\Lambdad}{{\dot{\Lambda}}}
\newcommand{\Sigmab}{{\bar{\Sigma}}}
\newcommand{\Lambdablue}{{\color{blue}{\Lambda}}}
\newcommand{\bbm}{\begin{bmatrix}} 
\newcommand{\ebm}{\end{bmatrix}} 
\newcommand{\bsm}{\left[ \begin{smallmatrix}} 
\newcommand{\esm}{\end{smallmatrix} \right]} 
\newcommand{\bsbm}{\left[ \begin{smallbmatrix}} 
\newcommand{\esbm}{\end{smallbmatrix} \right]} 
\newcommand{\bbNm}{\begin{bNiceMatrix}} 				
\newcommand{\ebNm}{\end{bNiceMatrix}} 
\newcommand{\bNA}[1]{ \left[ \begin{NiceArray}{#1} } 		
\newcommand{\eNA}{ \end{NiceArray} \right] }
\newcommand{\lb}{\left(}
\newcommand{\rb}{\right)}
\newcommand{\lcb}{\left\{}
\newcommand{\rcb}{\right\}}
\newcommand{\calM}{{\cal M}}
\newcommand{\calQ}{{\cal Q}}
\newcommand{\sfH}{{\sf H}}
\newcommand{\sfP}{{\sf P}}
\newcommand{\cE}{{\mathcal E}}
\newcommand{\be}{\begin{equation}}
\newcommand{\ee}{\end{equation}}
\newcommand{\Ltwo}{\sfL^2}
\newcommand{\cplxs}{ C\kern -.35em \rule{0.03 em}{.7 ex}~   }
\def\complex{\hbox{C\kern -.45em \rule{0.03 em}{1.5 ex}}~}
\newcommand{\Gh}{\hat{G}}
\newcommand{\rmb}{{\mathrm b}}
\newcommand{\rmi}{{\rm i}} 
\newcommand{\rmf}{{\rm f}} 
\newcommand{\rmx}{{\rm x}}
\newcommand{\bi}{\begin{itemize}}
\newcommand{\ei}{\end{itemize}}
\newcommand{\ben}{\begin{enumerate}}
\newcommand{\een}{\end{enumerate}}
\newcommand{\cA}{\mathcal{A}}
\newcommand{\expct}[1]{  \E  \left[ #1 \right]    }
\newcommand{\inprod}[2]{\left< #1 \boldsymbol{,} #2 \right>}
\newcommand{\bseq}{\begin{subequations}}
\newcommand{\eseq}{\end{subequations}}
\newcommand{\bbP}{\mathbb{P}}
\newcommand{\ba}{\begin{array}}
\newcommand{\ea}{\end{array}}
\newcommand{\mycaption}[1]{\caption{\footnotesize #1}}
\definecolor{dred}{rgb}{.8,0,0}
\newcommand{\tcdb}[1]{\textcolor{dblue}{#1}}
\newcommand{\sm}{\text{-}}
\newcommand{\ssT}{{\scriptscriptstyle T}}
\def\clap#1{\hbox to 0pt{\hss#1\hss}}
\newcommand{\btc}{\begin{tabular}{c}}
\newcommand{\btbl}{\begin{tabular}{l}}
\newcommand{\et}{\end{tabular}}
    \newcommand{\Xba}{{\bar{X}}}
    \newcommand{\Yba}{{\bar{Y}}}
    \newcommand{\Zba}{{\bar{Z}}}
    \newcommand{\vb}{{\bar{v}}}
    \newcommand{\Xd}{\dot{X}}
    \newcommand{\xd}{\dot{x}}
	\newcommand{\rom}{\rule{0em}{1em}}
	\newcommand{\romn}{\rule{0em}{.91em}}
	\newcommand{\rome}{\rule{0em}{.85em}}
\newcommand{\Ims}[1]{{\sf Im}\!\left( #1 \right) }
\newcommand{\rank}[1]{{\rm \bf rk} \!\lb #1 \rb}
\newcommand{\ranks}[1]{{\rm \bf rk} \lb #1 \rb}
\newcommand{\hsom}{\hspace{1em}} 
\newcommand{\hstm}{\hspace{2em}}
\newcommand{\heavi}{\mathfrak{h}}
\newcommand{\rme}{{\rm e}}
\newcommand{\rmp}{{\rm p}}
\newcommand{\rmo}{{\rm o}}
\newcommand{\sfC}{{\sf C}} 
\newcommand{\sfV}{{\sf V}}
\newcommand{\sfW}{{\sf W}}
\newcommand{\sfX}{{\sf X}}
\newenvironment{myenumerate}{\begin{enumerate}[left=0.2em]}{\end{enumerate}}
\newcommand{\sfS}{{\sf S}}
\newcommand{\sfL}{{\sf L}}
\newcommand{\sfAC}{{\sf AC}}
\newcommand{\Conv}[1]{{\sf Conv}\!\lb #1 \rb }
\newcommand{\Sigmad}{\dot{\Sigma}}
\newcommand{\Em}{\bbm I & 0 \ebm} 
\newcommand{\Ems}{\bbm I \\ 0 \ebm}
\newcommand{\intf}{{\sf {\bm I}}_{\sf  f}}
\newcommand{\intb}{{\sf {\bm I}}_{\sf \bm b}}
\newcommand{\smint}[2]{\scaleobj{.8}{\int_{{#1}}^{{#2}}}}
\newcommand{\blb}{\big(}
\newcommand{\lbb}{\big(}
\newcommand{\rbb}{\big)}
\newcommand{\sigmax}{{\bar{\sigma}}} 
\newcommand{\qform}{{\mbox{\large $\mathbf q$}}}
\newcommand{\sfq}{{\sf q}}
\newcommand{\bigmat}[1]{{\arraycolsep=3pt \def\arraystretch{.5} \bbm & & \\ & #1 & \\ & & \ebm }}
\newcommand{\tallmat}[1]{{ \def\arraystretch{.5} \bbm \rom  \\  #1  \\  \rom \ebm }}
\newcommand{\widemat}[1]{{\arraycolsep=2pt  \bbm ~& #1 &~  \ebm }}
\DeclareMathOperator*{\ninf}{-~inf}
\newcommand{\Lam}{\Lambda} 
\newcommand{\scpt}{\scriptstyle}
	\newcommand{\bbms}{\begin{bsmallmatrix}}
	\newcommand{\ebms}{\end{bsmallmatrix}}
               \DeclareMathAlphabet{\mymathbb}{U}{BOONDOX-ds}{m}{n}
	\newcommand{\LOne}{{\sfL}^{\! 1}}
	\newcommand{\LTwo}{{\sfL}^{\! 2}}
	\newcommand{\LInf}{{\sfL}^{\!\! \infty}}
	\newcommand{\adj}{{ {\bm *}}}
		\newcommand{\deffont}[1]{{\sf #1}}
\theoremstyle{definition}
\setlist[itemize]{leftmargin=*}
\renewcommand{\calQ}{{\bm{Q}}}
\title{Linear-Quadratic Problems in Systems and Controls \\ 
 via Covariance Representations and  Linear-Conic Duality: Finite-Horizon Case }           
\author{Bassam Bamieh\thanks{Department of Mechanical Engineering, UCSB, {\em bamieh@ucsb.edu}. 
	}} 
\date{}
\begin{document} 

\maketitle


\begin{abstract} 	
	Linear-Quadratic (LQ) problems that arise in systems and controls include the classical optimal control problems of the Linear Quadratic Regulator (LQR) in both its deterministic and stochastic forms, as well as $\sfH^\infty$-analysis (the Bounded Real Lemma), the Positive Real Lemma, and general Integral Quadratic Constraints (IQCs) tests. We present a unified   treatment of all of these problems using an approach which converts  linear-quadratic problems to matrix-valued linear-linear problems with a positivity constraint. This is done through a system representation where the joint state/input covariance (the outer product in the deterministic case) matrix is the fundamental object. LQ problems then become infinite-dimensional semidefinite programs, and the key tool used is that of linear-conic duality. Linear Matrix Inequalities (LMIs) emerge naturally as conal constraints on dual problems. Riccati equations  characterize extrema of these special  LMIs, and therefore provide solutions to the dual problems.  The state-feedback structure of all optimal  signals in these problems emerge out of  alignment (complementary slackness) conditions between primal and dual problems. Perhaps the new insight gained from this approach is that first LMIs,   and then second, Riccati equations arise naturally in  dual, rather than  primal problems.  Furthermore, while traditional LQ problems are set up in $\sfL^2$ spaces of signals, their equivalent covariance-representation problems are most naturally set up in $\sfL^1$ spaces of matrix-valued signals. 
\end{abstract}



\section{Introduction and Motivation}

Linear Quadratic (LQ) control problems in systems and controls  first arose through the original Linear Quadratic Regulator (LQR)~\cite{kalman1960contributions}, which is an optimal control problem, as well as the celebrated Kalman-Yacubovic-Popov (KYP) Lemma~\cite{popov1964hyperstability,kalman1963lyapunov,yakubovich1962solution}. The KYP Lemma can be considered  as a test for an Integral Quadratic Constraint (IQC), which  can be phrased as whether an LQ optimal control problem has finite or infinite infima as advocated in the influential paper of Willems~\cite{willems1971least}. 
Other IQC tests can be used to characterize robust stability of feedback systems subject to uncertainties that can be characterized by IQCs~\cite{megretski1997system}. Those include the Bounded Real Lemma for testing a system's 
$\sfH^\infty$ ($\sfL^2$-induced) norm, as well as the Positive Real Lemma for testing a system's passivity. 
In the same manner as~\cite{willems1971least},  by LQ problems we mean something more general than the LQR problem, 
namely any problem involving linear dynamics with inputs, and a quadratic form defined jointly on the state and input. The goal is to characterize the extrema of the quadratic form  subject to the dynamics as a constraint. 
The literature on these problems is 
vast, and will not be summarized here. Notably, Linear Matrix Inequalities (LMIs) and Riccati equations appear frequently as 
central characters in these intertwined stories. 

Connections between LQ problems and LMIs were pointed out by Willems~\cite{willems1971least}. The books~\cite{boyd1994linear,boyd2004convex} (see also~\cite{scherer2000linear}) have since popularized  the many uses of LMIs in systems and controls, as well as more general optimization problems. Another theme in~\cite{boyd1994linear,boyd2004convex}  is that once a controls problem is formulated as an LMI, then its solution via semidefinite programming is readily achieved. The literature on LMIs is also vast, including a great variety of analysis and synthesis methods, and cannot be summarized here. It is however hard to escape the impression that many arguments with LMIs involve what might be called ``algebraic acrobatics''; an LMI for a particular problem is proposed by a skillful acrobat, after which the verification of whether this LMI characterizes the problem at hand is given.  One  goal of this paper is to step back to try to see a more natural way in which LMIs arise in LQ problems. 

It is well known that quadratic programs can be converted to semidefinite programs using Schur complements~\cite{boyd2004convex}. Alternatively, for problems with purely quadratic objectives, a reparameterization in terms  of the covariance matrix  of the variables (tensor product of vectors) renders the objective linear. This idea appears in optimal control problems in~\cite{gattami2009generalized,gattami2007optimal}, where a discrete-time stochastic LQ problem is reformulated such that the joint covariance of the state and control is the new state. The dynamics of covariances are linear but underdetermined,  the objective becomes linear, and  there are additional positivity constraints on the state expressing that covariance matrices must be non-negative definite. Such a finite-horizon problem then becomes a finite semidefinite program, and it is then shown that the dual problem naturally leads to an LMI. With a similar approach~\cite{you2014h,you2015primal,gattami2015simple} the covariance representation has been used for $\sfH^\infty$ analysis. More recently~\cite{rantzer2023linear}, a joint ``empirical covariance'' of the state and control vectors is used as the ``state'' of a data-driven controller, indicating perhaps that covariance representations are suitable for data-driven control methods as well.

 In this paper, a variation of the covariance representation method of~\cite{gattami2009generalized} is used with a slightly different treatment of the dual problem.  Both stochastic and deterministic problems are considered, where the ``deterministic covariance'' is the rank-one matrix of the outer product of the joint state/input vector. We address continuous-time finite-horizon LQ problems, and view this covariance representation together with its conal constraints as an infinite-dimensional semidefinite program. A natural approach to such problems is to use linear-conic duality and investigate the dual problem. This is where  Differential Linear Matrix Inequalities (DLMIs) show up naturally as  dual constraints. Since the dual objective is also linear, the dual problem is then solved by extremal solutions (in the Loewner ordering on matrices) of the DLMIs, which in turn are given by Differential Riccati Equations (DREs). We use a Banach space version of weak duality that has a simple proof. Rather than give technical conditions for strong duality, we instead use a complementary slackness (alignment) condition between primal and dual problems. Alignment  actually gives additional insight into the solutions, as it  shows that all optimal input signals for LQ problems are of the form of static state feedback. 

The use of duality in optimal control has some history. In our present context, an LQ problem in  covariance representation is most naturally set up in a matrix-valued $\sfL^1[0,\sT)$ signal space. Other examples of $\sfL^1$ optimization in control include~\cite{dahleh19861,dahleh19871} where  duality was used to characterize optimal closed loops as ``sparse'' impulse responses. Later on~\cite{qi2001matlab,salapaka1999multi} finite-dimensional approximations to infinite-dimensional dual problems were used to obtain convergent confidence intervals for numerical solutions to mixed-norm robust control problems in a similar spirit to primal-dual algorithms in semidefinite programming~\cite{boyd2004convex,vandenberghe1996semidefinite}. 

In a general setting, Vinter~\cite[Theorem 2.1]{vinter1993convex} developed a duality framework for nonlinear  optimal control problems in which the dual constraint is a Partial Differential Inequality (PDI). The number of independent variables in the PDI is one (for time) plus the number of states. In our present context, the dual constraint of an LQ problem is a DLMI, i.e. a ordinary differential inequality. The relation between the PDI of~\cite{vinter1993convex} and the DLMI presented here appears similar  to that between the HJB equation of dynamic programming (a PDE) whose solutions for LQ problems are quadratic functionals parameterized by solutions to  matrix-valued differential Riccati equations.  It appears that when restricted to LQ problems, the PDI of~\cite{vinter1993convex} collapses to the DLMIs presented here, although those exact connections remain to be explored. 

In~\cite{rantzer2001dual}, Rantzer introduced a criterion for stability from almost all initial conditions in terms of a PDI (without time dependence). Density functionals that satisfy this PDI are duals of classic Lyapunov functions. The interpretation of a Lyapunov functions as a cost-to-go in an optimal control problem gives a similar interpretation of a density functional in a dual optimal control problem. More recently~\cite{mauroy2020koopman}, duality between this type of analysis and the Koopman representation of dynamical systems have been explored. We again point out that in the present context of LQ problems,  DLMIs are parameterized by matrices, which are finite dimensional objects in contrast to the infinite-dimensional objects which solve the PDIs of the more general criteria. 

Positive dynamical systems are those whose states have positive components, or more generally  evolve in positive cones~\cite{angeli2003monotone,rantzer2018tutorial}. 
For optimal control of positive systems with linear objectives, it is natural to use a dual formulation of the problem as was recently done to provide explicit solutions to the associated HJB equations~\cite{rantzer2022explicit}. For any linear dynamical system, the covariance representation is a linear positive system (in the Loewner order of positive semi-definite matrices), and thus ideas from positive systems appear more generally applicable to not-necessarily-positive systems. A similar theme is used in~\cite{bamieh2020input} for analysis of arbitrary linear systems with multiplicative stochastic uncertainty, where an equivalent deterministic system acting on matrix-valued covariance signals is a monotone system.

We finally mention that some of the statements with DLMIs in this paper resemble those in~\cite{seiler2014stability} where finite time-horizon IQC problems are investigated, though the techniques we use are somewhat different. In fact, essentially all the results in this paper have appeared elsewhere, and most are a reworking of the ideas of~\cite{willems1971least} using duality.  Thus the novelty is not in the set of ideas presented, but perhaps in the {\em sequencing} of these ideas. Whether this is compelling or not is probably a matter of taste. The next subsections summarize the problem formulation and the sequence of steps for using duality to arrive at solutions in terms of DLMIs and DREs. The remainder of the paper is devoted to the details and the necessary background. The paper is written in a tutorial style with an attempt at self containment. Thus many facts that appear in the literature are repeated here for that purpose.

\subsubsection*{Notation and Terminology} 

We use the term ``positive matrix'' to refer to a positive semi-definite matrix, and say ``strictly positive'' to refer to a positive-definite matrix. The notation $N^*$ is used for the transpose of a matrix $N$. Matrices are generally (though not always) denoted by capital letters such as $A$, while operators are generally denoted with calligraphic script, e.g. $\cL(X):= AX+XA^*$. Capital sans-serif font (such as $\sfV$ or $\sfL^1$), or ``blackboard bold'' (such as $\R$ or $\bbP$)  are used for sets and vector spaces.

\subsection{Problem Formulation}

 Linear-Quadratic Problems (LQPs) in control systems are those where the dynamics are linear, and 
a cost function is a quadratic form on all the signals in the system 
\be
\begin{aligned} 
	\xd ~&=~  Ax +  B v , 
				&&  
				\arraycolsep=2pt
				\begin{array}{rcl} x(t_\rmi) &=&  \rmx_\rmi	,\\ x(t_\rmf) & = & \rmx_\rmf ,	\end{array} 			
																			 	\\ 
	 \qform(x,v)  ~&=~ \int_{t_\rmi}^{t_\rmf} \bbm x \\ v \ebm^* \bigmat{\calQ}  \bbm x \\ v \ebm dt 
		~=:~ \int_{t_\rmi}^{t_\rmf} \sfq(x,v) ~dt , 													
\end{aligned} 
\label{LQP.eq}
\ee
where either initial  $\rmx_\rmi$ or final $\rmx_\rmf$ state maybe specified or free, and initial and final times $t_\rmi$ and 
$t_\rmf$ may be finite or infinite.  
$\calQ$ is a matrix that determines the quadratic form $\qform(x,v)$. 
This is a powerful general framework  which encompasses many different problems. 
\begin{itemize} 
	\item 
		The signal $v$ can play different roles. In the LQR problem it is a control, and the problem is to determine the 
		optimal control that minimizes $\qform$. In other problems such as $\sfH^\infty$ analysis or more generally 
		 Integral Quadratic 
		Constraints (IQCs), $v$ is an exogenous signal and the analysis problem is to determine whether $\qform$ 
		remains positive or finite for all possible such exogenous signals. 
	\item 
		Without loss of generality, the matrix $\calQ$ is taken as symmetric. 
		It may be positive or have mixed signature depending on the problem at hand. 
	\item The time horizon $t_\rmf-t_\rmi$ can be finite or infinite. Usually for finite horizons, there are no conceptual differences 
		between the cases of time-invariant or time-varying systems. For  infinite horizon problems issues of 
		stability arise that need to be characterized. 
\end{itemize} 

The diversity of the problems that can be formulated as~\req{LQP} is best appreciated by listing some of the 
well-known ones. The following is a non-comprehensive list by way of examples. 
\begin{itemize} 
	\item \textsf{The Linear Quadratic Regulator (LQR):} In this problem the input signal $v$ is a control, 
		traditionally denoted by $u$.  Given an initial state $\rmx_\rmi$, it is desired 
		to drive (``regulate'') this state to zero while minimizing an objective that is a combination of regulation and 
		control costs 
		\[
			\sfq(x,u) ~:=~ x^*Qx + 2x^*Nu + u^* R u ~=~ \bbm x \\ u \ebm^* \bbm Q & N \\ N^* & R \ebm \bbm x \\ u \ebm  
			~=:~  \bbm x \\ u \ebm^* \bigmat{\calQ}  \bbm x \\ u \ebm
		\]
		In this problem $x(t_\rmi)=\rmx_\rmi$ is given and $x(t_\rmf)$ is free. The
		``disturbance'' in this setting is the non-zero initial state, and the task of the control $u$ is 
		to ameliorate the effect of this initial disturbance by regulating the state back to zero (equilibrium). 
		In this setting, $x$ is uniquely determined by $u$, and the 
		 task is to find 
		$u$ that minimizes this objective
		\[
			\inf_u \qform(x,u) , 
			\hstm 
			\mbox{subject to}
			\hstm 
			\xd=Ax+Bu, \hsom x(t_\rmi)=\rmx_\rmi. 
		\]
		
		In this problem, we assume $Q\geq 0$, $R>0$, and $N$ is chosen subject to the constraint 
		 $\calQ\geq 0$ (not to be confused with its submatrix $Q$). 
		
	\item 
		\textsf{Kalman-Yacubovic-Popov (KYP) Lemma and Integral Quadratic Constraints (IQCs):} 
		The KYP Lemma is fundamentally about whether a quadratic form like~\req{LQP} is nonnegative (or alternatively, 
		nonpositive) for all signals consistent with the system dynamics. It is usually stated for infinite time-horizon 
		problems, and has frequency-domain as well as a time-domain characterizations. 
		More modern uses of the lemma use the formalism of IQCs for various robustness
		analysis problems. 
		The two most commonly used
		instances of the KYP lemma are the ``positive real'', and the ``bounded real'' lemmas. 
		
		\begin{itemize} 
		\item 
            		\textsf{Characterizing Passivity:} (aka The Positive-Real Lemma) 
			 This is an analysis problem, so the input is viewed as an exogenous signal in $\sfL^2[t_\rmi,t_\rmf]$.  
            		A system is called passive from $v$ to an output $z=Cx+Dv$ if the inner product between the input and output
            		 is positive. In this case 
            		\be
            			\sfq(x,v) ~:=~z^*v = (Cx+Dv)^*v = x^*C^*v+ v^*D^* v = 
            			\arraycolsep=1.5pt
            			\frac{1}{2} \bbm x \\ v \ebm^* \bbm 0 & C^* \\ C & D\text{+} D^* \ebm \bbm x \\ v \ebm, 
            		  \label{Passivity_IQC_set.eq}		  
            		\ee
            		and  initial conditions are zero $x(0)=0$. 
            		In this problem  we want to check whether this quadratic
            		form is positive for all signals $(x,v)$ consistent with the system's equations  
            		\[
            			\forall w, ~\smint{t_\rmi}{t_\rmf} \sfq(x,v) ~dt ~\geq~ 0
            			\hstm \Leftrightarrow \hstm 
            			\inf_w \smint{t_\rmi}{t_\rmf} \sfq(x,v) ~dt ~=~0. 
            		\]
		\item 
            		\textsf{Characterizing $\sfL^2$-induced ($\Hinfty$) Norms:} (aka The Bounded-Real Lemma): 
			This is another analysis problem with zero initial state 
            		$x(0)=0$.  The system has $\Ltwo(0,\sT)$-induced norm less than a given number $\gamma$ from 
            		 $v$ to an output $z=Cx$  iff $\int_{0}^{\ssT} 	\sfq(x,v) dt \geq 0$, where 
            		\be
            			\sfq(x,w) 
            			~:=~		\gamma^2 v^*v - z^*z  ~=~ \gamma^2 v^*v - x^*C^*C x ~=~
            			\arraycolsep=1.5pt
            			 \bbm x \\ v \ebm^* \bbm \sm C^*C & 0 \\ 0 & \gamma^2 I  \ebm \bbm x \\ v \ebm, 
            		  \label{Hinf_IQC_set.eq}
            		\ee
            		for all  inputs $v$. Again, this is a problem of characterizing whether  a quadratic form defined 
            		on the system's signals is always positive, which as in the passivity problem, is the same as 
            		insuring that the infimum over all $v$ is zero. 
            		
            		Note that an  equivalent formulation would also be 
            		\[
            			\sfq(x,v) ~:=~ x^*C^*C x - \gamma^2 v^*v  ~=~
            			\arraycolsep=1.5pt
            			 \bbm x \\ v \ebm^{{*}}  \bbm  C^*C & 0 \\ 0 & \sm\gamma^2 I  \ebm \bbm x \\ v \ebm , 
            		\]
            		with the condition that $\int_{0}^{\ssT} 	\sfq(x,v) dt \leq 0$, i.e. checking that the supremum of this 
            		quadratic form is zero.  This alternative, but equivalent, formulation 
            		is sometimes encountered in the literature. We adopt here the formulation~\req{Hinf_IQC_set} instead since 
            		we want to treat all problems as infimization problems so that a single result can be stated for all LQP problems 
            		in a manner similar to  the LQR problem. 
		\end{itemize} 

%
%

\end{itemize}

Derivations of the statements above regarding the $\Hinfty$ norm and passivity are included in Appendix~\ref{HinfPass.appen} 
for reference. 
As already stated, we  adopt a unifying convention for all problems 
whereby  $R>0$ with no restriction on the remaining matrices.


\subsection{Outline of the Present Approach}  								\label{outline.sec}
	
	As an overview, we now summarize  the main steps of the present approach, with details to be presented in later 
	sections. 
	The key idea  is to recast the linear-quadratic problem~\req{LQP} as a linear-linear problem
	for {\em matrix-valued signals}, but with an additional positivity constraint as follows. 
	Restating~\req{LQP} 
            \begin{align} 
            	\xd ~&=~   Ax  +  B v	,		
				\hstm 	x(0)=\rmx_\rmi,	
									 			            \label{LQP_two.eq}				\\ 
            	\qform(x,w) ~&=~ \int_{0}^{\ssT}  \bbm x \\ v \ebm^* \bigmat{\calQ} \bbm x \\ v \ebm dt 
            		~=:~ \int_{0}^{\ssT} \sfq(x,v) ~dt , 												
            \end{align} 
	observe that if we define the following {\em matrix-valued} signal, which is the outer (tensor) product of the 
	original problem variables 
	\be
		\Sigma(t) ~:=~ \bbm x(t) \\ v(t) \ebm \bbm x^*(t) & v^*(t) \ebm 
				~=:~ \bbm \Sigma_{xx}(t) & \Sigma_{xv}(t) \\ \Sigma_{xv}^*(t) & \Sigma_{vv}(t) \ebm ,
		\hstm 
	 \label{Cov_LQP.eq}
	\ee
	then the quadratic objective $\qform$ becomes {\em linear} in $\Sigma$ 
	\be
		\qform ~:= \!\!
			 \int_{0}^{\ssT}    \!\!{ \bbm x^* & v^* \ebm \bigmat{\calQ}  \bbm x \\ v \ebm } dt
		      =
			 \int_{0}^{\ssT}  \!\! \trcc{   \bigmat{\calQ}  { \bbm x \\ v \ebm  \bbm x^* & v^* \ebm }} dt
		 	=: \int_{0}^{\ssT}  \inprod{\calQ}{\Sigma	\romn} ~dt , 
	  \label{Q_Sigma_form.eq}
	\ee
	where $\inprod{\calQ}{\Sigma} := \trcc{\calQ^*\Sigma}$ is the inner product on 
	matrices\footnote{Note that for any two matrices $M_1$ and $M_2$ of compatible dimensions, 
	 	$\trcc{M_1M_2}=\trcc{M_2M_1}$. Thus for any number of matrices, one can perform ``cyclic permutations'' 
		inside the trace $\trcc{M_1 M_2 \cdots M_n} = \trcc{ M_2 \cdots M_{n} M_1}$. This fact was used 
		in the second equality in~\req{Q_Sigma_form}.  }. 
	If $v$ is stochastic, then we would take expectations in~\req{Cov_LQP}, and 
	 $\Sigma$ would then be  the joint input-state covariance matrix. If $v$ is deterministic, 
	then  $\Sigma$ can still be thought of as a ``deterministic covariance'', or as already stated,  the outer (tensor) 
	product of the vector signal $(x,v)$ with itself. In this case, $\Sigma$ has the additional property 
	of always being a rank-one matrix. 
	
	A differential equation  for the {\em $\Sigma_{xx}$ portion} of $\Sigma$ can be derived from 
	the system dynamics~\req{LQP_two} as follows
	\begin{align}
		\frac{d}{dt} xx^* ~&=~ \xd x^* + x \xd^* ~=~ (Ax+Bv) ~x^* + x~(Ax+Bv)^*			\nonumber		\\
		 &=~ A~xx^*+B~vx^* + xx^*~A^*+xv^*~B^* 	
																			\nonumber		\\
		\Leftrightarrow ~~
		\dot{\Sigma}_{xx}
		& = \bbm A & B \ebm \bbm \Sigma_{xx} \\ \Sigma_{xv}^* \ebm  + 
				\bbm \Sigma_{xx} & \Sigma_{xv} \ebm \bbm A^* \\ B^* \ebm 		
							& 		\Sigma_{xx}(0) &= \rmx_\rmi\rmx_\rmi^*	
																			\nonumber		\\
		\hspace{-2em}
		\Leftrightarrow ~~ 
		\Em 
			\bigmat{\tcdb{\dot{\Sigma}}} 
		\Ems
		&= \bbm A & B \ebm 
			\bigmat{\tcdb{\Sigma}}   
			\Ems + \Em 
			\bigmat{\tcdb{\Sigma}} 
			\bbm A^* \\ B^* \ebm  				, 
						&					\Em \bigmat{\!\! \tcdb{\Sigma(0)} \!\!\!\!} \Ems &= \rmx_\rmi\rmx_\rmi^*,
																\label{Sigma_diffeq_LQP.eq}
	\end{align} 
	where matrix dimensions are shown for emphasis. 
	This is a linear, but somewhat unusual   differential equation. 
	Only a subset of the components of the derivative $\Sigmad(t)$ are given as a linear function of $\Sigma(t)$. 
	Thus this is a highly under-deteremined system of differential equations, and has non-unique solutions 
	even if full  initial or final conditions are specified. 
	
	To appreciate the structure of~\req{Sigma_diffeq_LQP} more precisely, define the 
	linear, matrix-valued operators 
	on matrices 
	 \be
	 	\cE \big( \Sigma \big) := \Em \bigmat{\Sigma} \Ems , 
		\hstm 
		\cA \big(\Sigma  \big) := \bbm A & B \ebm \bigmat{\Sigma} \Ems
			+ \Em  \bigmat{\Sigma}   \bbm A^* \\ B^* \ebm, 
	   \label{mat_op_LQP.eq}
	 \ee
	 and the differential equation~\req{Sigma_diffeq_LQP}  can now be written as 
	 \be
	 	\cE\big(\Sigmad(t)\big) ~=~ \cA\big( \Sigma(t) \big), 
		\hstm\hstm 
		\cE\big( \Sigma(0) \big) = \rmx_\rmi \rmx_\rmi^* =: \sfX_\rmi . 
	   \label{Sigma_diff_sum.eq}
	 \ee
	 Note that the operator $\cE$ is not invertible, it takes $(n+m)\times(n+m)$ matrices (where $n:=\dim(x)$ and $m:=\dim(v)$) 
	 to $n\times n$ matrices. The differential equation above 
	  thus belongs to the class of ``descriptor systems'', and  non-uniqueness of solutions is  typical for some 
	  classes of descriptor systems. 
	 It is important to understand the relation between the original vector differential equation~\req{LQP_two} and the 
	 matrix differential equation~\req{Sigma_diffeq_LQP}.
	 In the deterministic case, 
	 any solution of the 
	  vector differential equation gives a solution of the matrix differential 
	 equation by taking the outer 
	 product of the solution vector $(x,v)$ with itself. However, to go from solutions of the matrix to the original vector equation, 
	 one must impose a rank-one constraint on $\Sigma$  as follows
	 \be
	\begin{aligned}
		\cE\big( \Sigmad \big) &= \cA\big( \Sigma \big)  , && t\in[0,\sT]	,		\\
		\hstm \Sigma &\geq 0, ~\rank{\Sigma}=1, ~\cE\big( \Sigma(0) \big) = \rmx_\rmi\rmx_\rmi^*, 
		\label{Sigma_diff_lemma.eq}		
	\end{aligned} 
	\ee
	where $\rank{\Sigma}:={\rm rank}(\Sigma)$. 
	 Given any solution of~\req{Sigma_diff_lemma} we can factor the rank-one  positive matrix $\Sigma(t) = z(t)z^*(t)$, 
	 and then partition the vector $z$ conformably with the partitions of $\cE$ and $\cA$ and obtain
	 $z(t) =: \bbm x(t) \\ v(t) \ebm$ which solve the original differential equation~\req{LQP_two}. Thus there is a one-to-one
	 correspondence between solutions of~\req{LQP_two} and solutions of~\req{Sigma_diff_lemma}. 
	 	
	Now the original LQ problem and the 
	equivalent covariance problem can be stated as follows 
	\be
		\inf_{	\arraycolsep=1pt	
				\begin{array}{rcl} 
					\scpt		\xd	&\scpt	=	&	\scpt		Ax+Bv		\\
					\scpt		\rmx_\rmi & \scpt	=	&	\scpt	 	x(0)
				\end{array}
			} 
			\smint{0}{\ssT} \sfq(x,v) dt  
		\hstm =  
		\inf_{	\arraycolsep=1pt
				\begin{array}{rcl} 
					\scpt		\cE(\Sigmad)  		& \scpt	=	&	 \scpt	 \cA(\Sigma) ,  \\
					\scpt 	\cE(\Sigma(0))	& \scpt	=	&	\scpt		\sfX_\rmi , ~
					\scpt		 \ranks{\Sigma} 		 \scpt	=		\scpt		 1, ~	 \Sigma\geq 0 
				\end{array}
			}			
			\smint{0}{\ssT}  \inprod{\calQ}{\Sigma} dt 
	 \label{LQP_conic_equiv.eq} 
	\ee
	The covariance problem now has linear dynamics, and a {\em linear objective}. We have however two additional 
	constraints, a positivity (cone) constraint $\Sigma\geq 0$, and a rank constraint. The latter makes the problem non-convex
	even if the original linear-quadratic problem is convex. This turns out to be a red-herring. 
	As explained later in~\req{opt_conv_hull}, since the objective is a linear functional, the non-convex rank-one cone of 
	matrices can be replaced by its convex hull, the cone of all positive matrices. In addition, rank-one solutions always 
	exists even if the infimum is taken over the entire cone of positive matrices. 
	

	The important new ingredient in~\req{LQP_conic_equiv} is  the positivity, or {\em cone constraint} $\Sigma\geq0$. 
	Optimization problems with linear constraints and objectives together with a cone constraint  belong to the class of  
	linear-conic optimization problems. The main 
	tool for the study of such problems is {\em linear-conic duality} (summarized in Section~\ref{C_duality.sec}). 
	For the current example, we preview the result stated in Theorem~\ref{IQC_i.thm}, and how it can be used to address 
	all LQ Problems in four main steps. 
	
	\begin{enumerate} 
		\item The differential equation~\req{Sigma_diff_sum} for $\Sigma$ can be converted to a linear constraint in function 
		space by integrating it forward in time 
		\be
			\left. 
			\begin{array}{rcl} 
				\cE\big(\Sigmad\big) &=&  \cA\big( \Sigma \big) 	\\ 
				\cE(\Sigma(0)) &=& \sfX_\rmi 
			\end{array} 
			\right\} 
			\hstm \Leftrightarrow \hstm 
			\lb \cE -\intf \cA \rb(\Sigma) ~=~ \heavi \sfX_\rmi , 
		  \label{abst_eq_const.eq} 
		\ee
		where $\intf$ is the forward integration operator, and $\heavi$ is the unit-step (Heaviside) 
		function (thus $\heavi \sfX_\rmi$
		is a constant, matrix-valued function with value $\sfX_\rmi$). $\lb \cE -\intf \cA \rb$ is a bounded 
		operator on (matrix-valued) $\sfL^1[0,\sT]$, and
		equation~\req{abst_eq_const} is a linear equality constraint on $\Sigma$. 
	
		\item 
                    	The ``weak duality'' statement 
                    	for the covariance problem~\req{LQP_conic_equiv} is 
                    	 \be
                    			{\arraycolsep=2pt
                    			\begin{array}{rcl} 
                    				\displaystyle 
                    				\inf_{\Sigma \in \sfL^1(0,\ssT)}	 & &\displaystyle	 \smint{0}{\ssT}  \inprod{\calQ}{\Sigma} ~dt		\\ 
                    				\big(  \cE - \intf\cA  \big) (\Sigma) &=& \heavi \sfX_\rmi		\rule{0em}{1.5em} \\ 
						\ranks{\Sigma} 		 &	=	&	1, ~	 \Sigma\geq 0
                    			\end{array} 	}	
                     		\hstm\hstm	\geq \hstm
                    			{\arraycolsep=2pt
                    			\begin{array}{rcl} 
                    				\displaystyle 
                    				\sup_{Y\in\sfL^\infty(0,\ssT)}	 & & \displaystyle \smint{0}{T}   \inprod{Y}{\sfX_\rmi} 	~dt	\\ 
                    				\calQ - \big(  \cE -\intf \cA \big)^\adj (Y) &\geq& 0		\rule{0em}{1.5em} 
                    			\end{array}
					} 		
                    	 \label{prim_dual_intro.eq}
			\ee
                    	where the dual variable $Y$ is a matrix-valued function over $[0,\sT]$. 
			Note that since $(x,v)$ are in $\sfL^2(0,\sT)$, then their outer product 
			 $\Sigma \in \sfL^1_{\bbS_n}(0,\sT)$, the $\sfL^1$ 
			space of functions taking values in $\bbS_n$, the space of $n\times n$ symmetric matrices. 

                    	The key to the dual problem is the dual inequality constraint $\calQ - \big(  \cE -\intf \cA \big)^\adj (Y) \geq 0$, 
			whose structure depends on the adjoint operator $\big(  \cE - \intf\cA \big)^\adj$. A simple adjoint calculation 
			(Section~\ref{adjoint_systems.sec}) 
			shows that 
			\begin{align*} 
				\big(  \cE  -\intf \cA \big)^\adj(Y) ~=~ \cE^\adj(Y) - \cA^\adj ( \intb Y ) 
				~&=~ 
				\bbm I \\ 0 \ebm Y \bbm I & 0 \ebm 
				- 
				\bbm A^* \\ B^* \ebm \big( \intb Y \big) \bbm I & 0 \ebm 
				-
				\bbm I \\ 0 \ebm \big( \intb Y \big) \bbm A & B \ebm 							\\
				\Rightarrow \hstm 
				\calQ-(\cE-\intf \cA)^*(Y) 
				~&=~ 
				\calQ - \lb 
				\cE\lb \Lambdad \rb - \cA^*(\Lambda) \rb 
				~=~ 
				\calQ + 
				\bbm 	\Lambdad + A^*\Lambda  + \Lambda A &   \Lambda B \\  
                    							B^*\Lambda &		 0 \ebm, 
			\end{align*} 
			where the dual variable $Y$ is replaced with 
			 $\Lambda := \intb Y$, the backwards integral of $Y$ (which implies $Y=-\Lambdad$). 
			
			Thus a Differential Linear Matrix Inequality (DLMI) arises out of  the dual cone constraint 
			$\big(  \cE -\intf \cA \big)^\adj(Y)\geq 0$ when expressed in terms of the backwards integral of $Y$. 
			Note how the structure of the DLMI is really a corollary of the structure of the operators $\cE$, 
			$\cA$, and their adjoints! 
			
			
			Conic duality thus shows that LMIs arise naturally from problems dual to {\em covariance} 
			optimization problems. A similar statement is much less transparent if one only considers 
			the original problem statement~\req{LQP_two} without a covariance reformulation. 
			
			
		\item 
			Differential 
			Riccati equations give extremal solutions to DLMIs, which are then shown to be the solutions
			to the dual problem since the objective is a  linear functional  in the dual variables $Y$ or 
			$\Lambda$. Thus,  Riccati
			equations also arise naturally from conic duality. However,  In this framework  
			the DLMIs {\em come first}, and Riccati 
			equations arise second as extremal solutions to the DLMIs. Riccati equations can be thought of 
			in this context as a technique to solve these special DLMIs (and LMIs) rather than using general-purpose convex 
			optimization methods. 
		\item 
			The duality gap will be shown to be zero by constructing optimal primal and dual solutions
			together with an {\em alignment} (complementary slackness) condition between them. 
			Specifically, linear-conic duality states that if there exists primal and dual variables $\Sigmab$ and $\Lambdab$ 
			such that 
			\be
				0 ~=~ 
				\inprod{\calQ - \cE^*(\dot{\Lambdab})-\cA^*(\Lambdab)}{\Sigmab} , 
			 \label{alignment_intro.eq}
			\ee
			then $\Sigmab$ and $\Lambdab$ are optimal for the primal and dual respectively, and the duality gap is zero. 
			The optimal solution $\Lambdab$ to the dual problem is the solution of a Riccati equation. A classic argument 
			implies that this solution gives a minimal-rank factorization of the LMI matrix as 
			follows\footnote{This expression is shown here for the notationally simpler case with $N=0$.} 
			\[
				\calQ - \cE^*(\dot{\Lambdab})-\cA^*(\Lambdab)
				= 
				\calQ +  \bbm 	\dot{\Lambdab} + A^*\Lambdab  + \Lambdab A &   \Lambdab B \\  
                    							B^*\Lambdab &		 0 \ebm
				= 	
				\bbm \Lambdab B R^{\sm\frac{1}{2}}  \\  R^{\frac{1}{2}} \ebm  
            			\bbm  R^{\sm\frac{1}{2}} B^* \Lambdab B &  R^{\frac{1}{2}} \ebm . 
			\]
			Since any possible  $\Sigmab$ must be  of the form~\req{Cov_LQP}, the alignment 
			condition~\req{alignment_intro} then reads 
			\begin{align*} 
				0 =  \inprod{\bbm \Lambdab BR^{\sm\frac{1}{2}}  \\ R^{\frac{1}{2}} \ebm  
					\bbm R^{\sm\frac{1}{2}} B^* \Lambdab & R^{\frac{1}{2}}  \ebm }
					{  \bbm \xb \\ \vb \ebm \bbm \xb^* & \vb^* \ebm}
				\hstm &\Rightarrow \hstm 
				\bbm R^{\sm\frac{1}{2}} B^* \Lambdab & R^{\frac{1}{2}}  \ebm 
				 \bbm \xb \\ \vb \ebm	 = 0 												\\
				 &\Rightarrow \hstm 
				\vb ~=~ \lb R^{\sm 1} B^* \Lambdab \rb \xb 	,				 
			\end{align*} 
			where the first implication is a property of symmetric factorizations of any two positive matrices 
			(Lemma~\ref{pos_orth_rank.lemma}). 
			
			Thus the state-feedback nature of  optimal solutions of any LQ problem is a consequence of the alignment 
			condition~\req{alignment_intro} between primal and dual solutions. 
	\end{enumerate}


	This note deals primarily with  finite-horizon cases of LQ problems. The infinite-horizon case will be reported in Part II. 
	The exposition in this note is meant to be tutorial and self contained. 
	The organization is  as follows. Section~\ref{prelim.sec} collects the necessary mathematical preliminaries needed
	to use linear-conic duality in a function (Banach) space, as well as some facts about cones of matrices and their duals. 
	In Section~\ref{DLMIs.sec} we investigate Differential Linear Matrix Inequalities (DLMIs) and show that Differential 
	Riccati Equations (DREs) derived from them give maximal and minimal solutions to the DLMIs. With all the preliminaries 
	established, Section~\ref{LQR_det.sec} 
	gives the solution to the classic Linear Quadratic Regulator (LQR) problem. Section~\ref{LQR_stoch.sec} 
	treats the stochastic 
	version of the LQR problem and illustrates the covariance approach for stochastic problems. 
	 Section~\ref{IQC.sec} treats the general 
	Integral Quadratic Constraints (IQCs) case, and we work out the various special IQC-type problems such as the Bounded 
	Real Lemma and   the Positive Real Lemma in their finite-horizon versions. 
	Some arguments  are relegated to appendices in order not to disrupt the exposition of the key ideas.

\section{Mathematical Preliminaries} 								\label{prelim.sec}

In this section we collect some mathematical preliminaries that will be needed. The presentation is meant to be 
self contained and tutorial. 

\subsection{Matrix Signals and their Norms}

We will be concerned with the vector space $\R^{n\times n}$ of $n\times n$ matrices, and in particular  its subspace 
 $\bbS_n\subset\R^{n\times n}$ of symmetric matrices. 
The vector space $\R^{n\times n}$  has a natural inner product given by 
\begin{align}
	\inprod{H}{M}	~:=~	\trcc{H^* M}. 
  \label{m_in_prod.eq}
\end{align}
This is simply the total sum of the element-by-element product of two matrices $H$ and $M$. Equivalently, if each matrix $H$ and $M$ is expanded as a single vector by stacking its columns over each other, the above sum would simply be the dot product of the two vectors. We note that this inner product generates the Frobenius norm  $\|M\|_F^2 = \trcc{M^*M} = \inprod{M}{M}$  on matrices. 

If we restrict attention to the subspace  $\bbS_n\subset\R^{n\times n}$ of $n\times n$ symmetric matrices, 
the  algebraic dual of $\bbS_n$ is itself since 
 the action of any matrix $H$ on a symmetric matrix $M$ is 
 \[
 	\inprod{H}{M} ~=~ \trcc{H^*M} 
	~=~ \lb \trcc{H^*M}+\trcc{MH} \romn \rb /2 
	~=~ \lb \trcc{H^*M}+\trcc{HM} \romn \rb /2 
	~=~  \trcc{{\scriptstyle \frac{1}{2}} (H^*+H)~M}, 
 \]
 and therefore the action of $\inprod{H}{.}$ on $\bbS_n$ is the same as the action of its symmetrization $\inprod{(H^*+H)/2}{.}$.
 

We will need norms other than the Frobenius norm on matrices, and therefore a different interpretation of~\req{m_in_prod} as a linear functional action (of $H$ on $M$). When $M$ is a symmetric positive matrix (like $\Sigma$ in~\req{Cov_LQP}), its trace is the quantity of interest. The trace on positive matrices is a special case of the ``nuclear norm'' on $n\times n$ matrices defined as 
\be
	\|M\|_1 ~:=~ \sum_{i=1}^n \sigma_i(M) , 
  \label{H1.eq}
\ee
where $\lcb \sigma_i(M)\rcb$ are the singular values of $M$. When $M$ is symmetric and positive, $\trcc{M} = \|M\|_1$, but the definition~\req{H1} applies to any matrix.  
We denote the space of symmetric $n\times n$ matrices with the $\|.\|_1$ norm by $\lb \bbS_n,\|.\|_1 \rb$. 
The norm dual to $\|.\|_1$ is the maximum singular value norm 
\[
	\|H \|_\infty ~:=~ \max_{1\leq i\leq n} \sigma_i(H). 
\]
The fact that this norm is dual to $\|.\|_1$ follows from the following tight ``trace duality'' inequality whose proof is given in Appendix~\ref{trace_duality.sec} 
\[
	\inprod{H}{M} ~:=~ 
	\trcc{HM} 
	~\leq~ \|H\|_\infty \|M\|_1 , 
	\hstm \hstm 
	\sup_{\|M\|_1 \leq 1} \trcc{HM} = \|H\|_\infty .
\]
Thus the space dual to $\lb \bbS_n,\|.\|_1 \rb$ is $\lb \bbS_n,\|.\|_\infty \rb$.
More generally, we can define the space $\lb \bbS_n,\|.\|_p\rb$ for $p\in(1,\infty)$ with the norm 
\[
	\|M\|_p^p ~:=~ \sum_{i=1}^n \lb \sigma_i(M) \rb^p , 
\]
and the space dual to it is $\lb \bbS_n, \|.\|_q \rb$ with $1/p+1/q=1$. However, this level of generality is not needed 
here\footnote{The $\|.\|_p$ norms defined above are the so-called Schatten norms}. 

We will also need to consider norms and linear functional actions on matrix-valued functions of time. 
To motivate the definition we will shortly introduce, consider again the covariance matrix~\req{Cov_LQP} and observe that 
\[
	\int_0^T \trcc{\Sigma(t) \rom} ~dt 
	~=~  \int_0^T \trcc{ \bbm x(t) \\ v(t) \ebm \bbm x^*(t) & v^*(t) \ebm  } ~dt 
	~=~  \int_0^T \lb  x^*(t)x(t)  + v^*(t)v(t) \rom  \rb ~dt , 
\]
where $\sT=\infty$ is a possible limit of this integral. If all signals $(v,x)$ 
need to have finite $\LTwo$ norms, then we should require $\Sigma$ itself to have finite trace norm integrated over the time interval. We therefore require our matrix-valued signals to be in the following Banach space 
 \be
 	\LOne_{\bbS_n} (0,\sT) ~:=~ \lcb M:(0,\sT)\rightarrow \bbS_n ; ~
			\| M \|_1 :=   \smint{0}{T} \| M(t) \|_1 ~dt~< \infty \rcb , 
	\hstm 
	\sT<\infty , ~\mbox{or}~ \sT=\infty. 
  \label{L1def.eq}
 \ee
 Note that the matrix norm $\|M(t)\|_1$ is the one defined in~\req{H1}, and we use the same notation $\|.\|_1$ to denote a matrix norm as well as the above norm on matrix-valued {\em functions}.

 The space dual to~\req{L1def} is the following
 $\LInf$ space
  \be
 	\LInf_{\bbS_n}(0,\sT) :=\lcb H:(0, \sT)\rightarrow \bbS_n ; ~
			\| H \|_\infty := \sup_{t\in(0,\sT)} \lb    \sigma_{\max} \lb H(t) \romn\rb \rom \rb~< \infty \rcb , 
	\hstm 
	\sT<\infty , ~\mbox{or}~ \sT=\infty. 
  \label{Linfdef.eq}
 \ee
 The linear functional action is defined as follows 
 \be
 	\inprod{H}{M} 
	~:=~ \smint{0}{T} \inprod{ H(t) }{M(t) } ~dt 
	~:=~ \smint{0}{T}  \trcc{ \rom H(t) ~M(t) } ~dt , 
	\hstm 
	M\in \LOne (0,\sT), \hstm 
	H\in \LInf (0,\sT)  , 
\ee
where from now on we abbreviate $\LOne_{\bbS_n}$ and $\LInf_{\bbS_n}$ to simply $\LOne$ and $\LInf$. 

The fact that  $\LOne(0,\sT)$ and $\LInf(0,\sT)$ are duals, and the tightness of the inequality 
\[
	\inprod{H}{M} 
	~\leq~ \|H\|_\infty \|M\|_1 , 
	\hstm \hstm 
	\sup_{\|M\|_1 \leq 1} \inprod{H}{M} = \|H\|_\infty ,
\]
follows from the material in Appendix~\ref{trace_duality.sec} and standard functional analytic arguments. 
With this setting, we see that the quadratic form~\req{Q_Sigma_form} should be interpreted as the action of matrix-valued 
function $\calQ$ in $\LInf(0,\sT)$ on the matrix-valued function $\Sigma$ in $\LOne(0,\sT)$.

\subsubsection*{Integration Operators}

In this section we restrict attention to the finite-horizon case $\sT<\infty$. 
Spaces like $\LOne_{\bbS_n}(0,\sT)$ and $\LInf_{\bbS_n}(0,\sT)$ will sometimes be abbreviated as 
$\LOne$ and $\LInf$ for notational simplicity. 

Define the 
  \deffont{forward integration operator} ${\intf}$ 
and its adjoint, the \deffont{backwards integration operator} ${\intb}$
\[
	\big( \intf{X}\big)(t)  ~:=~ \smint{0}{t} X(\tau)~d\tau, 
	\hstm \hstm 
	\big( \intb X \big) (t)  ~:=~ \smint{t}{T} X(\tau) ~d\tau , 
	\hstm\hstm 
	t\in[0,\sT]. 
\]
That these two operators are (formal) adjoints is shown by the following calculation 
\begin{align} 
	\textstyle
	\inprod{Y}{\intf X} 
	~&= \textstyle
		\trcc{	\int_{0}^{T}  Y^*(t) \lb  \int_{0}^t X(\tau) ~d\tau\rb  dt }
	=\trcc{ \int_{0}^{T}	 Y^*(t) \lb \int_{0}^{T} X(\tau)~ \heavi(t\sm\tau)  ~d\tau\rb ~dt 	}		\nonumber	\\
	&= \textstyle
		\trcc{ \int_{0}^{T} \lb \int_{0}^{T} Y^*(t) ~\heavi(t\sm\tau) ~dt \rb    X(\tau)  ~d\tau }
	=	\trcc{ \int_{0}^{T} \lb \int_\tau^{T} Y^*(t) ~dt \rb    X(\tau)  ~d\tau 	}					\nonumber	\\
	&= \textstyle
		\trcc{	\int_{0}^{T} \big( (\intb Y)(\tau) \big)^*    X(\tau)  ~d\tau }
	=~ \inprod{\intb Y }{X} , 															\label{If_Ib_adj.eq}
\end{align} 
where $\heavi(.)$ is the unit-step (Heaviside) function. 

To use duality, we will need to be careful with the signal spaces and their duals. 
%
Forward integration  is a boundeded operator (with bound $1$)
$\intf:\LOne\rightarrow\sfC$, the closed subspace of continuous 
functions\footnote{In fact, it is almost a tautology that $\intf$ maps $\LOne$ to $\sfAC_0$,  the subspace (of $\sfC$)
	 of absolutely continuous functions with zero initial value. It's an isometric isomorphism if one endows 
	 $\sfAC_0$ with the norm pushed forward from $\LOne$ by this map. The inequality~\req{intf_ineq} 
	 shows that with this norm, $\sfAC$ (with this norm)
	  is continuously embedded in $\sfC$ (with the supremum norm). 
	  However, this level of nuance is not needed for the finite-horizon problem.}
in $\LInf$. The bound 
is shown by 
\begin{align} 
					\left\| \lb \intf \Sigma \rb(t) \right\|_\infty
					~&=~\left\|  \smint{0}{t} \Sigma(\tau) ~d\tau \right\|_\infty
					~\leq~  \smint{0}{t} \left\| \Sigma(\tau) \right\|_\infty ~d\tau 
					~\leq~  \smint{0}{t} \left\| \Sigma(\tau) \right\|_1 ~d\tau 
					~\leq~  \smint{0}{T} \left\| \Sigma(\tau) \right\|_1 ~d\tau 			\nonumber	\\
					\Rightarrow \hstm 
					\left\|  \intf \Sigma  \right\|_\infty
					~&=~ 					
						\sup_{t\in(0,T)}	\left\| \lb \intf \Sigma \rb(t) \right\|_\infty
						~\leq~ \|\Sigma\|_1 . 									\label{intf_ineq.eq}
\end{align} 
For finite-horizon problems, we can also regard $\intf$ as a bounded operator $\intf:\LOne\rightarrow\LOne$ due to the bound
\begin{align} 
					\left\| \lb \intf \Sigma \rb \right\|_1
					~&=~  \smint{0}{\ssT} \|   \smint{0}{t}\Sigma(\tau)  ~d\tau \|_1 ~dt
					~\leq~  \smint{0}{\ssT}  \smint{0}{\ssT} \| \Sigma(\tau) \|_1~ \heavi(t\sm\tau) ~d\tau  ~dt
					~=~  \smint{0}{\ssT} \| \Sigma(\tau) \|_1~ \smint{0}{\ssT}  \heavi(t\sm\tau) ~dt  ~d\tau
																\nonumber			 \\
					~&=~  \smint{0}{\ssT} \| \Sigma(\tau) \|_1~(\sT-\tau) ~d\tau
					~\leq~ \sT \smint{0}{\ssT} \| \Sigma(\tau) \|_1~d\tau
					~=~ \sT ~\left\| \Sigma \right\|_1 	.				\label{If_L1L1.eq}
\end{align} 
Note how this does not work for infinite-horizon problems. This setting however simplifies the statement of duality 
in the sequel. We finally note that the above also implies that 
 backwards integration is a bounded operator $\intb:\LInf\rightarrow\LInf$.

\subsubsection*{Matrix Operators} 

When dealing with Lyapunov equations and LMIs, we repeatedly encounter matrix-valued mappings of matrix arguments with terms
having the form 
\[
	\cL(X) ~:=~ AXB, 
\]
where $A$, $B$ and $X$ are of compatible dimensions. This is a linear operator $\cL:\R^{n\times m} \rightarrow \R^{p\times q}$ between 
finite vector spaces with the $\|.\|_1$ norm, and we can therefore speak of its adjoint. It is a simple exercise to show that the 
adjoint $\cL^\adj: \R^{p\times q} \rightarrow \R^{n\times m}$ is 
\be
	\cL^\adj(Y) ~=~ A^* Y B^* , 
  \label{cL_adj.eq}
\ee
where $A^*$ denotes matrix transpose. 

For a time-varying version of this operator, a
 similar exercise gives the adjoint for the following operator on matrix-valued functions 
 $\cL:\LOne \rightarrow \LOne$ 
\be
	\big( \cL(X) \big)(t) := A(t)~X(t) ~B(t), 
	\hstm 
	\big( \cL^\adj(Y) \big)(t) = A^*(t)~Y(t) ~B^*(t),
	\hstm 
	t\in (0,\sT), 
  \label{cL_t_adj.eq}	
\ee
where $\cL^\adj:\LInf \rightarrow \LInf$, and we require the functions $A(.)$ and $B(.)$ to be uniformly bounded. 
Thus the two matrix operators defined in~\req{mat_op_LQP} have as adjoints 
\be
	\cE^\adj\lbb  Y  \rbb := \Ems  \left[ Y \right]   \Em, 
	\hstm 
	\cA^\adj\lbb Y \rbb := \bbm A^* \\ B^* \ebm  \left[ Y \right]  \Em   + 
							\Ems  \left[ Y \right]  	 \bbm A & B \ebm  , 
  \label{EA_adjoints.eq} 
\ee
where the dependence on $t$ is suppressed, and the matrix dimensions are illustrated to contrast with those
of $\cE$ and $\cA$ in~\req{mat_op_LQP}.

\subsection{Differential and Integral Equations and their Adjoints} 								\label{adjoint_systems.sec}

Consider a differential equation of the form 
\be
	\cE\big(\Xd(t) \big) ~=~ \cA\big(X(t) \big) ~+~ \sfW(t), 
	\hstm\hstm 
	t\in[0,\sT], 
	\hstm 
	\cE\big( X(0) \big)  = \sfX_\rmi, 
  \label{abst_sys_E.eq} 
\ee
where $X$ is a vector or a matrix, or more generally an element of a vector space. $\cE$ and $\cA$ are linear operators on matrices, 
and $\sfW(.)$ and $\sfX_\rmi$ are given forcing function and initial condition respectively. If the operator $\cE$ is non-invertible, 
then this equation belongs to the so-called ``descriptor system'' representation, and will typically have non-unique 
solutions\footnote{Note that if $\cE$ is not invertible, then for example $X(0)$ and $\sfX_\rmi$ may not have the same dimensions.} .  
The invertibility of  $\cE$ 
is irrelevant to the material we present next. 

We will need to rewrite~\req{abst_sys_E} as an abstract linear constraint in a function space. The simplest way to do 
this is to recast it as an integral equation by 
integrating both sides of~\req{abst_sys_E} and using the fundamental theorem of calculus
\begin{align} 
	\int_{0}^{t} \cE \!\lb \dot{X}(\tau) \rb d\tau ~&=~ 
	\int_{0}^{t}	\bbm I & 0 \ebm \dot{X}(\tau) \bbm I \\ 0 \ebm d\tau 
	~=~ 
	\cE\big(X(t)\big) ~-~ \cE\big( X(0) \big) 											\nonumber		\\
	\Rightarrow \hstm 
	\cE\big(X(t)\big) ~-~ \cE\big( X(0) \big) 
	~&=  \smint{0}{t} \cA \big( X(\tau) \big) ~d\tau ~+~ \smint{0}{t} \sfW (\tau) ~d\tau 			 	\nonumber	\\
	\Rightarrow \hstm 
	\cE\big(X(t)\big) ~-  \smint{0}{t} \cA \big( X(\tau) \big) ~d\tau 
	~&=   \smint{0}{t} \sfW (\tau) ~d\tau 	~+~ \sfX_\rmi		 							\hstm\hstm t\in[0,\sT] . 
  \label{abst_int.eq} 
\end{align} 
This equation can be expressed  in operator notation using the forward integration operator $\intf$ 
\be
	\big( \cE  ~-~\intf \cA  \big) (X)  ~=~ \intf ~\sfW ~+~ \heavi \sfX_\rmi , 
  \label{sys_int_op.eq} 
\ee
where we utilized a slight abuse of notation with 
\[
	\big( \cE (X)\big) (t) ~:=~ \cE\big( X(t) \big), 
	\hstm 
	\big( \cA(X) \big)(t) ~:=~ \cA\big(X(t) \big) , 
	\hstm\hstm 
	t\in[0,\sT] .
\]
Equation~\req{sys_int_op} is a  linear constraint on the function $X$. The right hand side is a fixed given 
function, and the left hand side is a linear operator acting on $X$. The adjoint of this linear operator plays an important 
role in the sequel
\be
	\lb \cE  - \intf \cA  \rb^\adj ~=~ \cE^\adj - \cA^\adj  \intf^\adj ~=~ \cE^\adj -    \cA^\adj \intb . 
  \label{int_eq_EAX.eq}
\ee	
Note that while the original operator involved forward integration, the adjoint operator involves backwards integration. 


We will utilize the operator~\req{sys_int_op} when acting on functions $X\in\LOne$. Recall from the 
bound~\req{If_L1L1} that $\intf:\LOne\rightarrow\LOne$, and therefore 
$\lb\cE+\intb\cA\rb:\LOne\rightarrow\LOne$ is a bounded operator. 
Similarly $\lb \cE  - \intf \cA  \rb^\adj:\LInf\rightarrow \LInf$ is a bounded operator. 


\subsection{Positive Matrices and their Orthogonality}

The subset (not a subspace) of positive matrices
$\bbS_n$ is denoted by 
\[
	\bbP_n ~:=~ \lcb M\in\bbS_n; ~M\geq 0 \romn \rcb . 
\]
This set has the structure of a ``cone'', for which the formal definition is as follows. 

\begin{definition} 
Let $\sfV$ be a vector space. A set $\sfP\subset\sfV$ is called a \deffont{cone} if it contains all positive scalings of its 
elements
\[
	\forall \alpha\geq 0, 
	\hstm 
	v\in\sfP
	~ \Rightarrow ~ 
	\alpha v\in\sfP. 
\]
Thus a cone is a collection of one-sided  ``rays'' in $\sfV$. We will assume all cones to be closed and
 ``pointed'', i.e. $(-\sfP)\cap\sfP=0$. 
$\sfP$ is called a \deffont{convex cone} if in addition it is a convex set. 
\end{definition} 

\begin{example} 
	The set of all positive matrices 
	\[
		\bbP_n ~:=~ \lcb M\in\bbS_n; ~M\geq 0 \romn \rcb , 
	\]
	is  clearly   a convex cone in $\bbS_n$. 
	In contrast, the set of all positive, rank-one matrices 
	\[
		\bbP_n^1 ~:=~ \lcb M\in\bbS_n; ~M\geq 0, ~\rank{M}=1  \romn \rcb 
	\]
	is a cone, but not a convex set in $\bbS_n$. In fact, $\bbP_n$ is the convex hull of $\bbP_n^1$. This can be seen 
	from the  dyadic decomposition 
	of a positive matrix, which can be  written as a convex combination of rank-one 
	matrices\footnote{The dyadic decomposition of a positive matrix is  
	the sum of rank-one positive matrices (each being the  outer product of each eigenvector with itself), 
	multiplied by the eigenvalues, all of which are non-negative. Renormalizing each term by the sum of the eigenvalues, 
	the dyadic sum becomes a convex combination of positive, rank-one matrices.}.
	
	Geometrically, the set of all {\em positive-definite} matrices forms the interior of $\bbP_n$. The set of all matrices in 
	$\bbP_n$ with rank $r<n$ lie at the boundary of $\bbP_n$. In particular, 
	$\bbP_n^1$ forms part of the boundary of $\bbP_n$, but it is ``big enough'' that its convex hull is the entirety of the 
	cone $\bbP_n$. 
\end{example}

Recall that any positive matrix  has a symmetric  
factorization\footnote{This is not to be confused with the Cholesky factorization, in which the matrix $U$ is required 
	to be lower (upper) triangular. In  symmetric factorizations, the matrix $U$ is not required to have any special 
	structure.}  
of the form
\be
	M~=~ UU^*.  			
  \label{f_rank_fact.eq}
\ee
It follows that $\Ims{M}=\Ims{UU^*}=\Ims{U}$, i.e. 
 the image space $\Ims{M}$ of $M$ is the same as the image 
space $\Ims{U}$. 

Let 
$M_1=U_1 U_1^*$ and $M_2=U_2U_2$ be two positive 
matrices with respective symmetric  factorizations, 
and suppose they are orthogonal, i.e. $\inprod{M_1}{M_2}=0$. Then
\begin{align*} 
	0 ~&=~ \inprod{M_1}{M_2} ~=~ \trcc{M_1M_2 \romn} 
	~=~  \trcc{U_1 U_1^* U_2 U_2^* \romn} 
	~=~  \trcc{U_2^* U_1 U_1^* U_2 \romn} 				\\
	~&=~ \inprod{U_1^*U_2}{U_1^*U_2} 				
	~=~ \left\| U_1^* U_2 \right\|_F^2 .
\end{align*} 
This last statement means that $U_1^*U_2=0$. To appreciate this better, indicate the dimensions 
\[
	\widemat{U_1^*}  \tallmat{U_2} ~=~ \bbm 0 \ebm . 
\]
Thus the columns of $U_1$ are orthogonal to the columns of $U_2$, and the same is true for the image spaces of $M_1$ 
and $M_2$. This statement also constrains the ranks of $M_1$ and $M_2$. 
\begin{lemma} 															\label{pos_orth_rank.lemma} 
	Let $M_1,M_2\geq 0$ be positive matrices of the same dimension.  
	 If they are orthogonal $\inprod{M_1}{M_2}:=\trcc{M_1M_2}=0$, 
	then their image spaces are orthogonal
	\be
		\Ims{M_1} \perp \Ims{M_2} 
		\hstm \Rightarrow \hstm 
		\rank{M_1} + \rank{M_2} ~\leq~ n . 
	  \label{PM_ortho.eq}
	\ee
	 In particular, if $M_1$ is full rank, then $M_2=0$. 
	 
	 In addition, if $M_1=U_1U_1^*$ and $M_2=U_2U_2^*$ are any symmetric factorizations, then 
	  $U_1^*U_2=0$. 
\end{lemma}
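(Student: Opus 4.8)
The plan is to leverage the symmetric factorization recorded in~\req{f_rank_fact} together with the trace computation already displayed immediately before the lemma statement. First I would write $M_1 = U_1 U_1^*$ and $M_2 = U_2 U_2^*$ for \emph{arbitrary} symmetric factorizations, and recall the observation following~\req{f_rank_fact} that $\Ims{M_i} = \Ims{U_i}$, so that the column spaces of the factors carry exactly the image-space information I need.

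The heart of the argument is the trace manipulation shown above the statement. Expanding $\trcc{M_1 M_2} = \trcc{U_1 U_1^* U_2 U_2^*}$, applying a cyclic permutation inside the trace to regroup it as $\trcc{U_2^* U_1 U_1^* U_2} = \inprod{U_1^* U_2}{U_1^* U_2} = \|U_1^* U_2\|_F^2$, and invoking orthogonality $\inprod{M_1}{M_2} = 0$, I conclude that $\|U_1^* U_2\|_F = 0$, hence $U_1^* U_2 = 0$. This is the final claim of the lemma, and crucially the computation uses nothing about the particular factorizations chosen, so the conclusion holds for any symmetric factorizations.

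Next I would translate $U_1^* U_2 = 0$ into the geometric statement. This identity says that every column of $U_1$ is orthogonal to every column of $U_2$; since $\Ims{M_1}$ is the span of the columns of $U_1$ (and likewise $\Ims{M_2}$ for $U_2$), this gives $\Ims{M_1} \perp \Ims{M_2}$. For the rank bound I would note that two orthogonal subspaces of $\R^n$ meet only in $\{0\}$ (a common vector is orthogonal to itself, hence zero) and that their sum is again a subspace of $\R^n$, so their dimensions add to at most $n$. Since $\rank{M_i} = \dim \Ims{M_i}$, this yields $\rank{M_1} + \rank{M_2} \leq n$. The full-rank corollary is then immediate: if $\rank{M_1} = n$, then $\rank{M_2} \leq 0$, forcing $M_2 = 0$.

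I expect no serious obstacle; this is elementary linear algebra once the factorization is in hand. The only point deserving care is the phrase ``any symmetric factorizations'' in the statement: rather than proving $U_1^* U_2 = 0$ for one canonical choice (say the eigendecomposition) and then arguing invariance under a change of factor, I would emphasize that the vanishing of $\|U_1^* U_2\|_F$ follows directly from the hypothesis $\inprod{M_1}{M_2}=0$ and is therefore independent of which factors are used.
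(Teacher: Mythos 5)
Your proposal is correct and follows essentially the same route as the paper: the cyclic-permutation trace identity $\trcc{U_1U_1^*U_2U_2^*}=\|U_1^*U_2\|_F^2$ forcing $U_1^*U_2=0$ for arbitrary symmetric factorizations, then reading off orthogonality of the image spaces via $\Ims{M_i}=\Ims{U_i}$ and the dimension count for the rank bound. The only difference is cosmetic — you spell out the argument that orthogonal subspaces intersect trivially and hence have dimensions summing to at most $n$, a step the paper leaves implicit.
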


This lemma has an interesting geometric interpretation which 
 should be contrasted with standard vector orthogonality. For a vector $v\in\sfV$ to be zero, it has 
to be orthogonal to {\em all} other vectors
\[
	\forall w\in\sfV ~\inprod{w}{v} = 0 
	\hstm \Leftrightarrow \hstm 
	v=0 .
\]
For a positive matrix $V\geq 0$ however, it suffices to show that if it is orthogonal to a {\em single positive-definite} 
matrix, then it must be zero 
\[
	Q>0~\mbox{and}~ \inprod{Q}{V} = 0 
	\hstm \Rightarrow \hstm 
	V = 0, 
\]
This follows from~\req{PM_ortho}. Since $Q>0$ has full rank, then $V$ must have rank $0$. 

The set of positive-definite matrices forms the {\em interior} of the cone $\bbP_n$. Thus no two non-zero matrices in the interior 
of $\bbP_n$ can be orthogonal to each other, and no non-zero positive matrix can be orthogonal to 
any matrix in the interior of $\bbP_n$. 
{\em If two positive matrices are orthogonal, then they both must lie on the boundary of the cone $\bbP_n$}, i.e. neither can be 
positive definite. 
The reader should visualize the positive orthant in $\R^2$ or $\R^3$ (which are also cones in $\R^n$)  for a 
geometric interpretation of this fact. 

\begin{lemma} 														\label{pos_orth_function.lemma} 
	Let $M\in\LOne(0,\sT)$ and $H\in\LInf(0,\sT)$ (where $\sT$ is possibly $\infty$), and let  
	$0\leq M(t)=U_1(t)U_1^*(t)$ and $0\leq H(t)=U_2(t)U_2^*(t)$ be symmetric factorizations. Then 
	\[
		0=
		\inprod{H}{M} = \smint{0}{T} \trcc{H(t) ~M(t)	\rom} ~dt ~=~ 0 
		\hstm \Rightarrow \hstm 
		U_1^*(t)U_2(t) =0, ~ ~t\in(0,\sT) ~\mbox{a.e.}. 
	\]
\end{lemma}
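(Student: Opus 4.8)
The plan is to reduce the function-space claim to a pointwise application of Lemma~\ref{pos_orth_rank.lemma}, exploiting that the scalar integrand is nonnegative. First I would note that for each fixed $t$ both $M(t)\geq 0$ and $H(t)\geq 0$, so repeating the short computation that precedes Lemma~\ref{pos_orth_rank.lemma} shows that the pointwise integrand is a genuine nonnegative quantity:
\[
	\trcc{H(t)M(t)} = \trcc{U_2(t)U_2^*(t)\,U_1(t)U_1^*(t)} = \| U_1^*(t)U_2(t) \|_F^2 ~\geq~ 0 .
\]
Moreover, by the trace-duality inequality of Section~\ref{prelim.sec} we have $0\leq \trcc{H(t)M(t)} \leq \|H(t)\|_\infty\,\|M(t)\|_1 \leq \|H\|_\infty\,\|M(t)\|_1$, so the scalar function $t\mapsto\trcc{H(t)M(t)}$ is integrable on $(0,\sT)$ precisely because $M\in\LOne$ and $H\in\LInf$.

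Second, the hypothesis $\smint{0}{T}\trcc{H(t)M(t)}\,dt=0$ then asserts that a nonnegative $\LOne$ function has vanishing integral, which forces $\trcc{H(t)M(t)}=0$ for a.e. $t\in(0,\sT)$. I would fix a full-measure set $E\subset(0,\sT)$ on which this holds. For each $t\in E$ the matrices $M(t)$ and $H(t)$ are orthogonal positive matrices, so the final conclusion of the finite-dimensional Lemma~\ref{pos_orth_rank.lemma} applies verbatim and yields $U_1^*(t)U_2(t)=0$. Since this holds for every $t\in E$ and $E$ has full measure, the stated a.e. identity follows.

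The only real obstacle is measure-theoretic bookkeeping rather than anything algebraic. I must confirm that $t\mapsto\trcc{H(t)M(t)}$ is measurable (which follows from the measurability of $H$ and $M$ as matrix-valued functions, since trace and matrix multiplication are continuous) and integrable (from the bound above), so that the elementary fact ``a nonnegative integrable function with zero integral vanishes a.e.'' is applicable; and I must ensure the single exceptional null set $(0,\sT)\setminus E$ can be chosen once and for all, outside of which every pointwise conclusion holds simultaneously. Both points are routine, and once they are in place the result is an immediate corollary of Lemma~\ref{pos_orth_rank.lemma}.
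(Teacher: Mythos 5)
Your proof is correct and follows essentially the same route as the paper's: write the integrand as a pointwise nonnegative quantity, conclude it vanishes almost everywhere since its integral is zero, and then invoke the finite-dimensional Lemma~\ref{pos_orth_rank.lemma} pointwise. Your added observation that $\trcc{H(t)M(t)} = \|U_1^*(t)U_2(t)\|_F^2$ in fact makes the final appeal to that lemma redundant, and your measurability remarks fill in bookkeeping the paper leaves implicit, but the substance is identical.
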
 
\begin{proof} Using the symmetric factorization of say $H(.)$ 
\[
	0 = 
	\inprod{H}{M} = 
	\smint{0}{T} \trcc{U_2(t)U_2^*(t) ~M(t)	\romn} ~dt
	=
	\smint{0}{T} \trcc{U_2^*(t) ~M(t)~ U_2(t) 	\romn} ~dt. 
\]
The integrand is a non-negative function, and therefore if its integral is zero, it must be zero almost everywhere in $[0,\sT]$. 
Furthermore, for each $t$ 
\[
	 \trcc{H(t) ~M(t)	\romn}  = 0  
	 \hstm \Rightarrow \hstm 
	 U^*_1(t) U_2(t) = 0 
\]
by Lemma~\ref{pos_orth_function.lemma}. 	
\end{proof} 

\subsection{Linear-Conic Duality} 											\label{C_duality.sec}

\begin{figure}[t] 
		\centering
		\includegraphics[width=0.9\textwidth]{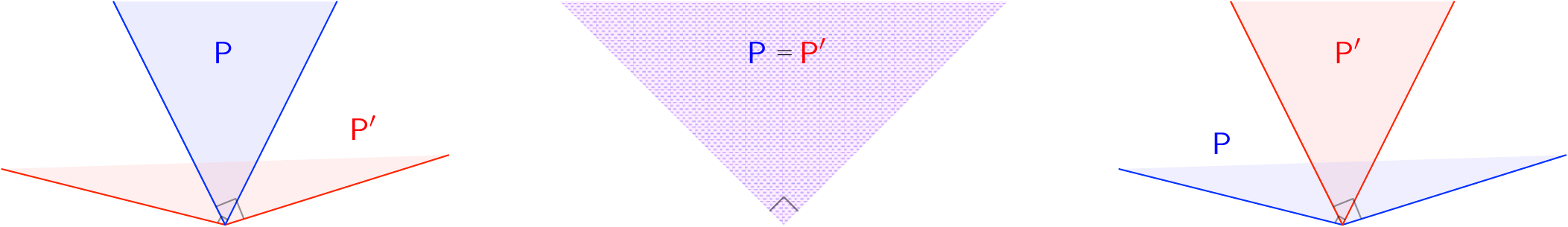}
		
		\mycaption{Examples of several cones $\sfP$ and their duals $\sfP'$ in $\R^2$. Duality is 
			with respect to the standard Euclidean  inner product. If the cone is a ``right-angled wedge'' (center figure), 
			then it is equal 
			to its dual. Otherwise the narrower the cone's angle, the larger is the dual cone's angle as the
			relation~\req{dual_cone_contain} implies. 
			} 
	 	\label{cones_dual.fig} 	
\end{figure} 

The main tool used in this paper is that of {\em linear-conic duality}. To introduce this concept, we first 
define the notion of a dual cone. 
\begin{definition} 
	Let $\sfP\subset \sfV$ be a (not necessarily convex) 
	cone in a  vector space $\sfV$. Its \deffont{dual cone} $\sfP'\subset\sfV'$ is the set of all
	 linear functionals\footnote{If $\sfV$ is a Banach space, then $V'$ is the space of all {\em bounded} linear functionals.} 
	 that are positive on all of $\sfP$ 
	\be
		\sfP' ~:=~ \lcb \romn H\in\sfV' ; ~\forall M\in\sfP, ~ \inprod{H}{M} \geq 0 \rcb .
	  \label{dual_cone.eq}
	\ee
\end{definition} 
Figure~\ref{cones_dual.fig} shows examples of several cones in $\R^2$. Some cones are ``self-dual'', i.e. $\sfP'=\sfP$, but 
most cones in $\R^n$ are not. An intuitive picture to keep in mind is that the smaller the cone is, the larger is its dual 
since the requirement~\req{dual_cone} that elements of $\sfP'$ have to satisfy is less stringent the smaller $\sfP$ is. 
Similarly, the larger a cone is, the smaller is its dual. More precisely, it is easy to show from the definition that 
\be
	\sfP_1 ~\subseteq ~ \sfP_2 
	\hstm \Leftrightarrow \hstm 
	\sfP_2' ~\subseteq ~\sfP_1' . 
  \label{dual_cone_contain.eq}
\ee

\begin{example} 	
	Any positive rank-one matrix can be written as $M=vv^*$ for some vector $v$. Thus another characterization of 
	the set of positive rank-one matrices is  
	\[
		\bbP_n^1 ~=~ \lcb \romn M~=~ vv^*; ~v\in\R^n \rcb . 
	\]
	Thus  a matrix $H$ in the cone dual to $\bbP_n^1$ must satisfy 
	\[
		\forall M\in\bbP_n^1 , ~ 
		\trcc{HM}\geq 0
		\hstm \Leftrightarrow \hstm 
		\forall v\in\R^n , ~ 
		\trcc{H vv^*} = v^*Hv \geq 0 
		\hstm \Leftrightarrow \hstm 
		H\geq 0 , 
	\]
	i.e. $H\in{\bbP_n^1}'$ iff $H\in\bbP_n$. 
	
	Now to calculate the cone dual to $\bbP_n$, recall the symmetric factorization 
	of any positive matrix $M=UU^*$ for some matrix $U$. A 
	 matrix $H$ in the cone dual to $\bbP_n$ then must satisfy 
	\[
		\forall M\in\bbP_n , ~ 
		\trcc{HM}\geq 0
		\hstm \Leftrightarrow \hstm 
		\forall U , ~ 
		\trcc{H UU^*} = \trcc{U^*HU} \geq 0 
		\hstm \Leftrightarrow \hstm 
		H\geq 0 .  
	\]
	We therefore conclude that {\em the dual of either $\bbP_n^1$ or $\bbP_n$ is $\bbP_n$}. In particular the set of all positive matrices 
	is its own dual. 
\end{example}

	An interesting observation from the previous calculation is that the dual of the {\em non-convex} cone $\bbP_n^1$  is the 
	{\em convex cone} $\bbP_n$. This turns out to be true in general. 
	\begin{lemma} 
		Let $\sfP\subset\sfV$ be a (not-necessarily convex) cone. Its dual $\sfP'$ is a convex cone in $\sfV'$. 
		Furthermore, if $\sfV$ 	
		is finite dimensional, then the ``dual of the dual'' is the convex hull of the original cone  
		\[
			\lb \sfP'\rb' ~=~ \Conv{\sfP} 
		\]
	\end{lemma}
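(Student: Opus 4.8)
The plan is to establish the two assertions separately: first that $\sfP'$ is a closed convex cone, and then the bipolar identity $(\sfP')'=\Conv{\sfP}$ by a separation argument.

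For the first claim I would verify the defining properties directly from~\req{dual_cone}. Conicity is immediate, since for $H\in\sfP'$ and $\alpha\geq 0$ we have $\inprod{\alpha H}{M}=\alpha\inprod{H}{M}\geq 0$ for every $M\in\sfP$. Convexity follows from linearity of the pairing: for $H_1,H_2\in\sfP'$ and $\theta\in[0,1]$, the functional $\theta H_1+(1-\theta)H_2$ pairs nonnegatively with every $M\in\sfP$. Closedness comes for free by writing $\sfP'=\bigcap_{M\in\sfP}\{H:\inprod{H}{M}\geq 0\}$ as an intersection of closed half-spaces. The key point is that convexity of $\sfP'$ holds regardless of whether $\sfP$ itself is convex, which is precisely the phenomenon already observed for the non-convex cone $\bbP_n^1$ whose dual is the convex cone $\bbP_n$.

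For the second claim I would prove the two inclusions of $(\sfP')'=\Conv{\sfP}$, reading the right-hand side as the (closed) convex hull. The easy inclusion $\Conv{\sfP}\subseteq(\sfP')'$ rests on the observation that $\sfP\subseteq(\sfP')'$ trivially: any $M\in\sfP$ pairs nonnegatively with every $H\in\sfP'$ by the very definition of $\sfP'$. Since the first claim (applied to $\sfP'$) shows that $(\sfP')'$ is a closed convex cone, and a closed convex set containing $\sfP$ must contain its closed convex hull, the inclusion follows. The nontrivial inclusion $(\sfP')'\subseteq\Conv{\sfP}$ I would argue by contraposition using separation. If $x\notin\Conv{\sfP}$, then since $\Conv{\sfP}$ is a nonempty closed convex set in the finite-dimensional space $\sfV$, the separating hyperplane theorem yields a functional $H$ and scalar $c$ with $\inprod{H}{x}<c\leq\inprod{H}{M}$ for all $M\in\Conv{\sfP}$. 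Evaluating at $0\in\Conv{\sfP}$ gives $c\leq 0$, hence $\inprod{H}{x}<0$; evaluating at $\alpha M$ for $M\in\sfP$ and letting $\alpha\to\infty$ forces $\inprod{H}{M}\geq 0$, so $H\in\sfP'$. But then $\inprod{H}{x}<0$ shows $x\notin(\sfP')'$, proving the contrapositive.

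The main obstacle lies entirely in the separation step, which is the only place where finite-dimensionality is genuinely used (both to separate a point from the closed convex hull and to identify $\sfV''$ with $\sfV$). It also requires $\Conv{\sfP}$ to be closed, so that a point outside it can be \emph{strictly} separated. In the concrete cases of interest, such as $\sfP=\bbP_n^1$ whose convex hull is the closed cone $\bbP_n$, this causes no difficulty; for a general closed cone the convex hull need not be closed, which is why the cleanest reading of the conclusion is in terms of the closed convex hull.
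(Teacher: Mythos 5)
Your proof is correct, and it goes well beyond what the paper itself records: the paper leaves the first statement as an exercise and, for the bipolar identity, offers only the easy half (that $\lb\sfP'\rb'$ is a convex cone containing $\sfP$, hence contains $\Conv{\sfP}$), stated as ``intuition'' rather than proof. You supply exactly that easy inclusion, but then close the gap the paper skips: the reverse inclusion $\lb\sfP'\rb'\subseteq\Conv{\sfP}$ via strict separation of a point from the closed convex hull, using $0\in\Conv{\sfP}$ to force $c\leq 0$ and the conic scaling $\alpha M$, $\alpha\to\infty$, to certify that the separating functional lies in $\sfP'$. This is the standard bipolar-theorem argument, and it is the only place finite-dimensionality (and reflexivity $\sfV''\cong\sfV$) genuinely enters, as you note. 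Your closing caveat is also a real observation, not pedantry: even for a closed pointed cone $\sfP$, the set $\Conv{\sfP}$ need not be closed (e.g.\ the convex hull of the union of a second-order cone and a suitably tilted ray in $\R^3$ is not closed), so the identity as printed is only valid with $\Conv{\sfP}$ read as the \emph{closed} convex hull, or under the additional hypothesis that the convex hull is closed --- which does hold in the case the paper actually uses, since $\Conv{\bbP_n^1}=\bbP_n$ is closed. In short: same skeleton as the paper's sketch for the trivial direction, plus a complete and correctly qualified treatment of the direction the paper never proves.
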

	The proof of the first statement is immediate, and is left as an exercise. A good intuition for the second statement is  
	provided by observing that $\lb \sfP'\rb'$ must contain $\sfP$, but also must be convex (since it's a dual), and therefore 
	must at least contain the convex hull of $\sfP$. 
	
	Another simple, but important, property we require is a statement about the image of sets and their convex hulls 
	under linear functionals. Let $\sfS\subset\sfV$ be any subset of a vector space $\sfV$, then the {\em set-of-values over $\sfS$} 
	of a linear functional $\inprod{\calQ}{.}:\sfV\rightarrow\R$ satisfies 
	\[
		\Conv{ \lcb \inprod{\calQ}{X}; ~X \in \sfS \rom\rcb \rom} 
		~=~ 	
		 \lcb \inprod{\calQ}{X}; ~X \in \Conv{\sfS} \rom \rcb, 
	\]
	i.e. the convex hull of the set-of-values over $\sfS$  is the set-of-values over the convex hull of $\sfS$. 
	This statement follows immediately 
	from the linearity of the functional. In particular, the extrema of any  linear functional over non-convex sets are the 
	same as the extrema over their convex hulls 
	\be
		\inf_{X\in\sfS} \inprod{\calQ}{X} ~=~ \inf_{X\in\Conv{\sfS}} \inprod{\calQ}{X} , 
		\hstm \hstm 
		\sup_{X\in\sfS} \inprod{\calQ}{X} ~=~ \sup_{X\in\Conv{\sfS}} \inprod{\calQ}{X} . 		
	 \label{opt_conv_hull.eq}
	\ee
	This is useful when checking  extremal values of linear functionals over non-convex cones such as $\bbP_n^1$.

	We are now ready to state the linear-conic duality theorem, which is a minor modification of the general 
	topological vector space case as stated in~\cite{shapiro2001duality}.
	\begin{theorem}[Weak Linear-Conic Duality]									\label{conic_duality.thm} 
		Let $\cL:\sfV \rightarrow \sfV_\rme$  
		be a bounded  linear operator between  Banach spaces, and  let 
		$\sfP\subset\sfV$   be a (not necessarily convex) cone. 
		Denote by $\cL^\adj$, and $\sfP'\subset\sfV'$ the adjoint and dual cone respectively.  
		The following optimization 	problems are  duals  
        		\be
			\hspace*{-1em}
        			\inf_{\scriptsize \begin{tabular}{r} 	$\cL(X)=B$ \\ $X\in\sfP$  \end{tabular}} \inprod{Q}{X} 
        			\hstm \geq~ 
        			 \sup_{\scriptsize \begin{tabular}{r} $Q-\cL^\adj(Y)\in\sfP'$ \\ $Y\in V_\rme '$ \end{tabular}} \inprod{Y}{B}  		
        			 ~= 
        			 \ninf_{\scriptsize \begin{tabular}{r} $Q+\cL^\adj(Z)\in\sfP'$ \\ $Z\in V_\rme '$ \end{tabular}} \inprod{Z}{B} , 		
        		\ee 
                    If in addition there exists feasible $\Xba$ and $\Yba$ (or $\Xba$ and $\Zba$) that satisfy the 
                    		 complementary slackness (alignment) condition 
                    		\begin{align}
                    			\inprod{Q-\cL^\adj \big( \Yba \big) \rome }{\Xba} = 0
					\hstm \hstm 
					\big( \mbox{or}~\inprod{Q+\cL^\adj \big( \Zba \big) \rome }{\Xba} = 0	\big),
                    		  \label{align_cond.eq}
                    		\end{align} 
                    		then the two objectives are equal, and $\Xba$, $\Yba$ (or $\Xba$, $\Zba$) 
				are optimal for the respective problems. 
	\end{theorem}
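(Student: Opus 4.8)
The plan is to extract the entire statement from a single pairing identity, then take infima and suprema, and finally use the alignment condition to squeeze the duality gap to zero. First I would fix any primal-feasible $X$ (so that $\cL(X)=B$ and $X\in\sfP$) and any dual-feasible $Y$ (so that $Q-\cL^\adj(Y)\in\sfP'$) and compute
\[
	\inprod{Q}{X} - \inprod{Y}{B}
	~=~ \inprod{Q}{X} - \inprod{Y}{\cL(X)}
	~=~ \inprod{Q}{X} - \inprod{\cL^\adj(Y)}{X}
	~=~ \inprod{Q-\cL^\adj(Y)}{X},
\]
where the first equality uses $B=\cL(X)$ and the second is the defining relation of the adjoint (the pairing on the left living in $\sfV_\rme,\sfV_\rme'$ and the one on the right in $\sfV,\sfV'$). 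Since $X\in\sfP$ while $Q-\cL^\adj(Y)$ lies in the dual cone $\sfP'$, the defining property~\req{dual_cone} of $\sfP'$ gives $\inprod{Q-\cL^\adj(Y)}{X}\geq 0$, hence $\inprod{Q}{X}\geq\inprod{Y}{B}$ for every such pair. Taking the infimum over primal-feasible $X$ on the left and the supremum over dual-feasible $Y$ on the right then yields the weak duality inequality (and it holds even vacuously when either feasible set is empty, under the usual $\pm\infty$ conventions).

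Next I would dispatch the equivalence of the two right-hand expressions by the substitution $Z=-Y$: the constraint $Q-\cL^\adj(Y)\in\sfP'$ becomes $Q+\cL^\adj(Z)\in\sfP'$, and $\inprod{Y}{B}=-\inprod{Z}{B}$, so the supremum over $Y$ equals the negative of the infimum over $Z$, which is exactly $\ninf$. This is a bookkeeping step with no analytic content.

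For the second statement, suppose $\Xba$ and $\Yba$ are feasible and satisfy the alignment condition~\req{align_cond}. Applying the identity above to this particular pair gives $\inprod{Q}{\Xba}-\inprod{\Yba}{B}=\inprod{Q-\cL^\adj(\Yba)}{\Xba}=0$, so the two objective values coincide at $\Xba,\Yba$. Combining this with weak duality and the trivial bounds coming from feasibility produces the chain
\[
	\inprod{\Yba}{B}
	~\leq~ \sup_{Y} \inprod{Y}{B}
	~\leq~ \inf_{X} \inprod{Q}{X}
	~\leq~ \inprod{Q}{\Xba}
	~=~ \inprod{\Yba}{B},
\]
which forces every inequality to be an equality. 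In particular the primal and dual optimal values agree (the duality gap is zero), $\Xba$ attains the primal infimum, and $\Yba$ attains the dual supremum; the parenthetical $(\Xba,\Zba)$ version follows by the same $Z=-Y$ substitution.

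The argument is essentially contained in the opening pairing identity, so I do not expect a serious obstacle. The only point requiring care is the \emph{bookkeeping of the dualities} --- keeping straight that $\inprod{Y}{\cL(X)}$ is the pairing of $\sfV_\rme'$ with $\sfV_\rme$ while $\inprod{\cL^\adj(Y)}{X}$ is the pairing of $\sfV'$ with $\sfV$ --- together with the (given) boundedness of $\cL$, which guarantees that $\cL^\adj$ exists as a bounded operator and that all of these pairings are finite.
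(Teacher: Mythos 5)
Your proposal is correct and follows essentially the same route as the paper's proof: the same pairing identity $\inprod{Q}{X}-\inprod{Y}{B}=\inprod{Q-\cL^\adj(Y)}{X}\geq 0$ via the constraint, the adjoint, and the dual-cone property, followed by the $Z=-Y$ substitution and the observation that alignment collapses the inequality to equality. The only cosmetic difference is that you make the optimality of $\Xba$ and $\Yba$ explicit through the sandwich chain, which the paper leaves implicit.
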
 
	\begin{proof} 		
		The following two functionals are equal whenever $X$ satisfies $B-\cL(X)=0$ 
		\begin{align*} 
			\inprod{Q}{X} 
				~&=~ \inprod{Q}{X} + \inprod{Y}{B-\cL(X) \romn} 
				  ~=~ \inprod{Q}{X} + \inprod{Y}{B} - \inprod{Y}{\cL(X) \romn}			\\
				  ~&=~ \inprod{Q}{X} + \inprod{Y}{B} - \inprod{\cL^\adj(Y)}{X \romn}	
				~=~  \inprod{Q-\cL^\adj(Y) \romn}{X} + \inprod{Y}{B} , 
		\end{align*} 
		where $Y\in\sfV_\rme'$ represents any linear functional. Thus their infima over 
		any subset of that set are equal. In particular 
		\be
			\inf_{\scriptsize	\begin{tabular}{r} $\cL(X)=B$ \\ $X\in\sfP$  \end{tabular}} 
				\inprod{Q}{X} 
			~=~ 
			\inf_{\scriptsize	\begin{tabular}{r} $\cL(X)=B$ \\ $X\in\sfP$  \end{tabular}} 
				\inprod{Q-\cL^\adj(Y) \romn}{X} + \inprod{Y}{B}.
		  \label{two_func_inf.eq}
		\ee
		Now if we restrict $Y$ such that $Q-\cL^\adj(Y)\in\sfP'$, then because $X\in\sfP$, we have
		$\inprod{Q-\cL^\adj(Y) \romn}{X} \geq 0$, and therefore 
		 a lower bound on the right hand side 
		\be
			\inf_{\scriptsize	\begin{tabular}{r} $\cL(X)=B$ \\ $X\in\sfP$  \end{tabular}} 
				\inprod{Q-\cL^\adj(Y) \romn}{X} + \inprod{Y}{B}
			~~\geq~~ 
				 \inprod{Y}{B}	, 
			\hstm 
			\mbox{for}~ Q-\cL^\adj(Y)\in\sfP'.
		  \label{two_func_ineq.eq}
		\ee
		Replacing the right hand side with its supremum over the set of constrained $Y$'s, and 
		replacing the left hand side with the left hand side of~\req{two_func_inf}, we get the desired inequality  
		\[
			\inf_{\footnotesize	\begin{tabular}{r} $\cL(X)=B$ \\ $X\in\sfP$  \end{tabular}} 
				\inprod{Q}{X} 			
			~~\geq~~ 
			\sup_{\footnotesize \begin{tabular}{r} $Q-\cL^\adj(Y)\in\sfP'$ \end{tabular}} \inprod{Y}{B} . 					
		\]
		Finally observe that if there is feasible pair $(\Xba,\Yba)$ satisfying the alignment condition~\req{align_cond}, 
		then the inequality~\req{two_func_ineq} becomes an equality, and the two problems have equal 
		 optimal values of the objectives. 
		 
		 The statement for the problem with the variable $Z$ follows from replacing $Y$ by $-Z$ in the 
		 supremum version of the dual 	 problem. 
	\end{proof}

%
%
%
%

\subsection{Schur Complements}

	Given any Hermitian matrix $M$, it defines a quadratic form  $\sfq(v):=\inprod{v}{Mv}$. The matrix is positive
	definite (semi-definite) if the quadratic form is positive (semi-definite) for all non-zero vectors. If the underlying 
	vector space is transformed by a linear transformation $v=Tw$, then 
	\[
		\sfq(v) ~=~ \inprod{v}{Mv} ~=~ \inprod{Tw}{MTw} ~=~ \inprod{w}{T^*MTw} .
	\]
	The transformation $M \mapsto T^*MT $ is a {\em congruence transformation} on the matrix $M$. 
	Thus under a change of basis, matrix representations of bilinear forms undergo congruence 
	transformations\footnote{This is in contrast to matrix representations of linear transformations, which undergo 
		similarity transformations.}. 
		Consequently, 
	congruence transformations preserve the sign definiteness of a Hermitian matrix. They also preserve its ``inertia'', 
	namely the triple of numbers indicating the number of negative, zero, and positive eigenvalues 
	respectively. 
	
	The well-known Schur complement can be understood as block-diagonalizatoin 
	by a congruence transformation on a $2\times 2$-block 
	partitioned matrix. Consider a Hermitian matrix $M$ partitioned as 
	\[
		M ~=~ \bbm M_{11} & M_\rmo \\ M_\rmo^* & M_{22} \ebm .  
	\]
	If $M_{22}$ is invertible, then $M$ can be block-diagonalized by the following congruence transformation 
	\be
		 \bbm M_{11} & M_\rmo \\ M_\rmo^* & M_{22} \ebm
		 ~=~ 
		 \bbm I & M_\rmo M_{22}^{-1} \\ 0 & I \ebm 
		 \bbm M_{11} - M_\rmo M_{22}^{-1} M_\rmo^* & 0 \\ 0 & M_{22} \ebm 
		 \bbm I & 0 \\ M_{22}^{-1} M_\rmo^* & I \ebm .
	  \label{Schur_BD.eq}
	\ee
	We can therefore immediately conclude that if $M_{22}>0$, then $M\geq 0$ iff the Schur complement 
	$M_{11}-M_\rmo M_{22}^{-1} M_\rmo^*\geq 0$. This is particularly useful if for example $M_{22}$ is fixed, 
	and the other blocks of $M$ contain a matrix variable such as might occur in a Linear Matrix Inequality (LMI). 
	In this case, positivity of the Schur complement gives another matrix inequality (a ``Riccati inequality'', 
	not an LMI) which can be more useful as we will 
	see in the next section.

\section{DLMIs, DRIs and DREs} 													\label{DLMIs.sec}

	In this section we explore the relationships between Differential Linear Matrix Inequalities (DLMIs), 
	Differential Riccati Inequalities (DRIs) and Differential Riccati Equations (DREs). 	
	To begin with, we recall some fundamental facts about the simplest of DLMIs, namely 
	Differential Lyapunov Inequalities, and then apply those to compare solutions of Riccati differential equations
	in terms of their coefficients' differences. 
	The exposition here is intended to be self contained.

	\subsubsection*{Differential Lyapunov Inequalities} 
	
	Consider a differential Lyapunov  inequality 
	of the form 
	\be
		\Xd + F^*X + X F ~\leq~ 0, 
		\hstm t\in[0,\sT], \hstm 
		\arraycolsep=2pt
		\begin{array}{rcl}  X(0) &=& 0, \\ \mbox{or}~X(\sT) &=& 0. \end{array} 
	  \label{Lyap_ineq_def.eq}
	\ee
	There are many solutions to such a differential inequality, but 
	they all have a sign definiteness depending on whether initial or final conditions are given. 
	The key to understanding the inequality~\req{Lyap_ineq_def}
	is to recast it  as an equality with a ``forcing term'' as follows 
	\begin{align}
		\Xd + F^*X + X F ~\leq~ 0
		\hstm &\Leftrightarrow \hstm 
		\Xd + F^*X + X F ~=~ -H, && H(t)\geq 0 	,				\nonumber		\\
		\hstm &\Leftrightarrow \hstm 
		-\Xd = F^*X + X F +  H  , && H(t)\geq 0 	,				\label{Diff_Lyap_Eq.eq}
	\end{align} 
	where $H$ is an arbitrary positive matrix-valued function. 
	Equation~\req{Diff_Lyap_Eq} is a differential Lyapunov equation 
	which has the following solutions\footnote{as can be verified by direct differentiation using Leibniz rule.} 
	depending on whether final or initial time conditions are given  
	\begin{align} 
		-\Xd = F^*X + X F +  H, \hsom X(0) = 0
			&&\Rightarrow \hstm 
			X(t) = - \smint{0}{t} \Phi^*(t,\tau) H(\tau) \Phi(t,\tau) ~d\tau , 				\label{Lyap_sol_H_n.eq}	\\
		-\Xd = F^*X + X F +  H, \hsom X(\sT) = 0
			&&\Rightarrow \hstm 
			X(t) =  \smint{t}{T} \Phi^*(t,\tau) H(\tau) \Phi(t,\tau) ~d\tau , 				\label{Lyap_sol_H_p.eq}
	\end{align} 
	where $\Phi$ is a state-transition matrix of $F$ such that  $(d/dt) \Phi(t,\tau) = \Phi(t,\tau) F(t)$. 
	
	Note that the definiteness of $X$ is determined by the definiteness of $H$ 
	in~\req{Lyap_sol_H_n} or~\req{Lyap_sol_H_p}. Combining this with~\req{Diff_Lyap_Eq}, we conclude that 
	\be
		\Xd + F^*X + X F ~\leq~ 0, 
		\hstm  
		\begin{array}{lcl} 
			X(0) = 0 & \hstm \Rightarrow \hstm  & X(t) \leq 0 , \hsom t\in[0,\sT],		\\
			X(\sT) = 0 & \hstm  \Rightarrow \hstm & X(t) \geq 0 , \hsom t\in[0,\sT]	.
		\end{array} 
	  \label{Lyap_diff_ineq_signs.eq}
	\ee 
	A useful intuition to keep in mind is that $\Xd\leq -F^*X-XF$ indicates that $X$ will start as negative if propagated forward 
	from $X(0)=0$. For backwards propagation from $X(\sT)=0$, the derivative ``looking backwards'' is $-\Xd$, and the 
	inequality $-\Xd \geq -F^*X-XF$ impies that $X$ should start as  positive when propagated backwards from zero final condition. 
	Finally, note that because a state-transition matrix is always full rank, the expressions~\req{Lyap_sol_H_n} 
	and~\req{Lyap_sol_H_p} imply inequalities can be replaced by strict inequalities in~\req{Lyap_diff_ineq_signs}.

	\subsubsection*{Differential Riccati Inequalities and Riccati Comparisons}

	By adding a quadratic and a constant term to the differential Lyapunov inequality, we obtain a 
	 Differential Riccati Inequality (DRI) which is of the form 
	\be
		\dot{\Lambda} + A^*\Lambda + \Lambda A 
			- \Lambda~ M~ \Lambda + Q   ~\geq~ 0, 
		\hstm \Lambda(\sT)=0.
	  \label{DRI_final.eq}
	\ee
	For  simplicity of exponsition, the
	 DRI here is stated for the inequality $\geq$ and a final condition. Similar DRIs can be obtained by using combinations 
	of $\geq,\leq$ and final or initial conditions. 
	
	The DRI~\req{DRI_final} has many solutions, but there is a maximal solution which has special properties as is described next. 
	In a manner similar to what was done for Lyapunov inequalities,  the DRI~\req{DRI_final} can be recast 
	 as a Differential Riccati Equation 
	(DRE) with an arbitrary positive  ``forcing'' term $H$ 
	\begin{align} 
		\dot{\Lambda} + A^*\Lambda + \Lambda A 
			- \Lambda~ M~ \Lambda + Q ~&=~ H    ~\geq~ 0, 
		\hstm \Lambda(\sT)=0, 												\nonumber		\\
		\Rightarrow \hstm 
		\dot{\Lambda} + A^*\Lambda + \Lambda A 
			- \Lambda~ M~ \Lambda + Q-H  ~&=~ 0, \hstm H\geq 0,  
		\hstm \Lambda(\sT)=0.												\label{DRE_H_pos.eq}
	\end{align} 
	Thus given any solution $\Lambda$  of the DRI~\req{DRI_final}, there exists a positive matrix-valued function $H$ such 
	that $\Lambda$ is a solution of the DRE~\req{DRE_H_pos} and vice versa.

	Some intuition for~\req{DRE_H_pos} can be obtained by rewriting it as follows 
	\[
		-\dot{\Lambda} ~=~  A^*\Lambda + \Lambda A 
			- \Lambda~ M~ \Lambda + Q-H,   \hstm H\geq 0,  
		\hstm \Lambda(\sT)=0.
	\]	
	Suppose the equation is solved ``backwards'' from the final condition over $[0,\sT]$. The quantity $-\Lambdad(t)$ 
	is the derivative ``looking backwards''. Compared with the choice $H(t)=0$, that backwards derivative can only 
	be smaller with a choice of $H(t)\geq0$.  
	Therefore the solution with $H(t)=0$ should be the largest of all solutions with $H(t)\geq0$. 
	Figure~\ref{DRE_DRI.fig} shows  simulations with 
	 a scalar example where the function $H$ was chosen as a positive random number
	switching at ten different points in an interval $[0,\sT]$. 
	One hundred such solutions are shown. In addition, the 
	solution with $H(t)=0$ is also shown, and it appears to be the maximum of all the other solutions. 
	We can show that this  behavior is true in general using the previous results on Lyapunov inequalities as shown next. 

	\begin{figure} 
		\centering 
		\includegraphics[width=0.35\textwidth]{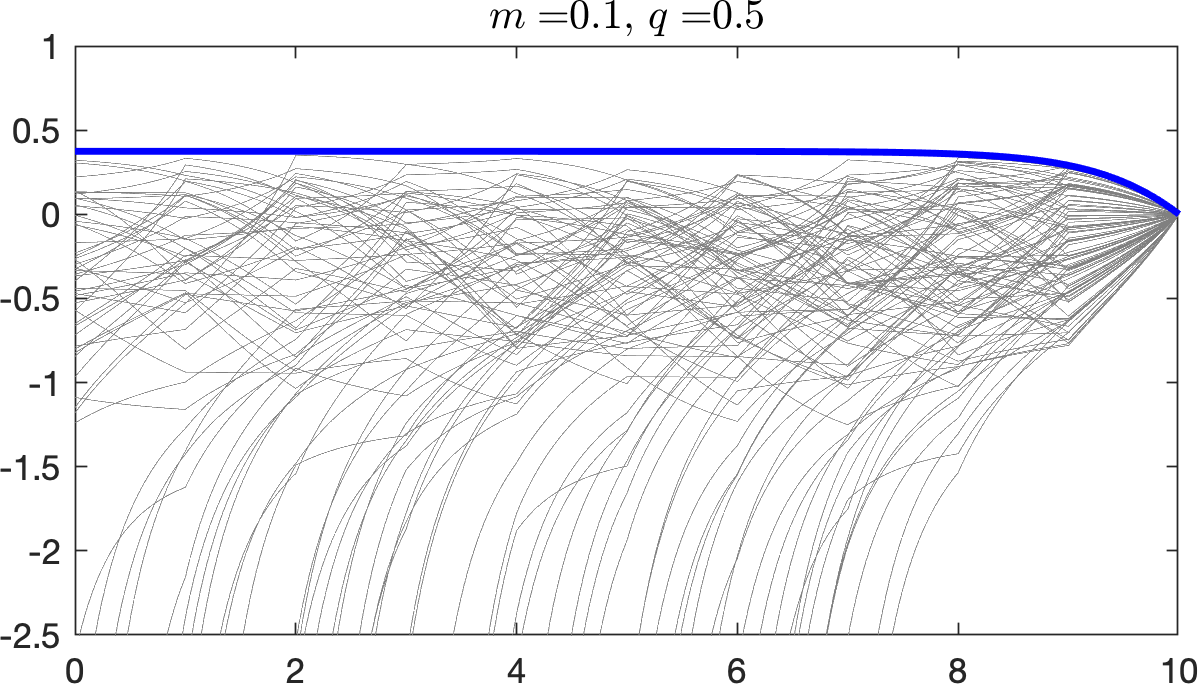}
		\quad\quad \quad\quad 
		\includegraphics[width=0.35\textwidth]{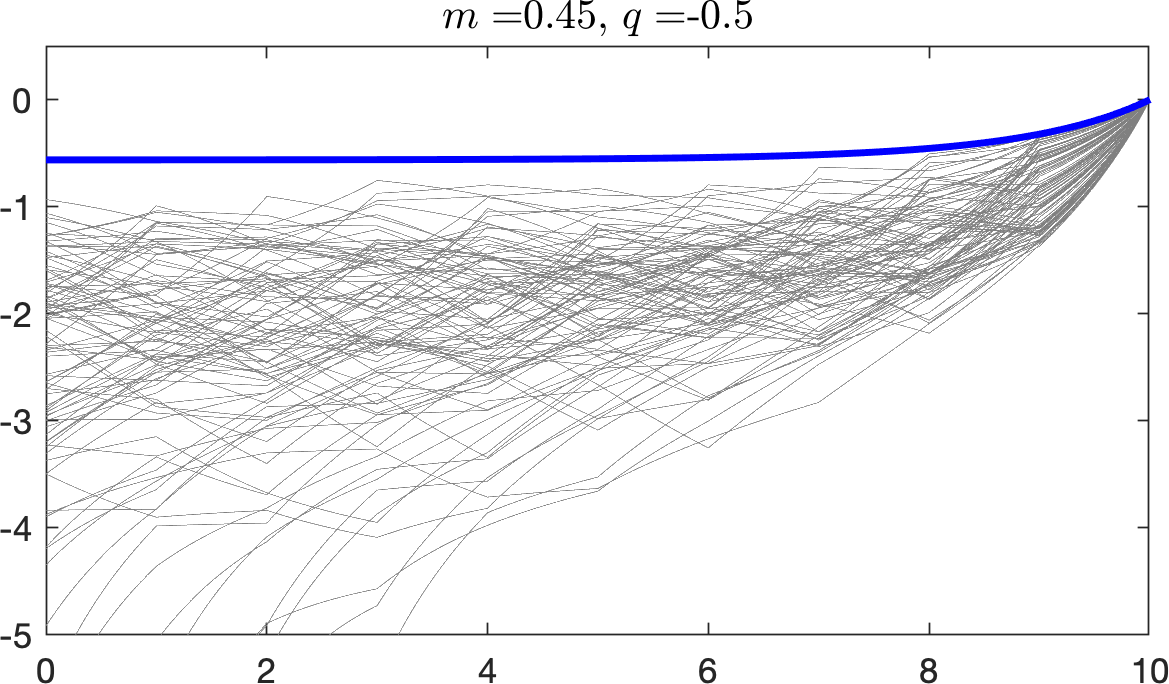} 		\\
		\includegraphics[width=0.35\textwidth]{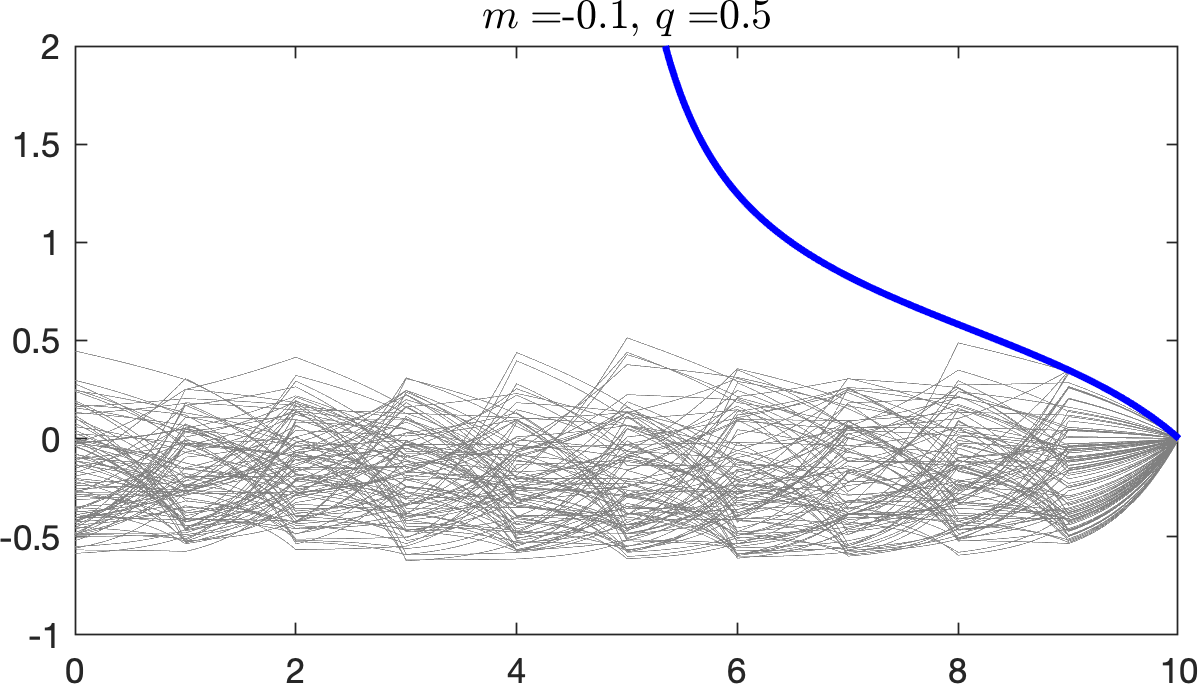}
		\quad \quad\quad\quad 
		\includegraphics[width=0.35\textwidth]{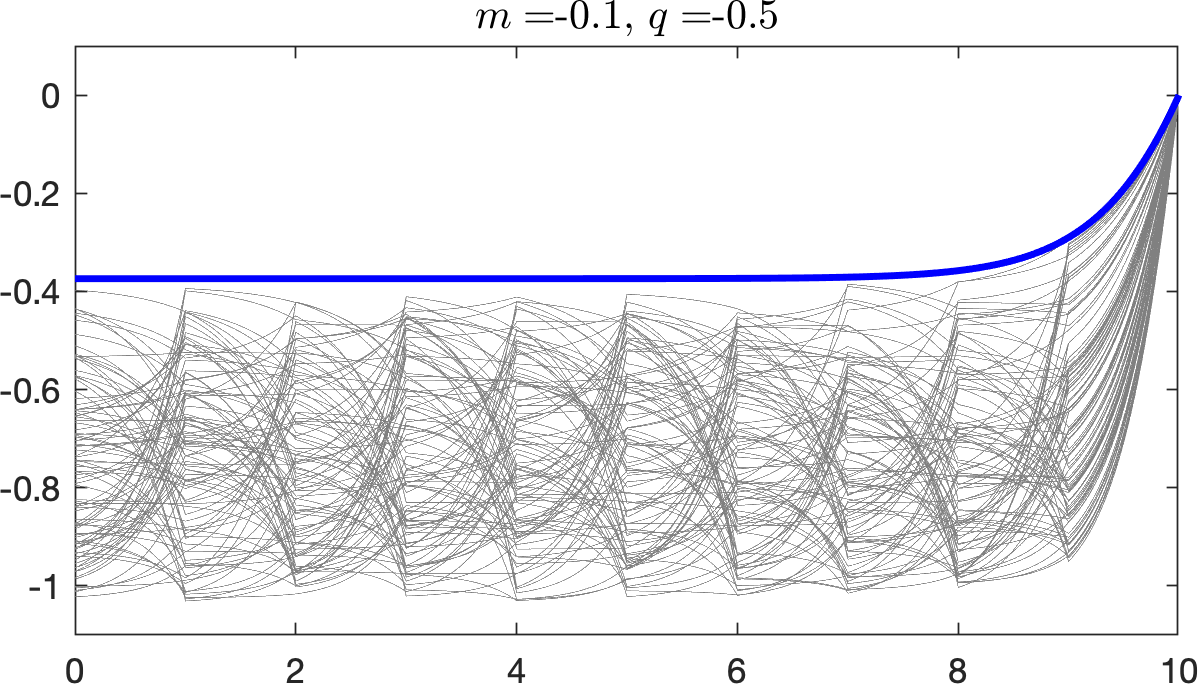}
		
		\mycaption{Examples demonstrating the maximal property of the solution $\Lambdab$ of the 
			Differential Riccati Equation (DRE) with final conditions~\req{DRE_max_lem}
			over all possible solutions of the 
			 Differential Riccati Inequality (DRI)~\req{DRE_lem}.
			 The equation here is scalar ($n=1$), and therefore becomes 
			 $-\dot{\lambdab}(t)= q+2a\lambdab(t) - m \lambdab^2(t)$, 
			 and various combinations of the signs of $q$ and $m$ are given. 
			 The gray curves represent 100 different solutions of the DRI, while the thick blue curve is the 
			 solution of the DRE. 
			 Notice how the maximality property is maintained even when the DRE and/or DRI solutions 
			 have finite escape time. 
			   } 
	  \label{DRE_DRI.fig}
	\end{figure} 

	\begin{lemma}[Maximal Solutions of DRIs] 								\label{DRI_DRE_only.lemma}
		Consider the final-condition Differential Riccati Inequality (DRI)
		\be
			\Lambdad + A^*\Lambda + \Lambda A -  \Lambda M \Lambda+ Q ~\geq ~ 0 , 
			\hstm\hstm \Lambda(\sT)=\Lambda_\rmf.										 \label{DRE_lem.eq}
		\ee																	
		The  maximal solution of the DRI is given by the 
		following Differential Riccati Equation (DRE) 
		\be
			\dot{\Lambdab} +
			A^*\Lambdab + \Lambdab A -  \Lambdab M	\Lambdab + Q ~=~0, 
				\hstm\hstm   \Lambdab(T)=\Lambda_\rmf   	,
		  \label{DRE_max_lem.eq}
		\ee 
		i.e. for any other $\Lambda$  satisfying the DRI,
		\[
			  \Lambdab(t) ~\geq~ \Lambda(t) , 			 			 
			 \hstm  t\in[0,\sT] . 
		\]
	\end{lemma}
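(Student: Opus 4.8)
The plan is to study the difference $E := \Lambdab - \Lambda$ between the DRE solution and an arbitrary DRI solution, and to show that it obeys a differential Lyapunov equation with a sign-definite forcing term and zero terminal condition, so that the sign conclusion~\req{Lyap_diff_ineq_signs} can be invoked directly. First I would use the recasting established in~\req{DRE_H_pos}: any solution $\Lambda$ of the DRI~\req{DRE_lem} satisfies, for some positive matrix-valued function $H\geq 0$,
\[
	\Lambdad + A^*\Lambda + \Lambda A - \Lambda M \Lambda + Q ~=~ H, \hstm \Lambda(\sT)=\Lambda_\rmf,
\]
while $\Lambdab$ satisfies the same identity with $H=0$, as in~\req{DRE_max_lem}. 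Subtracting the $\Lambdab$-equation from the $\Lambda$-equation and setting $E=\Lambdab-\Lambda$ produces a linear differential equation for $E$ driven by $-H$, with terminal condition $E(\sT)=\Lambda_\rmf-\Lambda_\rmf=0$.

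The crux is handling the quadratic term $\Lambdab M \Lambdab - \Lambda M \Lambda$, which must be linearized in $E$ in a way that yields a \emph{symmetric} Lyapunov operator. The key identity I would use, valid because $M$ is symmetric, is
\[
	\Lambdab M \Lambdab - \Lambda M \Lambda ~=~ \tfrac12 (\Lambdab+\Lambda) M E + \tfrac12 E M (\Lambdab+\Lambda),
\]
which is checked by expanding the right-hand side with $E=\Lambdab-\Lambda$. Introducing the averaged closed-loop matrix $\hat F := A - \tfrac12 M(\Lambdab+\Lambda)$, and noting $\hat F^* = A^* - \tfrac12(\Lambdab+\Lambda)M$ since $\Lambdab,\Lambda,M$ are symmetric, the difference equation collapses to
\[
	\dot E + \hat F^* E + E \hat F ~=~ -H, \hstm H\geq 0, \hstm E(\sT)=0.
\]

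This is exactly a differential Lyapunov equation of the form~\req{Diff_Lyap_Eq} with forcing $-H\leq 0$ and zero final condition. Applying the final-condition sign conclusion in~\req{Lyap_diff_ineq_signs} yields $E(t)\geq 0$, i.e.\ $\Lambdab(t)\geq \Lambda(t)$ for $t\in[0,\sT]$, which is the claim.

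The main obstacle is precisely the symmetric linearization above: a naive asymmetric split such as $\Lambdab M E + E M \Lambda$ would leave two \emph{distinct} closed-loop matrices on the left and right of $E$, producing a Sylvester-type equation to which~\req{Lyap_diff_ineq_signs} does not directly apply; using the averaged matrix $\hat F$ is what restores a common operator on both sides and makes the Lyapunov machinery applicable. A secondary technical point I would address is that $\hat F$ is time-varying and depends on both $\Lambdab$ and $\Lambda$, so the state-transition matrix $\Phi$ underlying~\req{Lyap_sol_H_p} exists only where both solutions remain finite; on any subinterval of $[0,\sT]$ on which $\Lambdab$ and $\Lambda$ are bounded the argument is intact, and the comparison then extends up to the first backward escape time, consistent with the behavior illustrated in Figure~\ref{DRE_DRI.fig}.
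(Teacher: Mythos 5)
Your proof is correct and follows essentially the same route as the paper: the paper first proves a Riccati comparison result for two DREs with ordered $Q$-terms using exactly your symmetric linearization identity (its equation~\req{Ric_diff}, where the averaged closed-loop matrix appears with a sign typo; your $\hat F = A - \tfrac12 M(\Lambdab+\Lambda)$ is the correct form), and then applies it with $Q_1 = Q$ and $Q_2 = Q - H$, which is precisely your inline subtraction of the DRE from the $H$-forced DRE followed by the Lyapunov sign conclusion~\req{Lyap_diff_ineq_signs}. Your closing remark on time-varying $\hat F$ and escape times is a point the paper leaves implicit, but it does not change the argument.
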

	\noindent
		Note that this statement says nothing about  minimal 
		solutions to final value problems. 
		Indeed,  as shown by the examples in Figure~\ref{DRE_DRI.fig}, there may not exist 
		such solutions. 
	
		Before proving this statement, we recap a ``Riccati comparison'' result which is a 
                	 finite-horizon version of a classic  infinite-horizon argument~\cite[Lemma 3]{willems1971least}. 
                	Consider  solutions of two  Riccati differential equations with different ``$Q$-terms''
                	\be
                	\begin{aligned} 
                		0 ~&=~ \Lambdad_1 + A^*\Lambda_1 + \Lambda_1 A - \Lambda_1 M \Lambda_1 +Q_1 	,
                							\hstm \Lambda_1 (\sT) =\Lambda_{\rmf} ,									\\
                		0 ~&=~ \Lambdad_2 + A^*\Lambda_2 + \Lambda_2 A - \Lambda_2 M \Lambda_2 +Q_2	,
                							\hstm \Lambda_2 (\sT) =\Lambda_\rmf,
                	\end{aligned} 
                	\label{two_Ric_comp_lem.eq}
                	\ee
                	with $Q_1\geq Q_2$. 
                	Define the difference $\Lambdat := \Lambda_1 - \Lambda_2$, subtract the two equations and apply the 
                	Riccati difference formula~\req{Ric_diff} to obtain 
                	\begin{align*} 
                		\dot{\Lambdat}
                			+ \lb A+M\big( \Lambda_1+\Lambda_2 \big) /2 \rb^* \Lambdat  
                			+ \Lambdat \lb A+M\big( \Lambda_1+\Lambda_2 \big) /2 \rb
                			+ \lb Q_1 - Q_2 \rb    
                		~&=~ 0 	,
				\hstm\hstm \Lambdat(\sT)=0												\\
                		\Rightarrow \hstm 
                		\dot{\Lambdat}
                			+ \lb A+M\big( \Lambda_1+\Lambda_2 \big) /2 \rb^* \Lambdat  
                			+ \Lambdat \lb A+M\big( \Lambda_1+\Lambda_2 \big) /2 \rb
                		~&=~ - \lb Q_1 - Q_2 \rb    
                		~\leq~ 0 . 
                	\end{align*} 
                	This is a differential Lyapunov inequality of the form~\req{Lyap_ineq_def} for $\Lambdat$. 
                	Applying~\req{Lyap_diff_ineq_signs} we conclude that $\Lambdat(t)\geq0$, and  thus for the
                	differential Riccati equations~\req{two_Ric_comp_lem} with final conditions
                	\be
                		Q_1(t)  \geq Q_2(t), \hsom t\in[0,\sT]  
                		\hstm \Rightarrow \hstm 
                		\Lambda_1(t) \geq \Lambda_2(t) , \hsom t\in[0,\sT] .
                	  \label{Ric_comp_Q1Q2_lem.eq}
                	\ee
	Note that properties of $M$ (such as sign definiteness) play no role in the arguments above. 
 	\begin{proof}(of Lemma~\ref{DRI_DRE_only.lemma}).  	               	
               Consider the DRI~\req{DRE_lem} rewritten with the ``forcing variable'' $H\geq0$,
                and the DRE~\req{DRE_max_lem}  obtained from it by setting the inequality to equality
                	\begin{align*}  
                            \dot{\Lambda} + A^*\Lambda + \Lambda A 
                            			- \Lambda~ M~ \Lambda + Q   ~\geq~ 0 
			\hstm \Leftrightarrow \hstm 
                            \dot{\Lambda} + A^*\Lambda + \Lambda A 
                            			- \Lambda~ M~ \Lambda + (Q-H)    ~&=~ 0 , 	\hstm H\geq 0,
                            																\\
                            		\dot{\Lambdab} + A^*\Lambdab + \Lambdab A 
                            			- \Lambdab~ M~ \Lambdab + Q   ~&=~ 0. 
                	\end{align*} 
		Since $H\geq0$, then $Q\geq (Q-H)$, and the Riccati comparison result~\req{Ric_comp_Q1Q2_lem} implies that 
		\[
			\Lambdab(t) ~\geq~ \Lambda(t), 
			\hstm t\in[0,\sT]. 
			\qedhere
		\]
	\end{proof}

	\subsubsection*{Differential Linear Matrix Inequalities (DLMIs)}

	We are concerned with 
	 Differential Linear Matrix Inequalities (DLMIs) of a very special form  where a matrix-valued function $\Lambda(.)$
	is required to satisfy	
	\begin{align} 
		\calM(\Lambdablue) ~:=&~ \calQ + \cE^\adj(\dot{\Lambdablue}) + \cA^\adj(\Lambdablue) 		\nonumber	\\ 
		~:=&~ 
		\bbm Q & N  \\ N^*  & R  \ebm  
			+ 
	 	 \bbm I \\ 0 \ebm 
		 	{\color{blue}{\dot{\Lambda}}}
		 \bbm I & 0 \ebm 
			+
		 \bbm A^* \\ B^* \ebm \Lambdablue \bbm I & 0 \ebm 
		 	+ 
		 \bbm I \\ 0 \ebm \Lambdablue \bbm A & B \ebm 
																	\nonumber	\\
		= &~
		\bbm 	Q+\dot{\Lambdablue}+A^*\Lambdablue+\Lambdablue A & N+ \Lambdablue B \\ 
				 (N+ \Lambdablue B)^*  & R \ebm  
			~\geq~ 0 , 
			\hstm\hstm t\in[0,\sT]. 
 																	 \label{DLMI.eq}	 
	\end{align} 
	Even when boundary conditions for $\Lambda$ are specified, such DLMIs typically have non-unique 
	solutions. In other words, the inequality above does not provide sufficient constraints to uniquely 
	determined $\Lambdad(t)$ given $\Lambda(t)$, thus the non-uniqueness of solutions. 
	However, as in the case of Riccati inequalities, we are able to find maximal solutions of the DLMI by 
	relating it to a DRI via the Schur complement. 
%
%
%
%
%
	For notational simplicity,  the statements in this section are presented for the case $N=0$. The general case 
	is given in Theorem~\ref{DRI_DRE_app.lemma} of Appendix~\ref{DLMI_i_f.sec}. 
	
	Since the matrix $\calM(\Lambda)$ is in a $2\times2$-block partition, it is natural 
	  to use the Schur complement to characterize its positivity. 
	 First note that  $R\geq 0$ is a necessary condition for~\req{DLMI}. If we further assume 
	that $R>0$, then the Schur complement~\req{Schur_BD}  gives  the congruence 
	\be
		\bbm Q+\dot{\Lambda}+A^*\Lambda+\Lambda A & \Lambda B \\ 
				B^* \Lambda  & R \ebm  
		= 
		\bbm I & 	\Lambda B R^{-1} \\ 0 & I \ebm 	
		\bbm 	Q+\dot{\Lambda} + A^*\Lambda + \Lambda A  - \Lambda~ B R^{\sm1} B^*~ \Lambda	& 0 \\ 
				0																	& R \ebm 
		\bbm I & 0 \\  R^{-1} B^* 	\Lambda^* & I \ebm 	
																	\label{DRI_form.eq}
	\ee
	Thus the DLMI is equivalent to the DRI 
	\[
		Q+\dot{\Lambda} + A^*\Lambda + \Lambda A  - \Lambda~ B R^{\sm1} B^*~ \Lambda ~\geq~0. 
	\]
	Furthermore, if
	we have final conditions on the DLMI, then 
	the equivalent   DRI  is  of the form~\req{DRI_final}. 
	Therefore the previously stated facts about maximal solutions apply. 
	In addition, the left-hand-side of~\req{DRI_form} implies that the lowest rank possible for the DLMI matrix 
	is bounded from below by the rank of $R$. The right-hand-side of~\req{DRI_form} shows that this minimum 
	ranks is achieved 
	by setting the inequality in the DRI to equality (which renders the $(1,1)$ as zero). This gives a special 
	symmetric factorization of the DLMI matrix at the maximal solution as stated next.  
	\begin{lemma} (Maximal Solutions of the DLMI) 								\label{DRI_DRE.lemma}
		Consider the  final-value DLMI and its associated DRI 
		\[
            		\begin{aligned} 
            			\calM(\Lambda)
            				 &:=	 \calQ +\cE^\adj(\Lambdad) + \cA^\adj(\Lambda) 
            				 = 
            				\bbm 	Q+\dot{\Lambda}+A^*\Lambda+\Lambda A & \Lambda B \\ 
            				 B^* \Lambda  & R \ebm   
            				 \geq~ 0 , 										\\
            			\cR(\Lambda) &:= \Lambdad +
            			A^*\Lambda + \Lambda A -  \Lambda B  R^{\sm1} B^* \Lambda	+ Q ~\geq~ 0 ,
            		\end{aligned} 
			\hstm t\in[0,\sT], \hstm \Lambda(\sT)=\Lambda_\rmf, 
		\]
		where $R>0$. 
		A  maximal solution of either inequality  is given by the 
		 Differential Riccati Equation (DRE) 
		\be
			\dot{\Lambdab} +
			A^*\Lambdab + \Lambdab A -  \Lambdab B  R^{\sm1} B^* \Lambdab	+ Q ~=~0, 
				\hstm\hstm   \Lambdab(T)=\Lambda_\rmf   	,
		  \label{DRE_LQRf.eq}
		\ee 
		i.e. for any other $\Lambda$  satisfying the DLMI or equivalently the DRI, 
		$
			  \Lambdab(t) ~\geq~\Lambda(t) , 			 			 
			 ~ t\in[0,\sT] . 
		$
		Furthermore, at the maximal solution, the matrix $\calM(\Lambdab)$ has the 
		full-rank symmetric factorization 
		\be
			\calM(\Lambdab) 
			~=~ 
			\bbm \Lambdab B R^{\sm\frac{1}{2}}  \\  R^{\frac{1}{2}} \ebm  
			\bbm  R^{\sm\frac{1}{2}} B^* \Lambdab &  R^{\frac{1}{2}} \ebm . 
		  \label{DLMI_rank_min.eq}
		\ee
	\end{lemma}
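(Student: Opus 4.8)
The plan is to reduce the statement to the already-established results on Differential Riccati Inequalities via the Schur complement, and then to read off the factorization directly from the block-diagonalizing congruence. First, since $R>0$, I would invoke the congruence identity~\req{DRI_form}, which block-diagonalizes $\calM(\Lambda)$ by the invertible (unit-triangular) transformation appearing there. Because congruence by an invertible matrix preserves sign-definiteness, this shows that, pointwise in $t$, the DLMI $\calM(\Lambda)\geq 0$ holds if and only if its Schur complement $\cR(\Lambda) = \Lambdad + A^*\Lambda + \Lambda A - \Lambda B R^{-1} B^* \Lambda + Q \geq 0$ holds. Hence the DLMI and the DRI have identical solution sets under the common final condition $\Lambda(\sT)=\Lambda_\rmf$, and it suffices to identify the maximal solution of the DRI.

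Second, I would observe that $\cR(\Lambda)\geq 0$ is exactly a final-value DRI of the form~\req{DRE_lem} with the choice $M := B R^{-1} B^*$. Since the Riccati comparison argument underlying Lemma~\ref{DRI_DRE_only.lemma} does not use any sign definiteness of $M$ (as noted following~\req{Ric_comp_Q1Q2_lem}), that lemma applies verbatim here even though $M = BR^{-1}B^*\geq 0$. It therefore gives that the solution $\Lambdab$ of the DRE~\req{DRE_LQRf} dominates every DRI solution, $\Lambdab(t)\geq\Lambda(t)$ on $[0,\sT]$, and by the equivalence of the first step this same $\Lambdab$ is simultaneously the maximal solution of the DLMI.

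Finally, for the factorization~\req{DLMI_rank_min}, I would substitute the maximal solution back into~\req{DRI_form}. At $\Lambdab$ the DRE holds with equality, so the Schur complement block vanishes, $\cR(\Lambdab)=0$, and the block-diagonal middle factor collapses to $\bbm 0 & 0 \\ 0 & R \ebm = \bbm 0 \\ R^{\frac{1}{2}} \ebm \bbm 0 & R^{\frac{1}{2}} \ebm$. Absorbing the two unit-triangular factors of the congruence into this term (a short matrix multiplication) yields $\calM(\Lambdab) = \bbm \Lambdab B R^{\sm\frac{1}{2}} \\ R^{\frac{1}{2}} \ebm \bbm R^{\sm\frac{1}{2}} B^* \Lambdab & R^{\frac{1}{2}} \ebm$, as claimed. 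The factor has full column rank because its lower block $R^{\frac{1}{2}}$ is invertible, which also confirms that $\calM(\Lambdab)$ attains the minimum rank possible for the DLMI matrix, equal to the rank of $R$.

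I do not anticipate a serious obstacle, since both main ingredients---the Schur-complement equivalence and the maximal-solution property of DRIs---are already in hand; the one point requiring care is the tacit assumption that the DRE solution $\Lambdab$ exists on all of $[0,\sT]$ (the examples in Figure~\ref{DRE_DRI.fig} show finite escape can occur), so I would phrase the maximality on the interval where $\Lambdab$ is defined, exactly as in Lemma~\ref{DRI_DRE_only.lemma}.
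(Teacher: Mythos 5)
Your proposal is correct and takes essentially the same route as the paper's own proof: the Schur-complement equivalence via the congruence~\req{DRI_form}, maximality inherited from Lemma~\ref{DRI_DRE_only.lemma} (whose comparison argument indeed needs no definiteness of $M=BR^{\sm1}B^*$), and the factorization~\req{DLMI_rank_min} obtained by substituting $\Lambdab$ into that same congruence so that the Schur-complement block vanishes. Your closing caveat about finite escape time of $\Lambdab$ is a sensible refinement that the paper leaves implicit.
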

	\begin{proof} 
		Note that the equivalences of the DREs and the DLMIs follow from the 
		Schur complements  for  $R>0$. Thus showing the maximality property for the DRIs (already shown in 
		Lemma~\ref{DRI_DRE_only.lemma})
		proves the maximality property for the DLMIs. 

		For the factorization~\req{DLMI_rank_min}, note that if $\Lambdab$ which satisfies the DRE is 
		used in both sides of~\req{DRI_form} 
            	\begin{align*}
                        	  \hspace*{-1em} 
            		\bbm Q+\dot{\Lambdab} + A^*\Lambdab + \Lambdab A & \Lambdab B \\  B^* \Lambdab &  R \ebm 
            		~&= ~
			\bbm \Lambdab B R^{-1} \\ I \ebm R \bbm R^{-1} B^* \Lambdab^* & I \ebm 
			~=~
            		\bbm \Lambdab B R^{\sm\frac{1}{2}}  \\  R^{\frac{1}{2}} \ebm  
            			\bbm  R^{\sm\frac{1}{2}} B^* \Lambdab B &  R^{\frac{1}{2}} \ebm , 
		\end{align*} 
		which is a full-rank factorization since $R$ (and therefore $R^{\frac{1}{2}}$) is assumed non-singular. 		
	\end{proof} 
	
	\noindent
	A few remarks are needed to emphasize the implications of the above result. 
	\begin{itemize} 
		\item 
                    	The symmetric factorization~\req{DLMI_rank_min}
                    	implies that the rank of $\calM(.)$ 
                    	at the extremum $\Lambdab$ is exactly equal the rank of $R$. 
                    	Since we have assumed $R$ is non-singular, 
                    	then this is  the lowest rank $\calM(.)$ can be.  
                    	We therefore conclude that the DRE solution 
                    	also {\em minimizes the rank} of the DLMI matrix $\calM(\Lambda)$. 
                    	This fact  will be important for understanding the 
                    	structure of optimal  signals later on, as it will imply the specific state feedback form of optimal inputs.  
		\item 
                    	Another important point to reiterate here is that one needed to convert a DLMI to a DRI in order to find extremal 
                    	solutions by substituting equality for inequality in the DRIs. 
                    	It is not possible to find maximal solutions of the DLMIs  directly by changing their inequalities to equalities. 
                    	 The DLMI with equality is infeasible (it would imply for example that $R=0$). The 
                    	DLMI and the DRI are equivalent by the Schur complement, but changing the inequalities to equalities 
                    	does not maintain 
                    	that equivalence. Thus to find maximal solutions to a DLMI, one must convert it to the equivalent DRI, 
                    	and then set the latter
                    	to equality. 
	
	\end{itemize}

	\subsubsection*{Symmetric Factorizations and the Lur'e Equations} 
	
	As an aside, we point out that the
	 factorization~\req{DLMI_rank_min} indicates another route that can be taken to arrive at the DRE starting from the 
	DLMI {\em without using the Schur complement}. 
	Observe that a matrix  $M$ is positive iff it has a symmetric factorization of the form $M=UU^*$ for some matrix $U$. 
	Consider any such factorization of $\calM(\Lambda)$ which is partitioned conformably as 
	\be
		\calM(\Lambda) ~:=~ 
		\bbm Q+\dot{\Lambda} + A^*\Lambda + \Lambda A & (N+\Lambda B) \\ (N+ \Lambda B)^* &  R \ebm 
		~=~ 
		\bbm U_1 \\ U_2 \ebm \bbm U_1^* & U_2^* \ebm  ~=:~ UU^*. 
	  \label{DLMI_fact.eq}
	\ee
	This gives the following constraints on the submatrices
	$U_1$ and $U_2$ 
	\be
		U_1U_1^* = Q+\dot{\Lambda} + A^*\Lambda + \Lambda A , 
		\hstm 
		U_1U_2^* = N+\Lambda B , 
		\hstm 
		U_2U_2^* = R  . 
	  \label{DLMI_fact_pieces.eq}
	\ee
	These equations are the time-varying version of what are 
	 sometimes referred to as Lur'e equations, which are normally stated as follows. Given 
	$A,B,Q,N,R$, find matrices $\Lambda$, $U_1$ and $U_2$ such that the equations~\req{DLMI_fact_pieces} hold.  
	In our current language, this amounts to finding $\Lambda$ such that $\calM(\Lambda)\geq0$,  as well as  a
	symmetric factorization of $\calM(\Lambda)$. Such symmetric factorizations always exist for any positive 
	matrix $\calM(\Lambda)$.

	If $R>0$, then the factorization $R=U_2U_2^*$ implies that $U_2$ has full row rank. The ``tightest'' such 
	factorization would have $U_2$ to be square. This corresponds to the factorization~\req{DLMI_fact} being 
	full rank, which is also the minimal rank that the factor $U$ can have. 
	Thus, if $R>0$, and  we adopt a full-rank factorization in~\req{DLMI_fact}, then 
	 $U_2$ is square and invertible, which gives $U_1=\Lambda B U_2^{-*}$, and in turn 
	\[
		U_1U_1^* ~=~ \Lambda B U_2^{-*} U_2^{-1} B^* \Lambda 
		~=~ \Lambda B \lb U_2 U_2^*\rb^{-1}  B^* \Lambda 
		~=~ \Lambda B R^{\sm1}  B^* \Lambda .
	\]
	Substituting this in the first equation in~\req{DLMI_fact_pieces} gives the DRE 
	\[
		\Lambda B R^{-1}  B^* \Lambda = Q+\dot{\Lambda} + A^*\Lambda + \Lambda A.
	\]
	Thus we see that {\em the DRE arises out of a full-rank factorization of the DLMI matrix}~\req{DLMI_fact}, 
	or equivalently, a minimal-rank (of $U$) solution to the Lur'e equations. 
	The Schur complement is not used in this  argument. 

\section{The Deterministic  LQR Problem} 									\label{LQR_det.sec}

	Now we apply the procedure outlined in Section~\ref{outline.sec} to the classic Linear Quadratic Regulator (LQR) 
	problem in its deterministic form. Note that in the finite time-horizon case, all 
	 statements given below are applicable to time-invariant or 
	time-varying systems and performance objectives provided that the matrix-valued function $\calQ(.)$ is bounded.  
	For notational simplicity again, we consider the case with no cross terms $N=0$. The general case
	follows from Theorem~\ref{IQC_i.thm} in a later section. 
	\begin{theorem} 														\label{LQR_i.thm}
		Consider  a linear (possibly time-varying) system of the form 
		\be
			\xd  ~=~ A x ~+~ Bu , 
			\hstm\hstm 
			t\in[0,\sT], \hstm 		x(0) =  \rmx_\rmi , 
		 \label{LQR_system.eq}
		\ee
		and a quadratic form defined on $(x,u)$ pairs 
		\[
			\qform(x,u) 
			~:=~\int_{0}^{\ssT} \bbm x \\ u \ebm^* \! \bbm Q & 0 \\ 0 & R \ebm  \bbm x \\ u \ebm dt 
			~=:\int_{0}^{\ssT} \bbm x \\ u \ebm^* \! \bigmat{\calQ}   \bbm x \\ u \ebm dt 
			~=:\int_{0}^{\ssT} {\sf q}(x,u) ~dt, 
		\]
		with $Q\geq0$ and $R>0$. 
		
		The infimum  of the quadratic form $\qform$ subject to the constraints~\req{LQR_system} is 
                        		\be
                        			\inf_{x,u} ~\qform (x,u) ~=  ~\rmx_\rmi^* ~\Lambdab(0) ~\rmx_\rmi , 
                        		\label{IQC_infima_i.eq}
                        		\ee
                        		where $\Lambdab$ is the maximal solution of the Differential Linear Matrix Inequality (DLMI) over $[0,\sT]$
            		\be
            			\calM(\Lambda) :=
            			\calQ + \bbm \Lambdad + A^*\Lambda  + \Lambda A &  \Lambda B \\  B^*\Lambda & 0 \ebm
            			~\geq~ 0 , 
            			\hstm 
            						\Lambda(\sT) = 0 .
            		  \label{LQR_thm_DLMI.eq}
            		\ee
		The  maximal solution of this DLMI  is given by the solution of 
                        		 the Differential Riccati Equation (DRE) 
                                    	\[
                                    		Q+\dot{\Lambdab} + A^*\Lambdab + \Lambdab A 
                                    			-\Lambdab B R^{\sm1}  B^*\Lambdab = 0, 
                                    		\hstm \hstm 
                                    		\Lambdab(\sT) = 0.
                                     	\]
		The optimal control $\ub$ is given by the state feedback law 
                        		\[
                        			\ub(t) = \lb R^{\sm1}(t) B^*(t) \Lambdab(t) \rom\rb ~\xb(t) , 
					\hstm t\in[0,\sT]. 		
                        		\]
	\end{theorem}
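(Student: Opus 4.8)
The plan is to realize the LQR problem as the primal side of a linear--conic program and to read the DLMI, the DRE, and the feedback law off the dual problem together with complementary slackness, following the four steps of Section~\ref{outline.sec}. First I would pass to the covariance representation: with $\Sigma := \bbm x \\ u \ebm\bbm x^* & u^* \ebm$ the objective becomes the linear functional $\int_0^\sT \inprod{\calQ}{\Sigma}\,dt$ and the dynamics become the abstract equality constraint $\big(\cE-\intf\cA\big)(\Sigma)=\heavi\,\sfX_\rmi$ of~\req{abst_eq_const}, together with the cone constraint $\Sigma\geq 0$. The rank-one constraint in~\req{LQP_conic_equiv} may be dropped by~\req{opt_conv_hull}, since the objective is linear and $\bbP_n$ is the convex hull of the rank-one cone. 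This is exactly the primal of Theorem~\ref{conic_duality.thm} with $Q=\calQ$, $\cL=\cE-\intf\cA$, $B=\heavi\,\sfX_\rmi$, and $\sfP$ the (self-dual) cone of positive matrix-valued signals.

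Next I would form the dual. The adjoint calculation recalled in Section~\ref{outline.sec} gives $\calQ-\big(\cE-\intf\cA\big)^\adj(Y)=\calM(\Lambda)$ once the dual variable is written as $\Lambda:=\intb Y$ (so $Y=-\Lambdad$), and the backward integral forces the terminal condition $\Lambda(\sT)=0$. Thus the dual cone constraint $\calQ-\cL^\adj(Y)\in\sfP'$ is precisely the DLMI~\req{LQR_thm_DLMI}. For the dual objective, the fundamental theorem of calculus together with $\Lambda(\sT)=0$ and the constancy of $\sfX_\rmi$ yields $\inprod{Y}{\heavi\,\sfX_\rmi}=\inprod{\Lambda(0)}{\sfX_\rmi}=\rmx_\rmi^*\Lambda(0)\rmx_\rmi$. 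Maximizing this over all feasible $\Lambda$ is then immediate from Lemma~\ref{DRI_DRE.lemma}: the DRE solution $\Lambdab$ dominates every DLMI solution in the Loewner order, so $\rmx_\rmi^*\Lambdab(0)\rmx_\rmi\geq \rmx_\rmi^*\Lambda(0)\rmx_\rmi$ and the dual optimum equals $\rmx_\rmi^*\Lambdab(0)\rmx_\rmi$. I would note in passing that for $Q\geq 0$, $R>0$ the DRE has no finite escape time, so $\Lambdab$ exists on all of $[0,\sT]$. Weak duality then gives $\inf_{x,u}\qform\geq \rmx_\rmi^*\Lambdab(0)\rmx_\rmi$.

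The crux is closing the gap via the alignment condition~\req{align_cond}, and this is exactly where the feedback law emerges. I would construct the optimal primal directly from $\Lambdab$: substitute the candidate feedback into the dynamics, integrate the resulting closed loop from $\xb(0)=\rmx_\rmi$, and set $\Sigmab:=\bbm \xb \\ \ub \ebm\bbm \xb^* & \ub^* \ebm$, a rank-one positive feasible point for~\req{abst_eq_const}. By Lemma~\ref{DRI_DRE.lemma} the matrix $\calM(\Lambdab)$ admits the full-rank symmetric factorization~\req{DLMI_rank_min}, i.e.\ $\calM(\Lambdab)=U_2U_2^*$ with $U_2=\bbm \Lambdab B\,\Rmhf \\ \Rhf\ebm$. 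Applying Lemma~\ref{pos_orth_function.lemma} to the integral orthogonality $\inprod{\calM(\Lambdab)}{\Sigmab}=0$ reduces it, almost everywhere in $t$, to the pointwise relation $U_2^*\bbm \xb \\ \ub\ebm=0$, that is $\bbm \Rmhf B^*\Lambdab & \Rhf\ebm\bbm \xb\\ \ub\ebm=0$; solving for $\ub$ produces the static state-feedback law $\ub=-R^{-1}B^*\Lambdab\,\xb$ of the theorem, so the alignment holds by construction. Theorem~\ref{conic_duality.thm} then forces the duality gap to vanish, whence $\inf_{x,u}\qform=\rmx_\rmi^*\Lambdab(0)\rmx_\rmi$ with optimal input the asserted feedback, establishing all four claims at once. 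The step I expect to be the main obstacle is this last one: one must check that the constructed $\Sigmab$ is genuinely feasible for the descriptor constraint~\req{abst_eq_const}, and must justify the passage from the integral identity to the pointwise feedback relation, which is precisely the content Lemma~\ref{pos_orth_function.lemma} supplies.
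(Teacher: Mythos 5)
Your proposal is correct and takes essentially the same route as the paper's proof: covariance reformulation with the rank-one cone, linear-conic duality (Theorem~\ref{conic_duality.thm}) producing the DLMI via the adjoint calculation and the substitution $\Lambda=\intb Y$, the maximal DLMI solution from Lemma~\ref{DRI_DRE.lemma}, and closure of the duality gap via the alignment condition together with the full-rank factorization~\req{DLMI_rank_min} and Lemma~\ref{pos_orth_function.lemma}, with the optimal primal point constructed from the closed-loop feedback exactly as the paper does. One remark: your feedback law $\ub=-R^{-1}B^*\Lambdab\,\xb$ carries a minus sign, which agrees with the paper's own derivation in~\req{align_two}; the sign in the theorem statement as printed appears to be a typo.
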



	As outlined in the introduction,  we will give the proof of this Theorem in several steps. 
	Step (1)  is to reformulate 
	the minimization problem in terms of covariance matrices, which yields a linear-linear optimization problem with a 
	cone constraint. 
	Step (2)  is to use  conic duality to reformulate it as a maximization problem 
	with a 
	Differential Linear Matrix Inequality (DLMI) constraint. 
	Step (3) is to use the maximal
	 solution of the DLMI (which is given by a  DRE) to solve the dual problem.  Finally, step (4) uses the alignment (complementary 
	 slackness) conditions to give the optimal primal solution in the form of a state feedback and show that the duality gap is zero. 
	
\begin{enumerate} 
	\item 	
	The first step is to 
	define the  joint (deterministic)  covariance of $x$ and $u$ 
	\be
		\Sigma(t) ~:=~ \bbm x(t) \\ u(t) \ebm \bbm x^*(t) & u^*(t) \ebm , 
	 \label{Cov_def_det.eq}
	\ee
	which is always of rank one. 
	The differential equation for $\Sigma$  is~\req{Sigma_diffeq_LQP}, which we recap for clarity and rewrite it in 
	the abstract integral form~\req{int_eq_EAX}  developed in Section~\ref{adjoint_systems.sec} 
	\begin{align} 
	 	\cE \big( \Sigma \big) := \Em {\Sigma} \Ems , 
		\hstm &
		\cA \big(\Sigma  \big) := \bbm A & B \ebm {\Sigma} \Ems
			+ \Em  {\Sigma}   \bbm A^* \\ B^* \ebm  		,							\nonumber		\\
			\arraycolsep=2pt
			\left. 
			\begin{array}{rcl} 
	 			\cE\big(\Sigmad\big) 	&=& \cA\big( \Sigma \big) , 	\\
				\cE\big( \Sigma(0) \big) 	&=& \rmx_\rmi \rmx_\rmi^* =: \sfX_\rmi  
			\end{array}	\right\} 
			\hstm &\Leftrightarrow \hstm 
			\lbb \cE  - \intf\cA  \rbb (\Sigma)  ~=~   \sfX_\rmi. 								 \label{Sig_diff_int.eq}
	 \end{align} 
	 Now the LQR problem can be written in the covariance representation  as 
	 \be
	 	\boxed{	
		\begin{aligned} 
		\inf_{\Sigma\in\LOne(0,T) }  	& \int_0^T \trcc{ \calQ ~\Sigma } ~dt 					
					\hspace{10em} 	& \fbox{\sf Primal:  LQR}		\\		
	 		\big(  \cE - \intf\cA \big) (\Sigma) &= \sfX_\rmi			\\
				\Sigma \geq 0, ~\rank{\Sigma} &=1. 
		\end{aligned} ~
		}
	  \label{Sigma_opt_two_det.eq}
	 \ee

	 \item 
	 
	 This problem is in the form to which the  linear-conic duality Theorem~\ref{conic_duality.thm} can be applied. 
	The primal and dual problems are then  
	\be
		\arraycolsep=1em
		{\sf Primal:} 
		\left\{ 
		\begin{array}{ccc} 
			{\arraycolsep=2pt
			\begin{array}{rcl} 
				\displaystyle 
				\inf_{\Sigma\in\LOne(0,\ssT)}  & &  \inprod{\calQ}{\Sigma} 		\\ 
				\big(  \cE - \intf\cA \big) (\Sigma) &=& \sfX_\rmi		\rule{0em}{1.5em} 	\\
				\Sigma \geq 0, ~\rank{\Sigma} &=&1 
			\end{array} 
			}
			&   \geq    &  
			{\arraycolsep=2pt
			\begin{array}{rcl} 
				\displaystyle 
				\sup_{Y\in\LInf(0,\ssT)}	 & &  \inprod{Y}{\sfX_\rmi} 		\\ 
				\calQ - \big(  \cE^\adj - \cA^\adj\intb \big) (Y) &\geq& 0		\rule{0em}{1.5em} 
			\end{array} 
			}
		\end{array}  
		\right\}{\sf : Dual} , 
	 \label{LQR_prim_dual.eq}
	\ee
	where we recall the calculation of the adjoint of the linear operator in the equality constraint
	\be
		\lbb \cE - \intf\cA  \rbb^\adj
		~=~ 
		\cE^\adj - \cA^\adj   \intf^\adj
		~=~ 
		\cE^\adj - \cA^\adj \intb . 
	 \label{lin_equal_adj.eq}
	\ee
	
	Note that the rank condition does not have any effect on the dual problem. 
	This is of course because the cone dual to positive rank-one matrices $\bbP_n^1$ is the convex cone 
	$\bbP_n$ of positive matrices, i.e. 
	the dual problem is always 
	convex even if the primal problem is not. Furthermore, since the cost objective is linear, the argument in~\req{opt_conv_hull} 
	implies that the extrema are achieved at the ``vertices'' of the convex cone.  

	Examining the dual objective $\inprod{Y}{\sfX_\rmi}$, we see that since $\sfX_\rmi$ is a constant function over 
	$[0,\sT]$, the objective  can be simplified by introducing the backwards  integral of
	$Y$ as follows 
	\[
		\inprod{Y}{\sfX_\rmi} 
		~=~ \smint{0}{T} \trcc{ Y(t)~\sfX_\rmi \rom} ~dt 
		~=~\trcc{ \smint{0}{T}  Y(t) dt~ \sfX_\rmi} 
		~=:~ \trcc{ \Lambda(0)  ~\rmx_\rmi \rmx_\rmi^*  \rom} 
		~=~ \rmx_\rmi^* \Lambda(0) ~\! \rmx_\rmi  , 
	\]
	where we  defined $\Lambda$ as the backwards integral of $Y$ 
	\be
		\Lambda:=\intb Y
		\hstm \Leftrightarrow \hstm 
		\arraycolsep=2pt
		\left\{
		\begin{array}{rcl} 
			\Lambda (t) &	:=&  \int_t^T Y(\tau)~d\tau 		\\ 
			-\dot{\Lambda} (t) &:=&  Y(t)
		\end{array} 
		\right\}
		\hstm \Leftrightarrow \hstm 		
		\Lambda(0)=\int_0^T Y(t)dt	, \hsom \Lambda(\sT) = 0 . 
	  \label{Lambda_def.eq}
	\ee
	
	The dual constraint in terms of $\Lambda=\intb Y$ becomes 
	\[
		\calQ - \big(\cE^\adj - \cA^\adj \intb \big) (Y) 
		~=~ 
		\calQ - \cE^\adj(Y) + \cA^\adj(\intb Y ) 
		~=~  
		\calQ + \cE^\adj(\Lambdad) + \cA^\adj(\Lambda ) , 
		\hstm 
		\Lambda(\sT) = 0. 
	\]
	This is actually a  DLMI for  $\Lambda$ 
           \begin{align} 
      		\calQ + \cE^\adj \lbb \Lambdad \rbb + \cA^\adj\blb \Lambda \rbb 
            		 ~&=~
			\bbm Q & 0 \\ 0 & R \ebm 
			+ \bbm I \\ 0 \ebm \dot{\Lambda} \bbm I & 0 \ebm 
			+ \bbm A^* \\ B^* \ebm \Lambda \bbm I & 0 \ebm 
			+ \bbm I \\ 0 \ebm \Lambda \bbm A & B \ebm 
																	\nonumber	\\
            		 &=~ 
			\bbm Q+\dot{\Lambda} + A^*\Lambda + \Lambda A & \Lambda B \\ B^* \Lambda &  R \ebm
			~=:~ \calM(\Lambda) .
																\label{dual_eq_DLMI.eq} 
           \end{align} 
	The abstract dual problem~\req{LQR_prim_dual} can now be written concretely in terms of $\Lambda$ as 
	 \be
	 	\boxed{	~
		\begin{aligned} 
		\sup_{\Lambda\in \LInf(0,T)} ~  \rmx_\rmi^* \Lambda(0)~\!\rmx_\rmi						&	
				\hspace{7em}	& \fbox{\sf Dual:  LQR}		\\	
		\calM(\Lambda):=	
		\bbm Q+\dot{\Lambda} + A^*\Lambda + \Lambda A & \Lambda B \\ B^* \Lambda &  R \ebm
		~&\geq~ 0 , \hstm  
				   &
				 t\in[0,\sT] , ~~\Lambda(\sT) = 0,						
		\end{aligned} ~
		}
	  \label{LQR_det_dual.eq}
  	\ee
	Note that while the objective $\rmx_\rmi^* \Lambda(0)~\!\rmx_\rmi$ is quadratic in $\rmx_\rmi$, it is {\em linear} 
	in the optimization variable $\Lambda$.

	\item 
	The dual problem is a maximization problem with a final condition on a DLMI constraint. We can therefore 
	use the maximal solution $\Lambdab$  of the DLMI (Lemma~\ref{DRI_DRE.lemma}) to obtain the function 
	$\Lambdab(.)$ that achieves the supremum. 
	$\Lambdab$ satisfies the DRE 
	\[
		-\dot{\Lambdab} ~=~ 
		A^*\Lambdab + \Lambdab A -\Lambda BR^{-1} B^* \Lambda + Q , 
		\hstm 
		\Lambdab(\sT)=0. 
	\]
	The maximality 
	property implies that for any other  $\Lambda$ satisfying  the DLMI constraint 
	\[
		\Lambdab(t) ~\geq~ \Lambda(t)
		\hstm \Rightarrow \hstm 
		\Lambdab(0) ~\geq~ \Lambda(0)
		\hstm \Rightarrow \hstm 
		\rmx_\rmi^* \Lambdab(0) ~\! \rmx_\rmi ~\geq ~ \rmx_\rmi^* \Lambda(0) ~\! \rmx_\rmi.
	\]
	Therefor $\Lambdab$ solves the dual problem~\req{LQR_det_dual}.

	\item 
	It remains to check the alignment condition. The optimal $\Sigmab$, if it exists must satisfy the alignment 
	condition~\req{align_cond}, which in this case is 
	\[
		0~=~ 
		\inprod{\calQ - \big(\cE^\adj - \cA^\adj \intb \big) (\bar{Y})}{\rom \Sigmab } 
		~=~ 
		\inprod{\calM(\Lambdab)}{\rom  \Sigmab }
	\]
	 Furthermore, $\Sigmab$ must be rank one, which means 
	\be
		0 =  \inprod{\calM(\Lambdab) \rom}{  \Sigmab}
		 =  \inprod{\bbm \Lambdab BR^{\sm\frac{1}{2}}  \\ R^{\frac{1}{2}} \ebm  
					\bbm R^{\sm\frac{1}{2}} B^* \Lambdab & R^{\frac{1}{2}}  \ebm }
					{  \bbm \xb \\ \ub \ebm \bbm \xb^* & \ub^* \ebm},  
	   \label{align_det_LQR.eq}
	\ee
	where the last expression comes from the full-rank factorization property~\req{DLMI_rank_min} of the DRE. 
	Now observe that~\req{align_det_LQR} is a statement that two positive matrices are orthogonal. Symmetric 
	factorizations of the two orthogonal matrices are given, and therefore Lemma~\ref{pos_orth_function.lemma} implies 
	that if such a $\Sigmab$ existed, then 
	\begin{align} 
		\bbm R^{\sm\frac{1}{2}} B^* \Lambdab & R^{\frac{1}{2}}  \ebm   \bbm \xb \\ \ub \ebm  ~=~ 0
		\hstm &\Leftrightarrow \hstm 
		R^{\sm\frac{1}{2}} B^* \Lambdab ~\xb ~+~  R^{\frac{1}{2}} \ub ~=~ 0 				\label{align_one.eq}	\\
		&\Leftrightarrow \hstm 
		\ub(t) ~=~ \lb -R^{-1} B^* \Lambdab\rb ~\xb(t).  								\label{align_two.eq}
	\end{align} 
	The argument then goes as follows. If we choose $\ub$ according to~\req{align_two}, then~\req{align_one} holds, 
	and in turn the joint covariance $\Sigmab$ of $(\xb,\ub)$ is of rank one and 
	 satisfies the alignment condition~\req{align_det_LQR}. This proves that the duality gap is zero.
	 The alignment condition therefore forces the  optimal control signal $\ub$ to be
	  in the form of a state feedback on the optimal state trajectory $\xb$. 
	This is the well-known classical solution to the  LQR problem. 

	Finally, the optimal cost is given from $\Lambda(0)$ as 
	\[
		\qform_{\rm opt}(\rmx_\rmi)  
			~=~  \rmx^*_\rmi  \Lambda(0) \rmx_\rmi  .
	\]
	This is the so-called ``value function'' of this optimal control problem, 
	which is a quadratic form  in the initial conditions of the LQR problem. 
	Note that while the optimal feedback gain $-R^{\sm1}B^* \Lambdab$ is independent of the initial condition, 
	the optimal trajectory $\xb$, control $\ub$ and cost $\rmx_\rmi^*\Lambda(0)\rmx_\rmi$ are functions of the initial condition. 
	
\end{enumerate}

\section{The Stochastic LQR Problem} 									\label{LQR_stoch.sec}

	The stochastic version of the LQR is not to be confused with the Linear Quadratic Gaussian (LQG) problem, 
	in which measurements are partial and noisy. The Stochastic LQR  (SLQR) problem assumes 
	 additive stochastic disturbances in the state equation that play the role of perturbing  the equilibrium state, 
	just like the   non-zero initial condition in the deterministic 
	LQR problem perturbs the equilibrium. 
	As we will see, the solution to this problem is exactly the same as the deterministic one, which is the LQR 
	state feedback.  
	
	 The problem can be stated as
	follows. Consider a (possibly time-varying)  system with a random initial condition and  
	driven by zero-mean white noise uncorrelated with past histories of $x$ and $u$ 
	\be
		\xd = Ax + Bu + w , 
		\hstm 
			\arraycolsep=2pt
			\begin{array}{rcl} 
				\expct{w(t)w^*(\tau) \romn} &=& \delta(t-\tau) ~\sfW , \\ 
				\expct{w(t)x^*(\tau) \romn} &=& 0 , ~t\geq \tau , \\ 
				\expct{w(t)u^*(\tau) \romn} &=& 0 , ~t\geq \tau ,
			\end{array} 
		\hstm 
			\begin{array}{rcl} 
										x(0) &=& \rmx_\rmi, 		\\			
				\expct{\rmx_\rmi \rmx_\rmi^* \romn} &=& \sfX_\rmi, 		\\ 
				\expct{\rmx_\rmi w^*(t)} &=& 0, 
			\end{array} 
	  \label{LQR_stoch_dyn.eq} 		
	\ee
	where $\sfW$ and $\sfX_\rmi$ are the covariance matrices of $w$ and $x(0)$ respectively. 
	The goal is to find the control input $u$ that minimizes the 
	following quadratic  performance objective
	\be
		\qform(x,u) ~:=~  \smint{0}{T} \expct{\rom x^*Qx +2x^*Nu+ u^*Ru } dt . 
	  \label{LQR_stoch_J.eq}
	\ee
	The expectation is taken over the joint
	 probability distribution of $w$ and $x(0)$, and therefore $\qform$ is an ``average'' over all 
	realizations of the  process $w$ and random variable $x(0)$. 
	That is why $w$ and $x(0)$ are not included as  arguments of $\qform$. After taking expectation, 
	$\qform$ is a function of only $x$ and $u$. 
	
	As before, $\qform$ can be rewritten as a linear functional on the joint covariance of $(x,w)$, which is now an actual covariance
	 matrix of stochastic processes
	\begin{align} 
		\qform ~&:= 
			 \int_0^T  \expct{ \bbm x^* & u^* \ebm \bbm Q & N \\ N^* & R \ebm \bbm x \\ u \ebm } dt	
		      =
			 \int_0^T  \expct{ \trcc{ \bbm Q & N \\ N^* & R \ebm \bbm x \\ u \ebm  \bbm x^* & u^* \ebm }} dt	 \nonumber\\
		      &=
			 \int_0^T \trcc{   \bbm Q & N \\ N^* & R \ebm~ \expct{ \bbm x \\ u \ebm  \bbm x^* & u^* \ebm }} dt 
			 ~=: \int_0^T \trcc{\calQ \, \Sigma \rom } dt 
			 ~= \inprod{\calQ}{\Sigma\rom} , 
	  \label{LQR_stoch_obj1.eq} 
	\end{align} 
	where 
	\[
		\Sigma
		~:=~ \expct{ \bbm x  \\ u \ebm  \bbm x^* & u^* \ebm } 
		~=~ \bbm \expct{xx^* }   &    \expct{xu^*} \\   \expct{ux^*} &   \expct{uu^*} \ebm  
		~=:~ \bbm \Sigma_{xx} & \Sigma_{xu} \\ \Sigma_{ux} & \Sigma_{uu} \ebm.
	\]
	Keep in mind that $\Sigma$ is now a deterministic function which is to be chosen as a the solution of a (deterministic) 
	optimization problem. 

	The differential equation for  $\Sigma_{xx}$ is obtained from  a standard 
	 calculation 
	\be
		\Sigmad_{xx} 
			~=~ 
		 			A~\Sigma_{xx}~+ ~B~\Sigma_{ux} ~+~ \Sigma_{xx} ~A^*~+~\Sigma_{xu} ~B^*~+~ \sfW. 
	  \label{SLQR_Sigmaxx_diffeq.eq} 
	\ee
	Using our earlier notation for the matrix operators $\cE$ and $\cA$, this differential equation  can be written 
	as follows together with its integral form 
	\be
            	\cE\big( \Sigmad \big) 
            			= \cA\big( \Sigma  \big) + \sfW , \hstm  \cE\big( \Sigma(0) \big) = \sfX_\rmi,
            	\hstm \Leftrightarrow \hstm 
            	\lbb \cE  ~-~\intf \cA  \rbb (\Sigma)  ~=~ \intf \sfW ~+~ \sfX_\rmi .
	  \label{Sigma_diffeq.eq}
	\ee
%
	We can now state the primal  stochastic LQR problem abstractly as 
	 \be
	 	\boxed{	~
		\begin{aligned} 
		\inf_{\Sigma }  	&~ \int_0^T \trcc{ \calQ ~\Sigma } ~dt 					
					\hspace{10em} 	& \fbox{\sf Primal:  Stochastic LQR}		\\		
	 		\big(  \cE -\intf \cA \big) (\Sigma) &= \intf \sfW +  \sfX_\rmi			\\
				\Sigma &\geq 0 
		\end{aligned} ~
		}
	  \label{Sigma_opt_two.eq}
	 \ee

	 Let's compare this problem statement with the deterministic LQR version~\req{Sigma_opt_two_det}. If 
	 $\sfW=0$, then the two statements are identical except for the rank-1 constraint on $\Sigma$. Since this 
	 constraint does not effect the dual problem, the duals of deterministic and stochastic LQR (with $\sfW=0$) 
	 should be identical. The only difference would be in the alignment condition, which we will explore shortly. 

	The linear operator in the equality constraint is exactly the same as in the deterministic LQR problem, and its
	adjoint has already been calculated in~\req{lin_equal_adj}.  Linear-conic duality thus gives the following 
	statement 
	\be
		\arraycolsep=1em
		{\sf Primal:} 
		\left\{ 
		\begin{array}{ccc} 
			{\arraycolsep=2pt
			\begin{array}{rcl} 
				\displaystyle 
				\inf_{\Sigma}  & &  \inprod{\calQ}{\Sigma} 		\\ 
				\big(  \cE - \cA\intf \big) (\Sigma) &=& \intf \sfW+ \sfX_\rmi		\rule{0em}{1.5em} 	\\
				\Sigma  &\geq&0 
			\end{array} 
			}
			&   \geq    &  
			{\arraycolsep=2pt
			\begin{array}{rcl} 
				\displaystyle 
				\sup_{Y}	 & &  \inprod{Y}{\intf\sfW+\sfX_\rmi}		\\ 
				\calQ - \big(  \cE^\adj - \cA^\adj\intb \big) (Y) &\geq& 0		\rule{0em}{1.5em} 
			\end{array} 
			}
		\end{array}  
		\right\}{\sf : Dual} , 
	 \label{SLQR_prim_dual.eq}
	\ee	
	The only difference between the dual here and the dual in the deterministic problem is the cost objective, 
	which again can be rewritten in terms of $\Lambda$, the backwards integral of $Y$ defined in~\req{Lambda_def} 
	 \begin{align*} 
	 	\inprod{Y}{\intf\sfW+\sfX_\rmi \romn} 
		~=~ 
	 	\inprod{\intb Y}{\sfW} +\inprod{Y}{\sfX_\rmi} 
		~=~ 
	 	\inprod{\Lambda}{\sfW} + \trcc{ \smint{0}{T}   Y(t) ~\sfX_\rmi dt }			
		~&=~ 
	 	\inprod{\Lambda}{\sfW} + \trcc{ \smint{0}{T}  Y(t) dt ~\sfX_\rmi }				\\
		&=~  
		\inprod{\Lambda}{\sfW}+\inprod{\Lambda(0)}{\sfX_\rmi} .
	 \end{align*} 
	 Note that the first functional $\inprod{\Lambda}{\sfW}$ is between functions on $[0,\sT]$, while the 
	 second $\inprod{\Lambda(0)}{\sfX_\rmi}$ is between matrices.

	Recalling the expression~\req{dual_eq_DLMI} for the dual inequality constraint in terms of $\Lambda$, 
	the dual problem can now be stated as 
	 \be
	 	\boxed{	~
		\begin{aligned} 
		\sup_\Lambda ~~ \trcc{\rom \Lambda(0) \sfX_\rmi} +  \smint{0}{T} \trcc{\rom \Lambda \sfW}dt 				
								&&		 \fbox{\sf Dual: Stochastic LQR}		\\		
		\bbm Q+\dot{\Lambda} + A^*\Lambda + \Lambda A & \Lambda B \\ B^* \Lambda &  R \ebm
		~&\geq~ 0 , \hstm\hstm 
				  & 
				 t\in[0,\sT] , ~~\Lambda(T) = 0.					
		\end{aligned} ~
		}
 										 \label{Lambda_opt_ineq.eq}	 
  	\ee
	Although the objective of this problem has two terms, each of those terms is maximized by the maximal 
	solution to the DLMI. Since the DLMI has a final condition, we know this maximal solution is the solution 
	of the DRE 
	\[
		A^*\Lambdab + \Lambdab A -\Lambdab BR^{-1} B^* \Lambdab + Q ~=~0 , 
		\hstm\hstm  
		\Lambdab(T)=0. 
	\]

	The alignment for this problem requires a little more care and interpretation than the deterministic LQR problem. 
	Let $\Sigmab$ be a candidate joint covariance for the optimal $\Sigma$. In the stochastic setting, the rank one 
	constraint is not imposed, and $\Sigmab$ may have higher rank. To investigate this, let 
	\be
		\Sigmab(t) ~:=~ \expct{ \bbm \xb(t)  \\ \ub(t) \ebm  \bbm \xb(t)^* & \ub(t)^* \ebm } ~=~ U(t) ~U^*\!(t) 
		 	~=~ \bbm U_1(t) \\ U_2(t) \ebm \bbm U_1^*(t) & U_2^*(t) \ebm , 			
	\ee 
	be a symmetric decomposition, where $U$ is partitioned conformably with  $(x,u)$.  This decomposition does not have
	to be full rank for the arguments that follow. 
%
%
%
%
%
	As described in Appendix~\ref{Cov.subsec}, this
	 decomposition allows for writing the  two random processes $\xb$ and $\ub$ as a linear transformation on 
	a   random process $v$ with uncorrelated components (i.e. $\expct{v(t)v^*(t)}=I$)  
	\be
		\bbm \xb(t) \\ \ub(t) \ebm ~=~ \bbm U_1(t) \\ U_2(t) \ebm v(t) 
	  \label{xu_from_v.eq}
	\ee
	Now recall the alignment condition~\req{align_cond}, which in this case reads   
	\be
		0 = \inprod{\Sigmab}{\calQ+\cE^*(\dot{\Lambdab})+\cA^*(\Lambdab)} 
			= \inprod{  \bbm U_1 \\ U_2 \ebm \bbm U_1^* & U_2^* \ebm}
					{\bbm \Lambdab BR^{\sm\frac{1}{2}}  \\ R^{\frac{1}{2}} \ebm  
					\bbm R^{\sm\frac{1}{2}} B^* \Lambdab & R^{\frac{1}{2}}  \ebm }, 
	   \label{align_stoch_LQR.eq}
	\ee
	where we again used the full-rank factorization property~\req{DLMI_rank_min} of the DRE solution. 
	Lemma~\ref{pos_orth_rank.lemma} on mutually orthogonal positive matrices states that 
	\begin{align*} 
		0 ~=~ 
		\bbm R^{-\frac{1}{2}} B^* \Lambdab & R^{\frac{1}{2}}  \ebm 
		\bbm U_1 \\ U_2 \ebm
		\hstm &\Leftrightarrow \hstm 
		R^{-\frac{1}{2}} B^* \Lambdab~ U_1 ~+~  R^{\frac{1}{2}}~U_2 ~=~ 0 			\\
		&\Leftrightarrow \hstm 
		U_2 ~=~ -R^{-1} B^* \Lambdab~ ~U_1 .
	\end{align*} 
	This last equation, together with~\req{xu_from_v} gives a relation between the optimal $\xb$ and $\ub$  
	\be
		\ub(t) ~=~ U_2(t) ~v(t) ~=~ -R^{-1} B^* \Lambdab(t) ~ ~U_1(t)  ~v(t) 
		~=~  \lb-R^{-1} B^* \Lambdab(t) \rb~\xb(t) . 
     	  \label{stoch_LQR_optK.eq} 
	\ee
	Thus the optimal control $\ub$ is the same  static state feedback as the deterministic LQR problem. 
	The difference in this case is that the optimal covariance $\Sigmab$ is not necessarily of rank one, but possibly 
	higher. Equation~\req{align_stoch_LQR} (together with the rank bound of Lemma~\ref{pos_orth_rank.lemma})
	 implies that the rank of $\Sigmab$ can be at most $n$ (the state dimension), 
	and may be lower depending on the rank of $\sfW$ (this last statement is not a consequence of~\req{align_stoch_LQR}).
	
	Note again that the optimal feedback  control law~\req{stoch_LQR_optK} does not depend on either the initial 
	condition $\sfX_i$ or the disturbance $\sfW$ covariance matrices. The optimal cost value however depends on those 
	\[
		\qform_{\rm opt} 
		~=~ \trcc{\rom \Lambdab(0) \sfX_\rmi} +  \smint{0}{T} \trcc{\rom \Lambdab(t)  ~\sfW(t)}dt
	\]

\section{General Integral Quadratic Constraints (IQCs)} 								\label{IQC.sec} 

	We now  consider the case of general Integral Quadratic Constraint (IQCs). 
	 In general, these problems concern a state space model with an input, and a
	 functional jointly quadratic in input and state. The problem is to characterize the
	 set of values of this quadratic form 
	evaluated over the set of all possible trajectories generated by all possible inputs. 
	Due to the linearity of the system's equations, this set of values can only be of three types $[\gamma,\infty)$, 
	$(-\infty,-\gamma]$ or $(-\infty,\infty)$ for some number $\gamma\geq 0$. 
	The problem is then reduced to computing the infimum of the quadratic form subject to the system's equations, 
	then this infimum will be either $-\infty$ or  a finite number $\pm\gamma$. 	
	A geometric illustration of such LQP problems in general is given in Figure~\ref{LQ_w_constraints.fig}. 
	For example, in
	 the $\Hinfty$-norm  and passivity checking problems, the initial condition is typically assumed to be zero, 
	 and the  problems are thus  reduced to checking whether the value set 
	 of a quadratic form is $[0,\infty)$. This property holds for the system if the respective infimum is zero, 
	 and does not hold if it is $-\infty$.

\begin{figure}[t]
	\centering
	\includegraphics[width=0.4\textwidth]{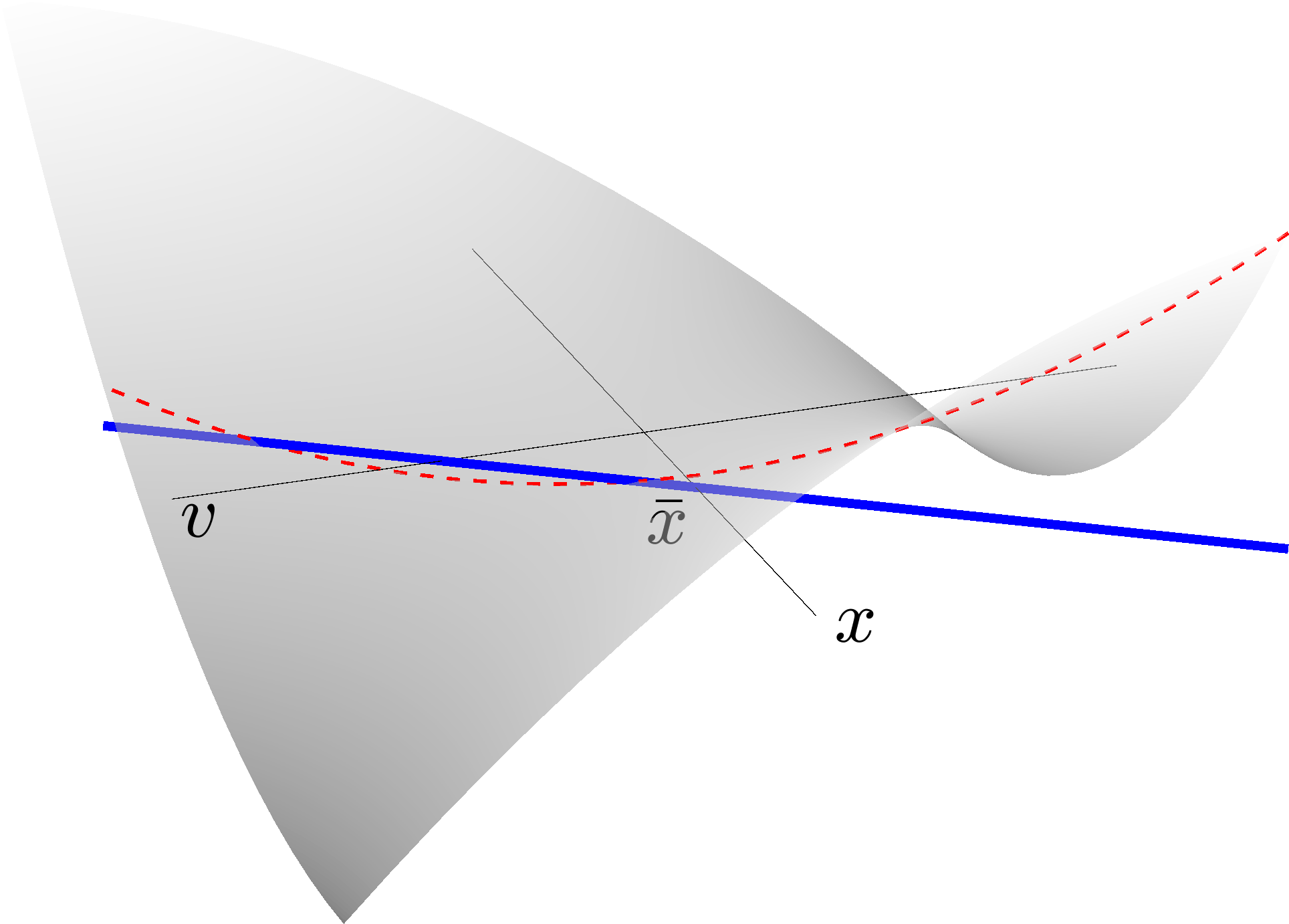} 
	
	\mycaption{Depiction of a quadratic form on input-state pairs $(v,x)$. The blue line represents the linear-affine space 
		of all input-state pairs satisfying the dynamics $\dot{x}=Ax+Bv$, $x(0)=\rm\rmx_\rmi$. When the input $v=0$, the
		state trajectory is the initial condition response $\bar{x}(t) = e^{At} \rm\rmx_\rmi$, depicted here as the intersection of 
		the linear-affine space with the ``$x$-axis''. The quadratic form $\qform(x,v)$ over all signal pairs (not necessarily 
		constrained by the dynamics) is depicted as the grey surface. $\qform$ can have mixed signature as depicted 
		here. The LQ Problem is to determine the infimum of the values of the quadratic form over the linear-affine constraint
		set (those values depicted here as the dashed red curve). Whether the infimum is finite or $-\infty$ depends on both 
		the quadratic form and the system dynamics. In the case where the initial condition is zero ($\xb=0$), 
		the constraint set is a 
		subspace, and the constrained infimum can only be either $0$ or $-\infty$. 
			} 
  \label{LQ_w_constraints.fig} 
\end{figure}


	\begin{theorem} 														\label{IQC_i.thm}
		Consider  a linear (possibly time-varying) system of the form 
		\be
			\xd  ~=~ A x + B v , 
			\hstm\hstm 
			t\in[0,\sT], \hstm 		x(0) =  \rmx_\rmi , 
		 \label{IQC_system_i.eq}
		\ee
		and a quadratic form defined on $(x,v)$ pairs 
		\[
			\qform(x,w) 
			~:=~\int_{0}^{\ssT} \bbm x \\ v \ebm^* \! \bbm Q & N \\ N^* & R \ebm  \bbm x \\ v \ebm dt 
			~=:\int_{0}^{\ssT} \bbm x \\ v \ebm^* \! \bigmat{\calQ}   \bbm x \\ v \ebm dt 
			~=:\int_{0}^{\ssT} {\sf q}(x,v) ~dt, 
		\]
		with $Q=Q^*$ and $R>0$. 
		Consider also the Differential Linear Matrix Inequality (DLMI) over $[0,\sT]$
		\be
			\calM(\Lambda) :=
			\calQ + \bbm \Lambdad + A^*\Lambda  + \Lambda A &  \Lambda B \\  B^*\Lambda & 0 \ebm
			~\geq~ 0 , 
			\hstm 
						\Lambda(\sT) = 0 .
		  \label{IQC_thm_DLMI.eq}
		\ee
		\begin{enumerate} 
			\item 
                        		The infimum  of the quadratic form $\qform$ subject to the constraints~\req{IQC_system_i} is 
                        		\be
                        			\inf ~\qform (x,w) ~=  ~\rmx_\rmi^* ~\Lambdab(0) ~\rmx_\rmi , 
                        		\label{IQC_infima_i.eq}
                        		\ee
                        		where $\Lambdab$ is the maximal solution of the DLMI~\req{IQC_thm_DLMI}. This maximal solution 
                        		satisfies the Differential Riccati Equation (DRE) 
                                    	\[
                                    		Q+\dot{\Lambdab} + A^*\Lambdab + \Lambdab A 
                                    			- \lb N+\Lambdab B\rb R^{-1} \lb N+\Lambdab B\rb^* = 0, 
                                    		\hstm \hstm 
                                    		\Lambdab(\sT) = 0.
                                     	\]
			\item 
                        		If the solution of the DRE escapes to infinity within $[0,\sT]$, then the  infimum 
                        		in~\req{IQC_infima_i} is $-\infty$. Otherwise, the  optimal signal $\vb$ is given by the state feedback
                        		\[
                        			\vb = R^{-1} (N^*+ B^* \Lambdab) ~\xb, 		
                        		\]
		\end{enumerate} 
	\end{theorem}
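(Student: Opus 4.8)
The plan is to mirror the four-step duality argument used for the deterministic LQR in Section~\ref{LQR_det.sec}, the only genuinely new ingredients being the indefiniteness of $Q$ and the nonzero cross term $N$, which together make finite escape of the Riccati equation possible. Accordingly I would split the proof into a non-escape case, which reduces almost verbatim to the LQR template, and an escape case, which is the new part and furnishes statement~(2).

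First I would carry out Steps (1)--(2) exactly as before. Define the rank-one deterministic covariance $\Sigma(t) := \bbm x \\ v \ebm \bbm x^* & v^* \ebm$, so that $\qform = \smint{0}{\ssT}\inprod{\calQ}{\Sigma}\,dt$ is \emph{linear} in $\Sigma$ and the dynamics collapse to the single equality constraint $\big(\cE - \intf\cA\big)(\Sigma) = \sfX_\rmi$ with $\sfX_\rmi = \rmx_\rmi\rmx_\rmi^*$; none of this uses any sign property of $\calQ$. Applying the weak linear-conic duality of Theorem~\ref{conic_duality.thm} and substituting $\Lambda := \intb Y$ for the dual variable reproduces the dual problem $\sup_\Lambda \rmx_\rmi^*\Lambda(0)\rmx_\rmi$ subject to the DLMI~\req{IQC_thm_DLMI} with $\Lambda(\sT)=0$; the adjoint computation is identical to~\req{lin_equal_adj} and is insensitive to $N$ and to the signature of $Q$.

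The first point where the general case differs is Step (3): to solve the dual I would invoke the general ($N\neq 0$) maximal-solution result, Theorem~\ref{DRI_DRE_app.lemma}, in place of Lemma~\ref{DRI_DRE.lemma}. Provided the associated DRE has a bounded solution $\Lambdab$ on all of $[0,\sT]$, that result gives $\Lambdab$ as the maximal DLMI solution together with the full-rank factorization
\[
	\calM(\Lambdab) \;=\; \bbm (N+\Lambdab B)R^{\sm\frac12} \\ R^{\frac12}\ebm \bbm R^{\sm\frac12}(N+\Lambdab B)^* & R^{\frac12}\ebm ,
\]
and maximality forces $\rmx_\rmi^*\Lambdab(0)\rmx_\rmi \geq \rmx_\rmi^*\Lambda(0)\rmx_\rmi$ for every feasible $\Lambda$, establishing~\req{IQC_infima_i}. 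For Step (4) I would impose the alignment condition $\inprod{\calM(\Lambdab)}{\Sigmab}=0$ on a rank-one $\Sigmab$; combining the factorization above with the orthogonality Lemma~\ref{pos_orth_function.lemma} yields $\bbm R^{\sm\frac12}(N^*+B^*\Lambdab) & R^{\frac12}\ebm\bbm \xb \\ \vb\ebm = 0$, which is precisely the stated state-feedback law for $\vb$ and which in turn produces a rank-one $\Sigmab$ closing the duality gap.

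The hard part is statement~(2), the escape case, which has no analogue in the LQR proof. When the DRE escapes to infinity at some $t^* \in [0,\sT)$, the DLMI~\req{IQC_thm_DLMI} has no bounded solution on $[0,\sT]$, its dual feasible set is empty, and the duality bound degenerates, so one must show $\inf\qform = -\infty$ directly. My plan is to run the finite-infimum argument above on the shrinking sub-intervals $[s,\sT]$ with $s\downarrow t^*$, where the DRE does exist, to extract for each such $s$ a unit direction $\xi_s$ with cost-to-go $\xi_s^*\Lambdab(s)\xi_s \to -\infty$ (the quadratic term $(N+\Lambda B)R^{\sm1}(N+\Lambda B)^*$ can only drive $\Lambda$ downward in backward time, so the escape is necessarily toward $-\infty$). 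Then, steering the state from $\rmx_\rmi$ to a large multiple $\rho\,\xi_s$ at time $s$ and applying the feedback on $[s,\sT]$ makes $\qform \approx \rho^2\,\xi_s^*\Lambdab(s)\xi_s$ plus a reaching cost that stays bounded (since $s \geq t^* > 0$ keeps $[0,s]$ of bounded length), whence $\qform\to-\infty$. The real obstacle I anticipate is the reachability of $\xi_s$: the escape ought to be confined to the reachable subspace, because on its $A$-invariant unreachable complement the quadratic term degenerates and the corresponding block of $\Lambda$ obeys a linear Lyapunov equation that cannot blow up in finite time; making this confinement argument precise, and handling the boundary case $t^*=0$, is where I expect the genuine work to lie.
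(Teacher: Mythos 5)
Your handling of statement (1) is exactly the paper's proof: the same covariance reformulation, the same primal/dual pair as in the LQR section, the same appeal to Theorem~\ref{DRI_DRE_app.lemma} for the maximal solution with its full-rank factorization~\req{DLMI_rank_min_app}, and the same alignment argument via Lemma~\ref{pos_orth_rank.lemma}. (One small point: the alignment computation yields $\vb = -R^{\sm1}(N^*+B^*\Lambdab)\,\xb$, which matches the paper's proof but differs by a sign from the feedback law printed in the theorem statement -- an inconsistency already present in the paper, also in Theorem~\ref{LQR_i.thm}.)

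The genuine divergence is statement (2): the paper's proof never addresses the escape case at all -- it ends with the state-feedback conclusion -- so your sub-interval steering construction is new material, and it has one repairable flaw and one unrepairable one. The repairable flaw: the cost of reaching $\rho\,\xi_s$ does not ``stay bounded''; the steering control scales linearly in $\rho$, so that cost scales like $\rho^2 C_s$. The argument still closes, but only by comparing coefficients: since $C_s$ stays bounded as $s \downarrow t^* > 0$ (the controllability Gramian on $[0,s]$ dominates the one on $[0,t^*]$) while $\xi_s^*\Lambdab(s)\,\xi_s \to -\infty$, one fixes $s$ with $\xi_s^*\Lambdab(s)\,\xi_s + C_s < 0$ and then sends $\rho \to \infty$. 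The unrepairable flaw is the boundary case $t^*=0$ that you flagged: there the implication is not merely hard, it is false. Take $\xd = v$, $x(0)=0$, $\sfq(x,v) = v^2 - x^2$ (the bounded-real form with $\gamma = 1$, $C=1$) and $\sT = \pi/2$. The DRE is $\dot{\lambda} = 1 + \lambda^2$, $\lambda(\sT)=0$, whose solution $\lambda(t) = -\tan(\sT - t)$ escapes exactly at $t^*=0$; yet the infimum of $\qform$ equals $0$, not $-\infty$, because the Volterra operator $v \mapsto x$ has $\LTwo(0,\pi/2)$-induced norm exactly $2\sT/\pi = 1$. So statement (2) is provable only for interior escape times $t^* \in (0,\sT)$, and any correct write-up must either assume this or weaken the conclusion. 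Your reachability-confinement observation (the full DRE escapes exactly when the reachable-block Riccati does, since the remaining blocks satisfy linear ODEs driven by it) is the right tool for uncontrollable directions, but it cannot rescue the $t^*=0$ case.
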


	\begin{proof} 	
	Since this is an infimization problem with  fixed initial conditions, all the steps of the proof are essentially the 
	same as the LQR case. The differences being that there are no assumptions on the definitness of $Q$, 
	the additional  term $N$ in $\calQ$, and the interpretation of the joint 
	covariance\footnote{As stated in the introduction, we require $R>0$. In fact, we can assume either 
		$R<0$ or $R>0$, but we assume here $R>0$ as a convention. The case where $R$ has mixed signature (as in e.g. 
		$\sfH^\infty$ state feedback design) requires a different treatment.}.
	 We therefore only point out the differences in this proof. 
	
       	The deterministic joint covariance of $x$ and $v$  is defined as 
            	\[
            		\Sigma(t) ~:=~ \bbm x(t) \\ v(t) \ebm \bbm x^*(t) & v^*(t) \ebm . 
            	\]
		The primal and dual problems read exactly like the LQR case~\req{LQR_prim_dual}. The DLMI constraint of the dual 
		in this case becomes 
             	\be
			0~\leq~ 
              		\calQ 
            		+ 
            		\bbm \Lambdad +A^*\Lambda +\Lambda A & \Lambda B \\ B^* \Lambda & 0 \ebm  
			= 
            		\bbm Q+ \Lambdad +A^*\Lambda +\Lambda A & N+\Lambda B \\  (N+\Lambda B)^* & R \ebm  , 
            		\hstm \Lambda(\sT) = 0	.
            	  \label{Mpm_DLMI_i.eq}
		\ee
            	Finally, the  dual problem is therefore  
            	\be
            			\displaystyle
            			\sup_{	\calM(\Lambda)	\geq 0, ~\Lambda(T)=0	} 
            				\rmx_\rmi^* 	\Lambda(0) \rmx_\rmi 	
            	  \label{IQC_dual_DLMI_i.eq}
            	\ee

            	The solution of this  last problem is obtained  from the slightly more general   
            	 Theorem~\ref{DRI_DRE_app.lemma} of Appendix~\ref{DLMI_i_f.sec}, 
	 	which states that the maximal solution of the DLMI~\req{Mpm_DLMI_i}  
            	is given by the solution of the DRE
            	\[ 
            		Q+\dot{\Lambdab} + A^*\Lambdab + \Lambdab A 
            			- \lb N+\Lambdab B\rb R^{-1} \lb N+\Lambdab B\rb^* ~=~ 0,	
            		\hstm 
            		\Lambdab(\sT) = 0, 
            	\] 
		where the  maximality property implies 
            	\[
            		\Lambdab(t) ~\geq~ \Lambda(t) 
            		\hstm \Rightarrow \hstm 
            		\rmx_\rmi^*\Lambdab(0)~\! \rmx_\rmi ~\geq~ \rmx_\rmi ^* \Lambda(0) ~\! \rmx_\rmi  
            		\hstm \Rightarrow  
            		\displaystyle
            			\sup_{	\calM(\Lambda)	\geq 0, ~\Lambda(T)=0	} 
            				\rmx_\rmi^* 	\Lambda(0) ~\! \rmx_\rmi 	
            	 	=~ \rmx_\rmi^*  \Lambdab(0) ~\!\rmx_\rmi	.
            	\]
		In this problem, the  alignment condition~\req{align_cond} of conic duality reads 
            	\begin{align*} 
            		0 ~&=~ 
            		\inprod{\calQ-(\cE^*-\cA^*\intb)(\bar{Y}) \rom}{\Sigmab} 
            		~=~ 
            		\inprod{\calQ+\cE^*(\dot{\Lambdab})+\cA^*(\Lambdab) \rom}{\Sigmab} 			\\
            		&=~ 
            		\inprod{\bbm (N+\Lambdab B) R^{\sm\frac{1}{2}}  \\  R^{\frac{1}{2}} \ebm  
            			\bbm  R^{\sm\frac{1}{2}} (N+ \Lambdab B)^* &  R^{\frac{1}{2}} \ebm }
				{\bbm \xb \\ \vb \ebm \bbm \xb^* & \vb^* \ebm},
            	\end{align*} 
            	where the last equality follows from the full-rank symmetric factorization~\req{DLMI_rank_min_app} 
            	of $\calM(\Lambdab):=\calQ+\cE^*(\dot{\Lambdab})+\cA^*(\Lambdab)$ at the extremal $\Lambdab$. 
            	Lemma~\ref{pos_orth_rank.lemma} implies that  the product of the respective symmetric factors must be zero 
            	\[
            		0 ~=~ 
            		\bbm  R^{\sm\frac{1}{2}} (N+ \Lambdab B)^* &  R^{\frac{1}{2}}	\rom \ebm 	
            			 \bbm \xb \\ \vb \ebm
            		\hstm 	\Leftrightarrow 	\hstm 
            		\vb ~=~ -R^{\sm1} (N^* + B^* \Lambdab) ~\xb.
            	\] 
            	Thus the optimal $\vb$ is in the form of a state feedback. 
	\end{proof}

	\subsection{$\sfL^2(0,\sT)$-induced Norm, the Bounded-Real Lemma}
	
	Consider the (possibly time-varying) system with zero initial conditions
	\be
	\begin{aligned}
		\xd ~&=~ A x ~+~ B w , 	\hstm t\in[0,\sT],  \hstm x(0)=\rmx_\rmi = 0 , 		\\ 
		z ~&=~ C x 
	\end{aligned} 
	\label{ss_IQC_examples.eq}
	\ee
	This system has 
	  $\sfL^{2}[0,\sT]$-induced norm less than $\gamma>0$ iff
	  the IQC with quadratic form~\req{Hinf_IQC_set} 
	\[
		\qform(x,w) ~:=~ \smint{0}{T} \sfq(x,w) dt
			~:=~		
		 \int_{0}^{T} 	
		\arraycolsep=1.5pt
		 \bbm x \\ w \ebm^* \bbm \sm C^*C & 0 \\ 0 & \gamma^2 I  \ebm \bbm x \\ w \ebm dt
	\]	
	is positive. Applying Theorem~\ref{IQC_i.thm} we see that 
	\[
		\inf \qform(x,w) ~=~ \rmx_\rmi \Lambda(0) \rmx_\rmi ~=~ 0 
		\tag{since $\rmx_\rmi=0$}
	\]
	iff the DRE 
	\[
		\Lambdad + A^*\Lambda + \Lambda A - \frac{1}{\gamma^2} \Lambda B B^* \Lambda - C^*C ~=~ 0, 
		\hstm 
		\Lambda(T) = 0, 
	\]
	has a (bounded) solution over the interval $[0,\sT]$. Otherwise the infimum is $-\infty$. 
	The corresponding DLMI is given by  
	\[
			\calQ + \bbm \Lambdad + A^*\Lambda  + \Lambda A &  \Lambda B \\  B^*\Lambda & 0 \ebm
			~=~ 
			\bbm \Lambdad + A^*\Lambda  + \Lambda A - CC^*  &  \Lambda B \\  B^*\Lambda & \gamma^2 I \ebm
			~\geq~ 0 , 
			\hstm 
			\Lambda(T) = 0. 
	\]
	
	Note that since in this problem the constant term in the DRE is $Q-NR^{\sm1} N^* = -C^*C\leq 0$, then by 
	the monotonicity properties (Lemma~\ref{DRE_monotone.lemma}) of the DRE, we have $\Lambda(t)\leq 0$. 
	Furthermore, if $(C,A)$ is observable, then $\Lambda(t)<0$ for $t\in[0,\sT)$. 
	
	For future reference, if we assume the 
	 infinite-horizon version of this problem is arrived at by setting $\Lambdad=0$ in the above, then we have the 
	 statement 
	\[
		\|G\|_{2-\rmi} ~\leq~ \gamma 
		\hstm \Rightarrow \hstm 
		\exists \Lambda \leq 0, 
		~
		\bbm  A^*\Lambda  + \Lambda A - CC^*  &  \Lambda B \\  B^*\Lambda & \gamma^2 I \ebm
		~\geq~ 0 .
	\]
	If we replace $\Lambda$ by $-X$, then the statement is equivalent to 
	\[
		\|G\|_{2-\rmi} ~<~ \gamma 
		\hstm \Rightarrow \hstm 
		\exists X\geq 0, 
		~
		\bbm  A^*X  + X A + CC^*  &  X B \\  B^*X & -\gamma^2 I \ebm
		~\leq~ 0 ,
	\]
	which is the more conventional statement of the Bounded Real Lemma. The reverse statement is easy to establish by 
	a completion-of-squares argument. In addition, 
	if $(C,A)$ is observable, 
	 then  we can replace the condition $X\geq0$
	above with $X>0$. These arguments will be presented elsewhere for the infinite-horizon case.

	\subsection{Passivity over $(0,\sT)$, the Positive-Real Lemma}
	
	Appendix~\ref{HinfPass.appen} explains that the system 
	\be
	\begin{aligned}
		\xd ~&=~ A x ~+~ B w , 	\hstm t\in[0,\sT],  \hstm x(0)=\rmx_\rmi = 0 , 		\\ 
		z ~&=~ C x ~+~ D w ,
	\end{aligned} 
	\label{ss_IQC_examples.eq}
	\ee
	 is  passive over $[0,\sT]$ iff the quadratic 
	form~\req{Passivity_IQC_set}  
	\[
			\qform(x,w) ~:=~ \int_0^T  \bbm x \\ w \ebm^* \bbm 0 & C^* \\ C & D+D^* \ebm \bbm x \\ w \ebm dt 
	\]
	is positive. Theorem~\ref{IQC_i.thm} states that
	\[
		\inf \qform(x,w) ~=~ \rmx_\rmi^* \Lambdab(0) \rmx_\rmi ~=~ 0 
	\]
	(i.e. the quadratic form constrained to the dynamics is positive) 
	iff the the following DRE 
	\begin{align*} 
		0~&=~ \Lambdad + A^*\Lambda + \Lambda A - (C^*+ \Lambda B) (D+D^*)^{\sm1}  (C+B^* \Lambda),	
		\hstm 
		\Lambda(\sT) = 0, 
	\end{align*} 
	has a (bounded) solution over the interval $[0,\sT]$. Otherwise the infimum is $-\infty$. 
	The associated DLMI is 
	\[
		\bbm \Lambdad + A^*\Lambda + \Lambda A  & C^*+\Lambda B \\ C + B^* \Lambda & D+D^* \ebm ,
		\hstm 
		\Lambda(\sT) = 0.
	\]
	For this quadratic form, $N=C^*$, $Q= 0$, and therefore $Q-NR^{\sm1}N^* = -C^*(D+D^*)^{\sm1} C\leq 0$. 
	The DRE definiteness 
	Lemma~\ref{DRE_monotone.lemma} states that the solution of the DRE is decreasing going backwards. 
	Thus $\Lambda(t)\leq0$ for $t\in[0,\sT]$. 
	
		For future reference, if we assume the 
	 infinite-horizon version of this problem is arrived at by setting $\Lambdad=0$ in the above, then 
	the infinite-horizon version of the problem is
	\[
		G ~\mbox{passive} 
		\hstm \Rightarrow \hstm 
		\exists \Lam \leq 0, ~ 
		\bbm A^*\Lam + \Lam A  & C^*+P B \\ C + B^* P & D+D^* \ebm
		~\geq~ 0. 
	\]	
	In addition, if we  again assume $(C,A)$  observable, then 
	we can replace $\Lam\leq 0$ above by $\Lam<0$.  If we define $P:=-\Lambda$, and multiply the LMI 
	by a minus sign we get the statement 
	\[
		G ~\mbox{passive} 
		\hstm \Leftrightarrow \hstm 
		\exists P \geq 0, ~ 
		\bbm A^*P + P A  & P B-C^* \\  B^* P-C & -D-D^* \ebm
		~\leq~ 0, 
	\]	
	which is the traditional statement of the ``Positive Real Lemma''. Note the the converse implication follows from a
	standard ``completing the squares'' argument.

\appendix

\section{Appendices}

\subsection{Trace Duality} 											\label{trace_duality.sec}

Denote by $\sigma_i(A)$ the $i$'th singular value of a matrix $A$ when arranged in descending order. First note that 
for a square matrix $A$, we have that $\left| \trcc{A} \right| \leq \sum_{i=1}^n \sigma_i(A)$. Indeed, let $A=U\Sigma V^*$ be the singular 
value decomposition\footnote{Here
			 $\Sigma=\diag{\sigma_1,\ldots,\sigma_n}$ is the diagonal matrix of singular values of $A$, 
			  not to be confused with the covariance matrix 
			$\Sigma$ defined in the remainder of this paper.} of $A$, then 
\[
	\left| \trcc{A} \right| ~=~ \left| \trcc{U\Sigma V^*} \right|  ~=~ \left| \trcc{\Sigma V^*U}  \right| 
	~=~  \sum_{i=1}^n \sigma_i(A) ~\left| \alpha_{i} \right|  ~\leq~ \sum_{i=1}^n \sigma_i(A),
\]	
where $\lcb \alpha_i \rcb$ are the diagonal elements of $V^*U$ which are all such that $|\alpha_i|\leq 1$ since $V^*U$ is unitary. 

Now let $M$ and $H$ be any two real matrices of the same dimensions 
\[
	\trcc{H^*M} 
	~\leq~ \sum_{i=1}^n \sigma_i\big( H^*M \big) 
	~\leq~ \sum_{i=1}^n \sigma_{\max}(H)~  \sigma_i\big(M \big) 
	~=~ \|H\|_\infty ~ \sum_{i=1}^n  \sigma_i\big(M \big) 
	~=~ \|H\|_\infty \|M\|_1 .
\]
This inequality is tight in the sense that for any $H$
\[	
	\sup_{\|M\|_1\leq 1} ~\trcc{H^*M} ~=~ \|H\|_\infty.
\]
To prove this, let $H=U\Sigma V^*$ be a singular value decomposition 
\[
		\trcc{H^*M	\romn} 
		~=~ \trcc{ V\Sigma U^* ~M	\romn} 
		~=~ \trcc{ \Sigma ~U^* \!M V	\romn} , 
\]  
Now choose $M$ such that  $U^*MV= \diag{1,0,\ldots,0}$. Since the $\|.\|_1$ norm is unitarily invariant, then $\|M\|_1 = \|U^*MV\|_1=1$, 
and 
\[
	 \trcc{ \Sigma ~ \diag{1,0,\ldots,0}	\romn} ~=~ \sigma_{\max}(H) ~=~ \|H\|_\infty. 
\]

\subsection{$\Hinfty$ Norms and  Passivity}  						\label{HinfPass.appen}

	Given a linear  time-invariant system $G$,  its $\Hinfty$ norm is the induced norm when the $\sfL^2$ norm 
	on signals is used. It can also be  characterized in the frequency domain by maximizing the singular values
	of the frequency response over all frequencies 
	\[
		\|G \|_{2-\rmi} ~:=~ \sup_{0\neq w\in \sfL^2} \frac{\| Gw\|_{\sfL^2} }{\|w\|_{\sfL^2} }
			~=~ \|\Gh\|_\infty ~:=~ \sup_{\omega\in \R} ~\sigmax \lb\rom \Gh(j\omega ) \rb  , 
	\]
	where $\Gh$ is the transfer function representing the system $G$. 
	The frequency domain characterization only applies to time-invariant (and therefore infinite time horizon) systems. 
	It is useful to have a time-domain criterion for the $\sfL^2$-induced norm of time-varying and/or finite-horizon 
	systems. 
	This means characterizing the  $\sfL^2[0,\sT]$-induced norm, which we now recast 
	a a linear-quadratic optimization problem of the form~\req{LQP}. 

           Consider  the state-space (possibly time-varying) system 
           \be
           \begin{aligned} 
           	\xd ~&=~ Ax ~+~ B w ,	
				\hstm\hstm t\in[0,\sT], \hstm x(0)=0, 				\\ 
		z ~&=~ Cx ~+~ D w , 
           \end{aligned} 
           \label{system_LQP_IQC.eq}
           \ee
           and the    following implications regarding its (finite or infinite-horizon)  $\sfL^2$-induced norm 
                \[  
                		\hspace{-1em} 
                		\begin{array}{rclrcl}
                        \|G\|_{2-\rmi} ~\leq~ \gamma  & \Longleftrightarrow &
                        		\displaystyle
                                    	\sup_{0\neq w\in \sfL^2}  \frac{\| z \|^2_2}{\| w \|^2_2}  ~\leq~\gamma^2 		& &		\\
                                                 &   \Updownarrow     &      									& &			 \\
                                   	 \forall ~{w\in \sfL^2}, ~w\neq 0,
                                    & &       \displaystyle
						\frac{\| z \|^2_2}{\| w \|^2_2}
                                                        ~\leq~\gamma^2      										& &			 \\
                                                &    \Updownarrow     &      									& &			 \\
                                    \forall ~{w\in \sfL^2},
                                            & &  \gamma^2 \|w\|_2^2~\geq~     \|z\|^2_2 							& &			 \\
                                                &    \Updownarrow     &       									& &			\\
                                    \forall ~{w\in \sfL^2}, 
                                            & & \gamma^2 \|w\|_2^2 -   \|z\|^2_2  ~\geq~ 0 						& & 			\\ 
                                                 &    \Updownarrow     &       									& &			\\
                                    	\displaystyle 
					\inf_{w\in \sfL^2} & &  \gamma^2 \|w\|_2^2 -   \|z\|^2_2  ~=~ 0 
					\hspace{10em}  \mbox{(since with $w=0$, $z=0$)}
                    \end{array}    
             \]
            The above implications, though simple, are remarkable! They
             convert the problem of checking an induced norm
            inequality (a ratio of norms) to one of checking the positivity of a quadratic form over the input and output 
            signals of a system. 
            This problem fits in the general framework of~\req{LQP} because the  
            form is  a  quadratic form  jointly on the input $w$ and the state $x$ (for notational simplicity, we 
            set the direct-feedthrough term $D=0$ in the sequel)
            \begin{align} 
            	\qform_\rmb ~&:=~  \gamma^2 \|w\|_2^2 -   \|z\|^2_2 
			~=~ \gamma^2		\int_0^T	w^*(t) w(t) ~dt ~-~ \int_0^T z^*(t) z(t) ~dt 			\nonumber		\\
			&=~		\int_0^T  \lb 	 \gamma^2 w^* w  ~-~   x^* C^* Cx  \rom \rb ~dt 	
			~=~ \int_0^T \!\!\!	\bbm x^* & w^* \ebm 	
						\bbm \sm C^*C  & 0 \\ 0 & \gamma^2 I \ebm 
						\bbm x \\ w \ebm dt 											\nonumber	\\
			&=~ \inprod{	\bbm x \\ w \ebm	}{ \bbm \sm C^*C  & 0 \\ 0 & \gamma^2 I \ebm \bbm x \\ w \ebm 	}_{L^2[0,\sT]} 
			~=:~ \inprod{	\bbm x \\ w \ebm	}{ 	\calQ_\rmb \bbm x \\ w \ebm 	}_{L^2[0,\sT]} ,
																				\label{LQP_Qb_J.eq}
            \end{align} 
            where $\calQ_\rmb$ is the (possibly time-varying) matrix function defined above. The subscript $\rm b$ references 
            the {\em bounded real lemma} which is another name for the criterion of the same problem we are addressing.

            Another important systems theory criterion is that of passivity.
            \begin{definition}
                An $\sfL^2$-stable linear system  $G$
                 is called {\em  passive over $(0,\sT)$} if  all input output pairs $z=Gw$ with $w\in \sfL^2(0,\sT)$ (and zero initial conditions)
                satisfy 
                \be  
                		\inprod{z}{w \rome}_{\sfL^2(0,\ssT)} 
			~=~ 
                		\smint{0}{T} z^*(t)~w(t)~dt ~\geq~0.
                  \label{pass.eqn}       
                  \ee
                  A system passive over $(0,\infty)$ is simply called {\em passive}. 
            \end{definition}
            We can again convert this to a question about the positivity of a quadratic form in input and state when the 
            system $G$ is given by the realization~\req{system_LQP_IQC}
            \begin{align} 
            	\qform_\rmp ~&:=~ \smint{0}{T}  z^*w~dt 
				~=~  \smint{0}{T} (Cx+Dw)^*~w~dt
				~=~ \smint{0}{T} \lb x^* C^*w +w^*D^*w \rom \rb ~dt
					\nonumber		\\
			&=~		
				\frac{1}{2}	\int_0^T 	\bbm x^* & w^* \ebm 	
						\bbm 0  & C^* \\ C & {D+D^*}  \ebm 
						\bbm x \\ w \ebm dt 											
			~=:~ 
			\frac{1}{2}  \inprod{	\bbm x \\ w \ebm	}{ 	\calQ_\rmp \bbm x \\ w \ebm 	}_{\sfL^2(0,T)} .
																				\label{LQP_Qp_J.eq}
            \end{align} 
	Thus checking passivity of a system is equivalent to checking whether the quadratic form~\req{LQP_Qp_J} is positive 
	on all signals that satisfy the system's equations.

\subsection{Redundancy in Covariance Matrices} 							\label{Cov.subsec}

            Let $z$ be a zero mean, continuous-time  random process. Its instantaneous covariance matrix is
            defined as 
            \[
            	\Sigma(t) ~:=~ \expct{z(t) ~z^*\!(t) \romn} , 
            \]
            which is a deterministic, matrix-valued function of $t$. Recall that covariance matrices are always positive $\Sigma(t)\geq 0$, 
            and therefore have  symmetric factorizations
             $\Sigma(t) = U(t) ~U^*(t)$ for some other matrix $U(t)$. 
            The dimensions of any such  factorization has implications for dependencies between the components of the vector $z(t)$. 
            This last statement will of course be only true up to second order statistics of the process $z$. For Gaussian processes, 
            the statement is true without qualification. 
            
            In the following, the dependence on $t$ is suppressed for notational simplicity. Given a factorization 
            $\expct{z ~z^* \romn} = UU^*$, consider 
            another random vector $\zeta := U w$, where $w$ is a zero-mean random vector with uncorrelated 
            components (i.e. $\expct{ww^*}=I$). With this construction, $\zeta$ and $z$ have the same second order statistics
            as can be easily verified
            \[
            	 \expct{\zeta ~\zeta^* \romn}
            	 ~=~ 
            	 \expct{Uw ~(Uw)^* \romn}	 
            	 ~=~ 
            	 \expct{Uw w^*U^* \romn}	 
            	 ~=~ 
            	 U~ \expct{w w^* \romn}U^*	 
            	 ~=~ 
            	 U~ U^*	 
            	 ~=~ \Sigma .
            \]
            Of course if $z$ and $w$ are Gaussian, then this means that $z$ and $\zeta$ have the same distribution.

            We are interested in full-rank factorizations, i.e. 
            \[
            	\Sigma ~=~ 
            	\tallmat{U} \widemat{U^*} ,
            	\hstm \hstm \Sigma\in\bbS_n, ~U\in\R^{n\times r}, ~r\leq n. 
            \]
            where $r=\rank{\Sigma}$, and therefore $U$ has full column rank. If $r<n$, then the original covariance 
            matrix $\Sigma$ is not full rank. This indicates that there are redundancies in the description of the processes
            $z$ and $\zeta$. To see this, partition $\zeta$ and $U$ as follows 
            \[
            	\zeta ~=~ \bbm \zeta_1 \\ \zeta_2 \ebm , 
            	\hstm 
            	U ~=~ \bbm U_1 \\ U_2 \ebm, 
            \]
            so that $U_1$ is square (i.e. in $\R^{r\times r}$). Assume without loss of generality that $U_1$ is 
            invertible\footnote{Since $U$ is full rank, this can always be done by permuting the rows of $U$ so that the first $r$ are
            	linearly independent. This corresponds to permuting  the components of the vectors 
            	$z$ and $\zeta$.}. 
            With this partition, the relation $\zeta = U w$ now has the following implications 
            \be
            	 \bbm \zeta_1 \\ \zeta_2 \ebm
            	 = \bbm U_1 \\ U_2 \ebm w 
            	 = \bbm U_1w \\ U_2 w\ebm 
            	 \hstm \Leftrightarrow \hstm 
            	 w = U_1^{-1} \zeta_1 
            	 ~~\mbox{and}~ ~
            	 \zeta_2 = U_2 w = U_2 U_1^{-1} \zeta_1. 
              \label{zeta_depend.eq}
            \ee
            Thus $\zeta_2$ is a {\em function} of $\zeta_1$, or equivalently, the entire $\zeta$ vector is completely determined by 
            its subcomponents $\zeta_1$, which is an $r$-vector. In an intuitive sense, the number of ``degrees of freedom'' in the 
            random $n$-vector $\zeta$ is actually $r\leq n$, the rank of its covariance matrix. 
            
            Finally, if $z$ and $w$ are Gaussian, then the components of  $z$ have the same joint distribution as the 
            components of $\zeta$  as mentioned earlier. Since the 
            components of $\zeta$ obey the relation~\req{zeta_depend}, then the components of $z$ obey the same relation. 
            The statements made above hold for each $t$, although the rank of the covariance matrix may depend on $t$.


	\subsection{Riccati Comparisons} 										\label{Ric_diff.appen}
	
		Most of the  arguments in this subsection   are largely a reorganization of the 
		classic arguments in~\cite{willems1971least} for the case of finite time horizon. 
		The expression~\req{Ric_diff_will} below  is from~\cite[Lemma 3]{willems1971least}. 
		The expression~\req{Ric_diff} is a slight modification which is needed when the matrix $M$ below
		has no specific definiteness. 
	
		\begin{myenumerate}
		\item 
		Given the linear-quadratic portion of the  Riccati opeator (i.e. without the constant term) 
		\[
			\cR_{\sf lq}(\Lam) ~:=~ A^*\Lam + \Lam A -\Lam M \Lam , 
		\]
		The difference $\cR_{\sf lq}(\Lam_1) - \cR_{\sf lq}(\Lam_2)$ can be expressed in several ways as follows
		\begin{align} 
			\hspace{-1em}
			\cR_{\sf lq}(\Lam_1) - \cR_{\sf lq}(\Lam_2) 
			~&=~ 
			A^* (\Lam_1-\Lam_2) +  (\Lam_1-\Lam_2)  A 
					-\Lam_1 M \Lam_1 + \Lam_2 M \Lam_2 
				 													\nonumber		\\
			&=~
				A^* (\Lam_1-\Lam_2) + (\Lam_1-\Lam_2) A 
						- \Lam_1 M \Lam_1+ \Lam_2 M \Lam_2  
				 											 		\nonumber		\\						
			&\hstm  
				+\Lam_1 M \Lam_1 -\Lam_1 M \Lam_1 
				+\Lam_1 M \Lam_2 -\Lam_1 M \Lam_2 
				+\Lam_2 M \Lam_1 -\Lam_2 M \Lam_1 				
				 													\nonumber		\\
			 &=~	 \lb A-M\Lam_1 \rb^* (\Lam_1-\Lam_2) + (\Lam_1-\Lam_2) \lb A-M\Lam_1  \rb  
				+   (\Lam_1-\Lam_2) M (\Lam_1-\Lam_2) 
		 																	 \label{Ric_diff_will.eq}	\\
			&=~  
				\big(  A+M(\Lam_1+\Lam_2)/2  \big) ^*~ (\Lam_1-\Lam_2) 
			 			~+~ (\Lam_1-\Lam_2) ~\big(  A+M(\Lam_1+\Lam_2)/2)  \big)  	 \label{Ric_diff.eq}
		\end{align} 
				
%
%

	\item 
	Consider  solutions of two  Riccati differential equations with different ``$Q$-terms''
	\be
	\begin{aligned} 
		0 ~&=~ \Lambdad_1 + A^*\Lambda_1 + \Lambda_1 A - \Lambda_1 M \Lambda_1 +Q_1 	,
							\hstm \Lambda_1 (\sT) =0,									\\
		0 ~&=~ \Lambdad_2 + A^*\Lambda_2 + \Lambda_2 A - \Lambda_2 M \Lambda_2 +Q_2	,
							\hstm \Lambda_2 (\sT) =0,
	\end{aligned} 
	\label{two_Ric_comp.eq}
	\ee
	with $Q_1\geq Q_2$. 
	Define the difference $\Lambdat := \Lambda_1 - \Lambda_2$, subtract the two equations and apply the 
	Riccati difference formula~\req{Ric_diff} to obtain 
	\begin{align*} 
		\dot{\Lambdat}
			+ \lb A+M\big( \Lambda_1+\Lambda_2 \big) /2 \rb^* \Lambdat  
			+ \Lambdat \lb A+M\big( \Lambda_1+\Lambda_2 \big) /2 \rb
			+ \lb Q_1 - Q_2 \rb    
		~&=~ 0 														\\
		\Rightarrow \hstm 
		\dot{\Lambdat}
			+ \lb A+M\big( \Lambda_1+\Lambda_2 \big) /2 \rb^* \Lambdat  
			+ \Lambdat \lb A+M\big( \Lambda_1+\Lambda_2 \big) /2 \rb
		~&=~ - \lb Q_1 - Q_2 \rb    
		~\leq~ 0 . 
	\end{align*} 
	This is a differential Lyapunov inequality of the form~\req{Lyap_ineq_def} for $\Lambdat$. 
	Applying~\req{Lyap_diff_ineq_signs} we conclude that $\Lambdat(t)\geq0$, and  for the
	differential Riccati equations with final conditions~\req{two_Ric_comp}
	\be
		Q_1(t)  \geq Q_2(t), \hsom t\in[0,\sT]  
		\hstm \Rightarrow \hstm 
		\Lambda_1(t) \geq \Lambda_2(t) , \hsom t\in[0,\sT] 
	  \label{Ric_comp_Q1Q2.eq}
	\ee
	
	\item 
	Now consider the DRI~\req{DRI_final} and the DRE obtained from it by setting the inequality to equality
	\be
		\left.
            	\begin{aligned} 
            		\dot{\Lambda} + A^*\Lambda + \Lambda A 
            			- \Lambda~ M~ \Lambda + Q   ~\geq~ 0, 
            		\hstm \Lambda(\sT)=0,											\\
            		\dot{\Lambdab} + A^*\Lambdab + \Lambdab A 
            			- \Lambdab~ M~ \Lambdab + Q   ~=~ 0, 
            		\hstm \Lambdab(\sT)=0.									
            	\end{aligned} 
		\right\} 
		\hstm \Rightarrow \hstm 
		\Lambdab(t)\geq\Lambda(t),
	  \label{DRE_deriv.eq}
	\ee
	where the conclusion  $\Lambdab(t)\geq\Lambda(t)$ follows from previous arguments. Indeed, the DRI is equivalent 
	to the DRE~\req{DRE_H_pos} with the forcing function $H\geq 0$. A comparison of their ``$Q$-terms'' shows
	that $Q-(Q-H)=H\geq0$, and therefore by the comparison result~\req{Ric_comp_Q1Q2} 
	$\Lambdab(t)\geq \Lambda(t)$ for all $t\in[0,\sT]$, i.e. $\Lambdab$ is the {\em maximal solution} to the DRI.

	\end{myenumerate}

	\subsection{DLMIs with initial or final conditions} 						\label{DLMI_i_f.sec}
		
                    	\begin{theorem}[Extremal Solutions] 								\label{DRI_DRE_app.lemma}
                    		Consider the linear matrix $\calM$ and the associated  Riccati  $\cR$ operators
                    		\begin{align} 
                    			\calM(\Lambda)
                    				 ~&:=~ 	 \calQ +\cE^\adj(\Lambdad) + \cA^\adj(\Lambda) 
                    				 ~= 
                    				\bbm 	Q+\dot{\Lambda}+A^*\Lambda+\Lambda A & N+\Lambda B \\ 
                    						 (N+\Lambda B)^*  						& R \ebm  , 		
																						\\
                    			\cR(\Lambda) ~&:=~ \Lambdad +
                    			A^*\Lambda + \Lambda A - \lb N+ \Lambda B\rb R^{\sm1} \lb N+ \Lambda B\rb^*	+ Q,
                    													  \label{Ric_op_def_app.eq}
                    		\end{align} 
				where $R>0$ or $R<0$. 
                    		Consider also initial- and final-value Differential Linear Matrix Inequalities (DLMIs) and their  equivalent 
                    		Differential Riccati Inequalities (DRIs) over $[t_\rmi,t_\rmf]$ 
                    		\be
                    		\begin{aligned}
					\arraycolsep=2pt
					\left. 
					\begin{array}{rclcrcl} 
									\calM(\Lambda) &\geq &  0 & \hstm \Leftrightarrow 	& \hstm\cR(\Lambda) &\geq &  0 \\
					 	\mbox{or}~ 	\calM(\Lambda) &\leq &  0 & \hstm \Leftrightarrow 	& \hstm \cR(\Lambda) &\leq &  0
					\end{array}  \right\} 
                   			&\hstm &			\Lambda(t_\rmi)&=\Lambda_\rmi , 	\hstm {\rm (DLMI_i/DRI_i)}  	\\					
					\arraycolsep=2pt
					\left. 
					\begin{array}{rclcrcl} 
									\calM(\Lambda) &\geq &  0 & \hstm \Leftrightarrow 	& \hstm\cR(\Lambda) &\geq &  0 \\
					 	\mbox{or}~ 	\calM(\Lambda) &\leq &  0 & \hstm \Leftrightarrow 	& \hstm \cR(\Lambda) &\leq &  0
					\end{array}  \right\} 
                    			&\hstm &				\Lambda(t_\rmf)&=\Lambda_\rmf  .	\hstm {\rm (DLMI_f/DRI_f)}  	
                    		\end{aligned}  
                    															\label{DRI_LQR_app.eq}
                    		\ee
                    		Minimal and maximal solutions of the initial- and final-value inequalities respectively are given by the 
                    		following Differential Riccati Equations (DREs) 
                    		\begin{align}
                    			 \cR(\Lambdaub) 
                    			 	~&=~0 , 
                    					&&  \Lambdaub(t_\rmi)=\Lambda_\rmi ,  \hstm	{\rm (DRE_i):} ~\mbox{\rm minimal-initial} 		
                    																		\label{DRE_LQRi_app.eq}	\\
                    			   \cR(\Lambdab) 
                    			 	~&=~0 , 
                    					&&  \Lambdab(t_\rmf)=\Lambda_\rmf  , \hstm	{\rm (DRE_f):} 	~\mbox{\rm maximal-final} 	
                    																		\label{DRE_LQRf_app.eq}
                    		\end{align} 
                    		i.e. for any $\Lambda$  satisfying 
                    		$\rm DLMI_i/DRI_i$ (resp. $\rm DLMI_f/DRI_f$)
                    		\[ 
                    			 \Lambdaub(t) ~\leq~ \Lambda(t) 
                    			 \hstm 		 
                    			\lb\rom  \mbox{resp.} ~  \Lambda(t) ~\leq~\Lambdab(t) \rb , 			 			 
                    			 \hstm\hstm t\in[t_\rmi,t_\rmf] . 
                    		\]
                    		Furthermore, at the extremal solutions, the matrix $\calM$ has the 
                    		full-rank symmetric factorization 
                    		\be
                    			\calM(\Lambdab) 
                    			~=~ 
                    			\bbm (N+\Lambdab B) R^{\sm\frac{1}{2}}  \\  R^{\frac{1}{2}} \ebm  
                    			\bbm  R^{\sm\frac{1}{2}} (N+ \Lambdab B)^* &  R^{\frac{1}{2}} \ebm,
                    		  \label{DLMI_rank_min_app.eq}
                    		\ee
                    		and similarly for $\calM(\Lambdaub)$. 
                    	\end{theorem}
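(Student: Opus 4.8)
The plan is to follow the structure of Lemma~\ref{DRI_DRE_only.lemma}, adding the cross term $N$ and the bookkeeping needed to cover both boundary conditions and both inequality senses at once. First, for the equivalence of each DLMI with its DRI: since $R$ is invertible I would apply the Schur-complement congruence~\req{Schur_BD} to $\calM(\Lambda)$, which block-diagonalizes it to $\diag{\cR(\Lambda),\,R}$ with $\cR$ as in~\req{Ric_op_def_app}. Congruence preserves inertia (Sylvester's law), so the triple $(n_-,n_0,n_+)$ of $\calM(\Lambda)$ is the sum of those of $\cR(\Lambda)$ and of $R$. Reading this off gives the two stated equivalences: when $R>0$ the $R$-block contributes no negative eigenvalues, so $\calM(\Lambda)\geq0\Leftrightarrow\cR(\Lambda)\geq0$, and when $R<0$ it contributes no positive eigenvalues, so $\calM(\Lambda)\leq0\Leftrightarrow\cR(\Lambda)\leq0$.

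Next I would remove the cross term by completing the square. Using that $R$ is symmetric, one verifies
\[
	\cR(\Lambda) = \Lambdad + \tilde{A}^*\Lambda + \Lambda\tilde{A} - \Lambda\,(BR^{-1}B^*)\,\Lambda + \tilde{Q}, \qquad \tilde{A}:=A-BR^{-1}N^*, \quad \tilde{Q}:=Q-NR^{-1}N^* ,
\]
which is exactly the Riccati operator of Lemma~\ref{DRI_DRE_only.lemma} under $A\mapsto\tilde{A}$, $M\mapsto BR^{-1}B^*$, $Q\mapsto\tilde{Q}$; as noted there, the sign of $M$ (hence of $R$) is irrelevant to the comparison argument. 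For extremality I would recast each DRI as a forced DRE: $\cR(\Lambda)\geq0$ reads $\cR(\Lambda)=H\geq0$, i.e.\ the DRE with $Q$-term $\tilde{Q}-H\leq\tilde{Q}$, while $\cR(\Lambda)\leq0$ gives the DRE with $Q$-term $\tilde{Q}+H\geq\tilde{Q}$. The difference between such a solution and the pure DRE solution then obeys a differential Lyapunov inequality, and applying the comparison result~\req{Ric_comp_Q1Q2_lem} together with the boundary-dependent sign rule~\req{Lyap_diff_ineq_signs} pins down the extremal sense: a zero final condition forces the difference $\geq0$ (maximality, giving~\req{DRE_LQRf_app}), whereas a zero initial condition forces it $\leq0$ (minimality, giving~\req{DRE_LQRi_app}).

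For the factorization I would substitute the DRE $\cR(\Lambdab)=0$ into the congruence~\req{Schur_BD}: the vanishing $(1,1)$ Schur block collapses $\calM(\Lambdab)$ to $\bbm (N+\Lambdab B)R^{-1}\\ I\ebm\,R\,\bbm R^{-1}(N+\Lambdab B)^* & I\ebm$, and splitting $R=\Rhf\Rhf$ produces the claimed full-rank factorization~\req{DLMI_rank_min_app}; fullness of rank is immediate since $\Rhf$ is invertible. (For $R<0$ the identical computation factors $-\calM(\Lambdab)$, with a signed square root in place of $\Rhf$.) The statement for $\calM(\Lambdaub)$ is obtained the same way.

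I expect the main obstacle to be the sign bookkeeping in the extremality step. Each of the four combinations --- initial versus final condition, and $\geq$ versus $\leq$ --- together with the sign of $R$ must be threaded consistently through the forced-DRE rewriting, the comparison~\req{Ric_comp_Q1Q2_lem}, and the correct branch of~\req{Lyap_diff_ineq_signs}, so that the minimal/maximal label lands correctly in every case; since~\req{Ric_comp_Q1Q2_lem} is proved only for final conditions, the initial-condition cases require invoking the other branch of the Lyapunov sign lemma. The completion of squares is routine but must respect $R=R^*$ so that $\tilde{A}^*$ pairs correctly, and one should note that a real $\Rhf$ exists only for $R>0$, which is why the $R<0$ factorization is phrased through $-\calM$.
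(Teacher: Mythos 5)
Your route is essentially the paper's own: the paper's proof likewise completes the square to rewrite $\cR(\Lambda)$ in the standard form $\Lambdad+\hat{A}^*\Lambda+\Lambda\hat{A}-\Lambda BR^{-1}B^*\Lambda+\hat{Q}$ with $\hat{A}=A-BR^{-1}N^*$, $\hat{Q}=Q-NR^{-1}N^*$ (equation~\req{DRE_standard_f}), then invokes the forced-DRE/Riccati-comparison machinery of Appendix~\ref{Ric_diff.appen} for extremality, and obtains the factorization by substituting the DRE identity $Q+\dot{\Lambdab}+A^*\Lambdab+\Lambdab A=(N+\Lambdab B)R^{-1}(N+\Lambdab B)^*$ into $\calM(\Lambdab)$. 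Your inertia argument for the DLMI/DRI equivalences, your use of the initial-condition branch of~\req{Lyap_diff_ineq_signs}, and your remark that a real $R^{\frac{1}{2}}$ forces the $R<0$ factorization to be phrased through $-\calM$ all fill in steps the paper compresses into ``the other three cases are argued similarly.''

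However, the step you yourself flagged as the main obstacle is resolved incorrectly, and this is a genuine gap. Your rule ``a zero final condition forces the difference $\geq 0$ (maximality), a zero initial condition forces it $\leq 0$ (minimality)'' is valid only for the $\geq$-sense inequalities. Run your own bookkeeping on the $\leq$ branch: $\cR(\Lambda)\leq 0$ means $\Lambda$ solves the DRE with $Q$-term $\tilde{Q}+H$, $H\geq0$, so the difference $\Lambdat:=\Lambdab-\Lambda$ satisfies $\dot{\Lambdat}+F^*\Lambdat+\Lambdat F=+H\geq0$; applying~\req{Lyap_diff_ineq_signs} to $-\Lambdat$ with $\Lambdat(t_\rmf)=0$ gives $\Lambdat\leq0$, i.e.\ for $\cR(\Lambda)\leq0$ with a final condition the DRE solution is \emph{minimal}, and dually it is \emph{maximal} with an initial condition. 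A scalar sanity check confirms this: with $A=0$, $M=0$, $Q=0$ the DRE gives $\Lambdab\equiv0$, while $\dot{\Lambda}\leq0$, $\Lambda(t_\rmf)=0$ forces $\Lambda(t)\geq0$ on $[t_\rmi,t_\rmf]$. So the extremal label is fixed jointly by the inequality sense and the boundary condition, not by the boundary condition alone. (To be fair, the theorem's own ``minimal-initial/maximal-final'' labels suffer the same defect when read literally across both inequality senses; the paper's proof and all of its applications use only the $R>0$, $\calM(\Lambda)\geq0$, final-value case, where your argument and the paper's coincide.)
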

		
	\begin{proof} 
		We consider the case $\calM(\Lambda)\geq0$ with $\Lambda(t_\rmf)=\Lambda_\rmf$. The other three cases are
		argued similarly. 		The Riccati operator~\req{Ric_op_def_app} can be written in the more standard form 
		\begin{align} 
			\cR(\Lambda) ~&:=~ 
			\Lambdad + A^*\Lambda + \Lambda A - \lb N+ \Lambda B\rb R^{\sm1} \lb N+ \Lambda B\rb^*	+ Q		\nonumber	\\
			~&=~\Lambdad +
				\big( A - BR^{\sm1}N^* \big)^* ~\Lam + \Lam ~\big( A - BR^{\sm1}N^* \big) 
				- \Lam~ B R^{\sm1} B^*~ \Lam + \big( Q-NR^{\sm1} N^* \big) 						\nonumber	\\
			&=:~ \Lambdad + \Ah^* \Lambda + \Lambda \Ah - \Lam M \Lam + \Qh .
																	\label{DRE_standard_f.eq}
		\end{align} 
		The 
		 Riccati comparison result~\req{DRE_deriv}  now implies that the solution $\Lambdab$ of $\cR(\Lambdab)=0$
		  is maximal over all
		 solutions of the $\rm DRE_f$, i.e. over all solutions of the $\rm DLMI_f$ 

		For the factorization~\req{DLMI_rank_min}, note that the $\rm DRE_f$ for  $\Lambdab$ implies  
            	\[
            		Q+\dot{\Lambdab} + A^*\Lambdab + \Lambdab A
            		~=~		(N+\Lambdab B) R^{\sm1} (N+\Lambdab B)^*.
            	\] 
            	 Using this, the matrix $\calM(\Lambdab)$  can be decomposed as
            	\begin{align*}
                        	  \hspace*{-1em} 
            		\bbm Q+\dot{\Lambdab} + A^*\Lambdab + \Lambdab A & (N+\Lambdab B) \\  (N+\Lambdab B)^* &  R \ebm 
            		&= 
            		\bbm (N+\Lambdab B) R^{\sm1} (N+\Lambdab B)^* & N+\Lambdab B \\ (N+\Lambdab B)^* &  R \ebm 	\nonumber	\\
            		&= 
            		\bbm (N+\Lambdab B) R^{\sm\frac{1}{2}}  \\  R^{\frac{1}{2}} \ebm  
            			\bbm  R^{\sm\frac{1}{2}} (N+ \Lambdab B)^* &  R^{\frac{1}{2}} \ebm , 
		\end{align*} 
		which is a full-rank factorization since $R$ (and therefore $R^{\frac{1}{2}}$) is assumed non-singular. 		
	\end{proof}

\bibliographystyle{IEEEtran}
\bibliography{LQP}

\begin{thebibliography}{10}
\providecommand{\url}[1]{#1}
\csname url@samestyle\endcsname
\providecommand{\newblock}{\relax}
\providecommand{\bibinfo}[2]{#2}
\providecommand{\BIBentrySTDinterwordspacing}{\spaceskip=0pt\relax}
\providecommand{\BIBentryALTinterwordstretchfactor}{4}
\providecommand{\BIBentryALTinterwordspacing}{\spaceskip=\fontdimen2\font plus
\BIBentryALTinterwordstretchfactor\fontdimen3\font minus
  \fontdimen4\font\relax}
\providecommand{\BIBforeignlanguage}[2]{{%
\expandafter\ifx\csname l@#1\endcsname\relax
\typeout{** WARNING: IEEEtran.bst: No hyphenation pattern has been}%
\typeout{** loaded for the language `#1'. Using the pattern for}%
\typeout{** the default language instead.}%
\else
\language=\csname l@#1\endcsname
\fi
#2}}
\providecommand{\BIBdecl}{\relax}
\BIBdecl

\bibitem{kalman1960contributions}
R.~E. Kalman, ``Contributions to the theory of optimal control,'' \emph{Bol.
  soc. mat. mexicana}, vol.~5, no.~2, pp. 102--119, 1960.

\bibitem{popov1964hyperstability}
V.~Popov, ``Hyperstability and optimality of automatic systems with several
  control functions,'' \emph{Rev. Roum. Sci. Tech., Ser. Electrotech. Energ},
  vol.~9, no.~4, pp. 629--690, 1964.

\bibitem{kalman1963lyapunov}
R.~E. Kalman, ``Lyapunov functions for the problem of lur'e in automatic
  control,'' \emph{Proceedings of the national academy of sciences}, vol.~49,
  no.~2, pp. 201--205, 1963.

\bibitem{yakubovich1962solution}
V.~A. Yakubovich, ``The solution of some matrix inequalities encountered in
  automatic control theory,'' in \emph{Doklady Akademii Nauk}, vol. 143,
  no.~6.\hskip 1em plus 0.5em minus 0.4em\relax Russian Academy of Sciences,
  1962, pp. 1304--1307.

\bibitem{willems1971least}
J.~Willems, ``Least squares stationary optimal control and the algebraic
  riccati equation,'' \emph{IEEE Transactions on automatic control}, vol.~16,
  no.~6, pp. 621--634, 1971.

\bibitem{megretski1997system}
A.~Megretski and A.~Rantzer, ``System analysis via integral quadratic
  constraints,'' \emph{IEEE transactions on automatic control}, vol.~42, no.~6,
  pp. 819--830, 1997.

\bibitem{boyd1994linear}
S.~Boyd, L.~El~Ghaoui, E.~Feron, and V.~Balakrishnan, \emph{Linear matrix
  inequalities in system and control theory}.\hskip 1em plus 0.5em minus
  0.4em\relax SIAM, 1994.

\bibitem{boyd2004convex}
S.~P. Boyd and L.~Vandenberghe, \emph{Convex optimization}.\hskip 1em plus
  0.5em minus 0.4em\relax Cambridge university press, 2004.

\bibitem{scherer2000linear}
C.~Scherer and S.~Weiland, ``Linear matrix inequalities in control,''
  \emph{Lecture Notes, Dutch Institute for Systems and Control, Delft, The
  Netherlands}, vol.~3, no.~2, 2000.

\bibitem{gattami2009generalized}
A.~Gattami, ``Generalized linear quadratic control,'' \emph{IEEE Transactions
  on Automatic Control}, vol.~55, no.~1, pp. 131--136, 2009.

\bibitem{gattami2007optimal}
------, ``Optimal decisions with limited information,'' Ph.D. dissertation,
  Lund University, 2007.

\bibitem{you2014h}
S.~You and A.~Gattami, ``H infinity analysis revisited,'' \emph{arXiv preprint
  arXiv:1412.6160}, 2014.

\bibitem{you2015primal}
S.~You, A.~Gattami, and J.~C. Doyle, ``Primal robustness and semidefinite
  cones,'' in \emph{2015 54th IEEE Conference on Decision and Control
  (CDC)}.\hskip 1em plus 0.5em minus 0.4em\relax IEEE, 2015, pp. 6227--6232.

\bibitem{gattami2015simple}
A.~Gattami and B.~Bamieh, ``Simple covariance approach to {{$H^\infty$}}
  analysis,'' \emph{IEEE Transactions on Automatic Control}, vol.~61, no.~3,
  pp. 789--794, 2015.

\bibitem{rantzer2023linear}
A.~Rantzer, ``Linear quadratic dual control (cdc 2023 plenary presentation),''
  \emph{arXiv preprint arXiv:2312.06014}, 2023.

\bibitem{dahleh19861}
M.~A. Dahleh and J.~Pearson, ``{{$\ell^1$}}-optimal-feedback controllers for
  discrete-time systems,'' in \emph{1986 American Control Conference}.\hskip
  1em plus 0.5em minus 0.4em\relax IEEE, 1986, pp. 1964--1968.

\bibitem{dahleh19871}
M.~Dahleh and J.~Pearson, ``{{$L^1$}}-optimal compensators for continuous-time
  systems,'' \emph{IEEE Transactions on Automatic Control}, vol.~32, no.~10,
  pp. 889--895, 1987.

\bibitem{qi2001matlab}
X.~Qi, M.~H. Khammash, and M.~V. Salapaka, ``A matlab package for
  multiobjective control synthesis,'' in \emph{Proceedings of the 40th IEEE
  Conference on Decision and Control (Cat. No. 01CH37228)}, vol.~4.\hskip 1em
  plus 0.5em minus 0.4em\relax IEEE, 2001, pp. 3991--3996.

\bibitem{salapaka1999multi}
M.~V. Salapaka and M.~Khammash, ``Multi-objective mimo optimal control design
  without zero interpolation,'' in \emph{Robustness in identification and
  control}.\hskip 1em plus 0.5em minus 0.4em\relax Springer, 1999, pp.
  306--317.

\bibitem{vandenberghe1996semidefinite}
L.~Vandenberghe and S.~Boyd, ``Semidefinite programming,'' \emph{SIAM review},
  vol.~38, no.~1, pp. 49--95, 1996.

\bibitem{vinter1993convex}
R.~Vinter, ``Convex duality and nonlinear optimal control,'' \emph{SIAM journal
  on control and optimization}, vol.~31, no.~2, pp. 518--538, 1993.

\bibitem{rantzer2001dual}
A.~Rantzer, ``A dual to lyapunov's stability theorem,'' \emph{Systems \&
  Control Letters}, vol.~42, no.~3, pp. 161--168, 2001.

\bibitem{mauroy2020koopman}
A.~Mauroy, A.~Sootla, and I.~Mezi{\'c}, ``Koopman framework for global
  stability analysis,'' \emph{The Koopman Operator in Systems and Control:
  Concepts, Methodologies, and Applications}, pp. 35--58, 2020.

\bibitem{angeli2003monotone}
D.~Angeli and E.~D. Sontag, ``Monotone control systems,'' \emph{IEEE
  Transactions on automatic control}, vol.~48, no.~10, pp. 1684--1698, 2003.

\bibitem{rantzer2018tutorial}
A.~Rantzer and M.~E. Valcher, ``A tutorial on positive systems and large scale
  control,'' in \emph{2018 IEEE Conference on Decision and Control
  (CDC)}.\hskip 1em plus 0.5em minus 0.4em\relax IEEE, 2018, pp. 3686--3697.

\bibitem{rantzer2022explicit}
A.~Rantzer, ``Explicit solution to bellman equation for positive systems with
  linear cost,'' in \emph{2022 IEEE 61st Conference on Decision and Control
  (CDC)}.\hskip 1em plus 0.5em minus 0.4em\relax IEEE, 2022, pp. 6154--6155.

\bibitem{bamieh2020input}
B.~Bamieh and M.~Filo, ``An input--output approach to structured stochastic
  uncertainty,'' \emph{IEEE Transactions on Automatic Control}, vol.~65,
  no.~12, pp. 5012--5027, 2020.

\bibitem{seiler2014stability}
P.~Seiler, ``Stability analysis with dissipation inequalities and integral
  quadratic constraints,'' \emph{IEEE Transactions on Automatic Control},
  vol.~60, no.~6, pp. 1704--1709, 2014.

\bibitem{shapiro2001duality}
A.~Shapiro, ``On duality theory of conic linear problems,'' in
  \emph{Semi-infinite programming}.\hskip 1em plus 0.5em minus 0.4em\relax
  Springer, 2001, pp. 135--165.

\end{thebibliography}

\end{document}